\newcommand*{\V}{\mathcal{V}}
\newcommand*{\vtimes}{\otimes}
\newcommand*{\vminel}{\bot}
\newcommand*{\vmaxel}{\top}
\newcommand*{\vunit}{k}
\newcommand*{\vJoin}{\bigsqcup}
\newcommand*{\vMeet}{\bigsqcap}
\newcommand*{\vle}{\sqsubseteq}
\newcommand*{\vge}{\sqsupseteq}
\newcommand*{\twoQ}{{\mathsf{2}_{\land}}}
\newcommand*{\zeroinfQ}{{[0,\infty]_{+}}}
\newcommand*{\id}{\mathsf{id}}
\newcommand*{\supp}{\mathsf{supp}}
\newcommand*{\Cpl}{\mathsf{Cpl}}
\newcommand*{\Kant}{\mathsf{Kant}}
\newcommand*{\Ham}{\mathsf{Ham}}
\newcommand*{\Lev}{\mathsf{Lev}}
\newcommand*{\List}{\mathsf{List}}
\newcommand*{\Wc}{W^{=}}
\newcommand*{\bd}{\mathsf{bd}}
\newcommand*{\Set}{\mathsf{Set}}
\newcommand*{\PreOrd}{\mathsf{PreOrd}}
\newcommand*{\HMet}{\mathsf{HMet}}
\newcommand*{\op}{\mathsf{op}}
\newcommand*{\VRel}{\V\text{-}\mathsf{Rel}}
\newcommand*{\VCat}{\V\text{-}\mathsf{Cat}}
\newcommand*{\VDist}{\V\text{-}\mathsf{Dist}}
\newcommand*{\CA}{\mathcal{A}}
\newcommand*{\dfun}{\mathcal{D}_\omega}
\newcommand*{\pfun}{\mathcal{P}}
\newcommand*{\Pow}{\pfun}
\newcommand*{\Bag}{\mathcal{B}}
\newcommand*{\Diamonda}[1]{\langle{#1}\rangle}
\newcommand*{\frel}[2]{#1 \mathbin{\ooalign{$\rightarrow$\cr$\hspace{0.12ex}+$\cr}} #2}
\newcommand*{\rev}[1]{{#1}^\circ}
\newcommand*{\gr}[1]{{#1}_\circ}
\newcommand*{\diag}[1]{\mathsf{dia}(#1)}
\newcommand*{\lambdadiag}{\lambda^\mathsf{dia}}
\newcommand*{\Real}{\mathbb{R}}
\newcommand*{\Nat}{\mathbb{N}}
\newcommand*{\cpl}[2]{\mathsf{Cpl} (#1,#2)}
\newcommand*{\expect}[1]{\mathbb{E}_{#1}}
\newcommand*{\A}{\mathcal{A}}
\newcommand*{\leready}{\le_\mathsf{rd}}
\newcommand*{\geready}{\ge_\mathsf{rd}}
\newcommand*{\lecomplete}{\le_\mathsf{cpl}}
\newcommand*{\gecomplete}{\ge_\mathsf{cpl}}
\newcommand*{\by}[1]{(\text{#1})}
\newcommand*{\dtv}{d_\mathsf{TV}}
\renewcommand{\epsilon}{\varepsilon}
\renewcommand{\phi}{\varphi}
\newcommand*{\revv}[1]{#1^\bullet}
\newcommand*{\grv}[1]{#1_\bullet}
\DeclareSymbolFont{symbolsA}{U}{txsya}{m}{n}
\DeclareSymbolFont{symbolsC}{U}{txsyc}{m}{n}
\DeclareMathSymbol{\multimapdot}{\mathrel}{symbolsC}{20}
\DeclareMathSymbol{\multimapdotinv}{\mathrel}{symbolsC}{21}
\newcommand{\dashedrightarrow}{\mathrel{\mathpalette\dashedrightarrow@\relax}}
\newcommand{\dashedrightarrow@}[2]{%
  \begingroup
  \settowidth{\dimen\z@}{$\m@th#1\rightarrow$}%
  \settoheight{\dimen\tw@}{$\m@th#1\rightarrow$}%
  \sbox\z@{%
    \makebox[\dimen\z@][s]{%
      \clipbox{0 0 {0.4\width} 0}{\resizebox{\dimen\z@}{\height}{$\m@th#1\dashrightarrow$}}%
      \hss
      \clipbox{{0.69\width} {-0.1\height} 0 {-\height}}{$\m@th#1\rightarrow$}%
    }%
  }%
  \ht\z@=\dimen\tw@ \dp\z@=\z@
  \box\z@
  \endgroup
}
\newtheorem{thm}{Theorem}[section]
\newtheorem{lem}[thm]{Lemma}
\newtheorem{cor}[thm]{Corollary}
\newtheorem{prop}[thm]{Proposition}
\theoremstyle{definition}
\newtheorem{defn}[thm]{Definition}
\newtheorem{expl}[thm]{Example}
\newtheorem{rem}[thm]{Remark}
\newtheorem{conv}[thm]{Convention}
\begin{document}

\title{Behavioural Conformances based on Lax Couplings
}

\author{
  \IEEEauthorblockN{Paul Wild}
  \IEEEauthorblockA{Friedrich-Alexander-Universität Erlangen-Nürnberg\\
  Germany \\
  paul.wild@fau.de}
\and
  \IEEEauthorblockN{Lutz Schröder}
  \IEEEauthorblockA{Friedrich-Alexander-Universität Erlangen-Nürnberg\\
  Germany \\
  lutz.schroeder@fau.de}
}

\maketitle

\begin{abstract}
  Behavioural \emph{conformances} -- e.g.\ behavioural equivalences,
  distances, preorders -- on a wide range of system types
  (non-deterministic, probabilistic, weighted etc.)  can be dealt with
  uniformly in the paradigm of universal coalgebra. One of the most
  commonly used constructions for defining behavioural distances on
  coalgebras arises as a generalization of the well-known Wasserstein
  metric.  In this construction, couplings of probability
  distributions are replaced with couplings of more general objects,
  depending on the functor describing the system type.  In many cases,
  however, the set of couplings of two functor elements is empty,
  which causes such elements to have infinite distance even in
  situations where this is not desirable.  We propose an approach to
  defining behavioural distances and preorders based on a more liberal
  notion of coupling where the coupled elements are matched laxly
  rather than on-the-nose. We thereby substantially broaden the range
  of behavioural conformances expressible in terms of couplings,
  covering, e.g., refinement of modal transition systems and
  behavioural distance on metric labelled Markov chains.
\end{abstract}


\section{Introduction}
\label{sec:intro}

\noindent In the analysis of state-based systems, different methods of
modelling and comparing the behaviour of states may be appropriate
depending on the system type.  For systems featuring discrete or
qualitative data, such as basic labelled transition systems, one
typically considers some form of \emph{bisimilarity} which renders
states as either equivalent in behaviour or not.  For systems with
continuous or quantitative data, such as Markov chains or metric
transition systems, one instead often works with \emph{behavioural
  distances}, which give a quantitative account of how equivalent or
in equivalent the behaviour of states is
(e.g.~\cite{GiacaloneEA90,BreugelWorrell05,AlfaroEA09}).  The latter
have been the subject of a considerable amount of research in recent
years, a significant part of which deals with the problem of defining
and characterizing behavioural distances at the level of universal
coalgebra~\cite{Rutten00}, a general framework in which state-based
systems of various types can be uniformly treated by modelling system
types as functors on the category of sets.

A common problem in the coalgebraic treatment of behavioural distances is that of lifting the functor encapsulating the system type from the category of sets to the category of (pseudo-)metric spaces.
Both in concrete examples and in the coalgebraic treatment~\cite{BBKK18}, there are essentially two known constructions of pseudometric liftings, which respectively generalize the two sides of the well-known Kantorovich-Rubinstein duality for distances of probability distributions~cite{Kantorovich39}.
The first of these generalizes the description of the distance of two probability distributions on a pseudometric space via differences between expected values of non-expansive real-valued predicates, and has been termed the \emph{Kantorovich lifting}~\cite{BBKK18} or the \emph{codensity lifting}~\cite{KomoridaEA21}. This lifting instantiates  also to a number of other constructions on pseudometric spaces, such as the \emph{Hausdorff distance} on the powerset, and in fact it has been shown that under mild conditions, \emph{every} pseudometric lifting can be seen as an instance of the codensity lifting~\cite{KantorovichFunctors}.

The story is much less clear for the second construction, called the \emph{Wasserstein lifting} or simply the \emph{coupling-based lifting}.
In this construction, the distance of two elements of a set $FX$ (where~$F$ is a set functor and~$X$ is the underlying set of a pseudometric space) is defined in terms of the set of their \emph{couplings}, which are elements of $F(X\times X)$ that map to the original elements when $F$ is applied to the left and right projections.
This generalizes the idea of probabilistic couplings which have two given distributions as their \emph{marginal distributions}.
One major issue with this generalization is that for many functors, such couplings may in general fail to exist, which causes the distances between the corresponding elements to be infinite.
For instance, already when generalizing from probability measures to measures with arbitrary total mass, couplings only exist for measures with equal total mass.
Similarly, two lists  admit a coupling only if they have the same length.
In both of these cases, simply declaring the distance between differently-sized objects to be infinite does not tell the full story, however; after all, it makes a difference whether two lists only differ by one element or by thousands of them, and such differences should be reflected by the behavioural distance.

In the present work, we propose a variant of the coupling-based lifting that maintains the same categorical level of generality while addressing the issue of potentially non-existent couplings by working with couplings whose projections need not match the given elements exactly, but instead take into account any mismatches that occur in this way in the computation of the distance value.
Technically, this is achieved by working with functors~$F$ from the category of sets to the category of (pseudo-)metric spaces, where the pseudometric on $FX$ prescribes how much a mismatch in the coupling should contribute to the overall distance value.
In fact, similar reasoning applies when one considers preorders instead of pseudometrics, where now the projections of the coupling need only satisfy inequalities with the given elements rather than strict equalities.
This kind of setup makes it possible to capture notions of \emph{similarity}.
In order to accommodate both of these settings in a uniform manner we parametrize our setup over the choice of a \emph{quantale}~$\V$, and then work at the level of $\V$-categories, thus covering general \emph{behavioural conformances}. For instance, for~$\V=[0,\infty]$, $\V$-categories are (extended) pseudo-metrics, or more generally asymmetric distances, while for $\V=2$, $\V$-categories are equivalence relations or preorders, respectively (depending on whether or not symmetry is imposed on $\V$-categories).

We center the technical development around \emph{$\V$-valued lax
  extensions}~\cite{Hofmann07}, which serve as a tool to construct
functor liftings and behavioural conformances systematically.
Additionally, lax extensions provide native support for conformances
and in fact notions of (bi-)simulation between two potentially
different coalgebras.  As lax extensions usually apply to set
(endo)functors, we need to adapt the usual notion slightly to
encompass non-endofunctors of the type discussed earlier.  Following
the lines of previous work on coalgebraic
similarity~\cite{BilkovaEA13,KurzVelebil16}, we are thus led to study
\emph{$\V$-modules}~\cite{Street80}, also known as $\V$-valued \emph{
  distributors}~\cite{Benabou00}.

We then proceed to apply the general construction to a number of
different examples.  These include the aforementioned cases of
measures of varying total mass and strings of varying length, but also
behavioural distances for metric transition systems and
metric-labelled Markov chains, as well as more general weighted
transition systems.  On the two-valued side, we cover various notions
of simulation, including ready and complete simulation of labelled
transition systems and \emph{refinement} of \emph{label-structured
  modal transition systems}~\cite{BauerEA12}.

\subsection{Related Work}

\noindent We have already mentioned general work on coalgebraic behavioural distances~\cite{BBKK14,BBKK18}; in concrete shape, behavioural distances go back to work on probabilistic systems~\cite{LarsenSkou91,GiacaloneEA90,DesharnaisEA04,BreugelWorrell05}.
While these original results focus on liftings of functors to metric or pseudometric spaces, more recently there has been a focus on generalizing beyond pseudometrics  by working with quantales and quantitative relations~\cite{Gavazzo18,FuzzyLaxHemi,QuantifiedVB}, fibrations~\cite{KomoridaEA21,BonchiEA23}, and functors native to categories of spaces other than sets~\cite{QuantalicHM,KantorovichFunctors}.

The coalgebraic treatment of bisimulations and bisimilarity through liftings to the category of relations goes back to Barr~\cite{Barr70}, and has inspired a large body of work on relation liftings for both bisimulations and simulations:
The idea of modelling simulations using functors from sets to preorders was already proposed by Hughes and Jacobs~\cite{HughesJacobs04}, and was subsequently expanded by others~\cite{Hermida11,BilkovaEA13,KurzVelebil16}; some of the later work already features categories of $\V$-distributors.
In addition to Barr's relation lifting, which is based on representing relations as spans, other types of relation liftings have been proposed under different names (and with some variations in the definitions), most commonly \emph{relators}~\cite{Thijs96,Levy11} and \emph{lax extensions}~\cite{HughesJacobs04,MartiVenema15}.

The quantalic Wasserstein lax extension in fact predates its pseudometric counterpart, already appearing in work on topological theories~\cite{Hofmann07}.
More recently, it has been applied to up-to techniques~\cite{BonchiEA23}.
The Kantorovich lax extension is comparatively younger, first appearing in work that shows that under certain conditions every lax extension can be cast as an instance of the Kantorovich construction~\cite{FuzzyLax,FuzzyLaxHemi}.

Both of the original formulations of the metric for probability measures are due to Kantorovich~\cite{Kantorovich39} and his work forms the root of the field of \emph{transportation theory}; Villani~\cite{Villani09} provides an overview.
The problem of assigning distances to measures of arbitrary mass has also come up quite early~\cite{Dantzig51}, under the name of \emph{inhomogeneous} or \emph{unbalanced optimal transport}.
Various solutions to this problem have been suggested since then, many of them quite recent thanks to its applicability in the domain of artificial intelligence~\cite{ChizatEA18b,LieroEA18}.
Another recent application of unbalanced optimal transport appears in work on bisimulations for dynamical systems~\cite{BacciBacciEA24}.

\subsection{Organization}

\noindent We recall some concepts from universal coalgebra and the theory of quantales in Section~\ref{sec:prelim}.
In particular, this includes the relevant categories of quantale-valued relations, categories and distributors.

In Section~\ref{sec:functors}, we discuss some motivating examples of functors from sets into preorders or (hemi-)metric spaces and the notions of conformance that can be derived from them.
We continue discussing these examples throughout the remainder of the paper.

In Section~\ref{sec:lax}, we introduce a notion of \emph{distributorial} lax extension that applies to functors of this type and show under which conditions the Kantorovich (codensity) construction is an instance of this notion.

In Section~\ref{sec:exact}, we recall the notion of an exact square and identify preservation of such squares as the correct condition to place on a functor in order for it to admit a Wasserstein extension.

In Section~\ref{sec:wasserstein}, the distributorial Wasserstein extension is introduced; we fully work out the conditions under which it is a distributorial lax extension, give some examples and compare it with its non-distributorial counterpart.

In Section~\ref{sec:examples}, we conclude the discussion of the running examples from Section~\ref{sec:functors} and complement them with additional examples, including the aforementioned notions of simulation and a distance for measures of varying mass.

In Section~\ref{sec:duality}, we discuss some examples of distributorial lax extensions which admit representations as both Kantorovich and Wasserstein extensions, in analogy with the Kantorovich/Rubinstein duality for metrics between probability distributions.

Section~\ref{sec:conclusion}, finally, contains some concluding remarks and discusses some directions for future research.

Proofs are mostly omitted or only sketched.
Full proofs can be found in the appendix.

\section{Preliminaries}
\label{sec:prelim}

\subsection{Coalgebra}

\noindent
Universal coalgebra~\cite{Rutten00} enables the unified study of a large range of types of state-based systems, including labelled transition systems, many classes of automata, and probabilistic systems such as Markov chains.
This uniformity is achieved by casting such systems as \emph{(functor) coalgebras}, which are maps of type $\alpha\colon X\to FX$ where $X$ is the set of states of the system and the functor $F\colon\Set\to\Set$ describes the branching type of the system.

As an example, a \emph{labelled transition system (LTS)} with transition relation ${\rightarrow}\subseteq X\times\A\times X$ can be seen as coalgebra
by representing $\rightarrow$ as a map $\alpha\colon X\to\Pow(\A\times X)$, where $\Pow$ denotes powerset.
The associated functor $\Pow(\A\times(-))$ maps each set $X$ to the set of subsets of $\A\times X$, so that $\alpha$ associates each state with the set of pairs of label and successor possible from this state.
On functions, the functor $\Pow(\A\times(-))$ acts by image: $\Pow(\A\times f)(U) = \{(a,f(x)) \mid (a,x)\in U\}$.

Similarly, a \emph{Markov chain} may be rendered as a coalgebra $\alpha\colon X\to\dfun X$ for the finite distribution functor $\dfun$, which maps each set $X$ to the set $\dfun X$ of finitely supported probability distributions on $X$, i.e.~to functions $\mu\colon X\to[0,1]$ such that $\supp(\mu) \coloneqq \{x\in X \mid \mu(x) > 0\}$ is finite and $\sum_{x\in X} \mu(x) = 1$.
For convenience, we write $\mu(U) = \sum_{x\in U} \mu(x)$.
On functions, the functor $\dfun$ acts by direct image: $\dfun f(\mu)(y) = \mu(f^{-1}[y])$, where $f^{-1}[y]$ is the preimage of $y$ under $f$.

\subsection{Quantales}

\noindent
In order to allow for a uniform treatment of both (bi)si\-mu\-la\-ti\-ons and behavioural distances, we work with quantales, which combine arithmetic and ordered structure.
Explicitly, a \emph{(commutative unital) quantale} is a tuple $(\V,\vle,\vtimes,\vunit)$, where $(\V,\vle)$ is a complete lattice and $(\V,\vtimes,\vunit)$ is a commutative monoid such that multiplication is \emph{join-continuous}:
\begin{equation*}
  \textstyle a \vtimes \vJoin_{i\in I} b_i = \vJoin_{i\in I} a \vtimes b_i.
\end{equation*}
Here, joins (or least upper bounds) are denoted by $\vJoin$, and similarly meets (or greatest lower bounds) are denoted by $\vMeet$.
The least and greatest element of $\V$ are denoted by $\vminel$ and $\vmaxel$, respectively.
Join-continuity of $a\vtimes(-)$ entails that it has a right adjoint $\hom(a,-)$, that is, $a\vtimes b \vle c \iff b\vle\hom(a,c)$, called the \emph{internal hom} of the quantale.

We consider two main examples of quantale in this paper.
The first is the \emph{Boolean quantale} (or \emph{two-valued quantale}) $\twoQ$, where $\V = \{0,1\}$, with $0 \vle 1$ and the monoid operation given by meet $\land$.
One easily verifies that $\hom(a,b) = a \rightarrow b$ is Boolean implication.

The second main example of interest is the quantale $\zeroinfQ$, where $\V = [0,\infty] = \mathbb{R}^{+} \cup \{\infty\}$ is the set of \emph{extended non-negative real numbers} (``extended'' because it includes~$\infty$).
The order of this quantale is the \emph{inverted} order of the real numbers, that is ${\vle}={\ge}$ and $0=\vmaxel$, $\infty=\vminel$.
The monoid operation is extended addition of real numbers, with $\infty + x = x + \infty = \infty$.
The internal hom is given by \emph{truncated subtraction}: $\hom(a,b) = b\ominus a = \max(b-a, 0)$.

\begin{rem}\label{rem:other-quantale}
  Another related quantale that is often considered is based on the unit interval $[0,1]$, with truncated addition $a \oplus b = \min(a+b,1)$ as the monoid operation.
\end{rem}

\noindent
For a fixed quantale $\V$, a \emph{$\V$-valued relation} (or simply \emph{$\V$-relation}) between two sets $X$ and $Y$ is a map $r\colon X\times Y\to\V$.
In this case we also write $r\colon\frel{X}{Y}$.
Clearly, $\twoQ$-relations are just ordinary relations, while we sometimes refer to $\zeroinfQ$-relations as \emph{fuzzy relations}.

There are several important constructions on $\V$-relations.
Given $r\colon\frel{X}{Y}$ and $s\colon\frel{Y}{Z}$ and $f\colon X\to Y$,
\begin{itemize}
  \item the \emph{(relational) composite} $s\cdot r\colon\frel{X}{Z}$ is given by
    \begin{equation*}
      (s\cdot r)(x,z) = \textstyle\vJoin_{y\in Y} r(x,y) \vtimes s(y,z);
    \end{equation*}
  \item the \emph{converse} $\rev{r}\colon\frel{Y}{X}$ is given by $\rev{r}(y,x) = r(x,y)$;
  \item the \emph{graph} $\gr{f}\colon\frel{X}{Y}$ is given by $\gr{f}(x,y) = \vunit$ if $f(x) = y$ and $\gr{f}(x,y) = \vminel$ otherwise.
\end{itemize}

\begin{conv}
  For convenience, we often do not notationally distinguish between a function $f\colon X\to Y$ and its graph $\gr{f}\colon\frel{X}{Y}$.
\end{conv}

The graph of an identity function $\id_X\colon X\to X$ is also denoted by $\Delta_X\colon\frel{X}{X}$, and it is easily checked that the~$\Delta_X$ are the neutral elements for composition of $\V$-relations.
In fact, $\V$-relations form a category $\VRel$, which has sets as objects and $\V$-relations between them as arrows.
This category is order-enriched, because $\V$-relations between two given sets can be compared pointwise:
\begin{equation*}
  r \vle r' \iff \forall x, y.\; r(x,y) \vle r'(x,y)
\end{equation*}

\noindent
A $\V$-relation $d_X\colon \frel{X}{X}$ is called a \emph{$\V$-category} if $\Delta_X\vle d_X$ and $d_X\cdot d_X\vle d_X$.
A $\V$-category is \emph{symmetric} if $\rev{d_X} \vle d_X$, and \emph{separated} if $\vunit \vle d_X(x,x')$ implies $x = x'$ for all $x,x'\in X$.

In the Boolean case, we note that $\twoQ$-categories correspond to \emph{preorders} and symmetric $\twoQ$-categories to \emph{equivalence relations}.
In the real-valued case, the two inequalities of $\zeroinfQ$-categories correspond to reflexivity and the triangle inequality respectively, so that $\zeroinfQ$-categories are exactly \emph{hemimetrics}, also known as \emph{generalized metric spaces}~\cite{Lawvere73}.
Similarly, a symmetric $\zeroinfQ$-category is a \emph{pseudometric} and a symmetric and separated $\zeroinfQ$-category are \emph{metric}.

The corresponding notion of structure-preserving map between $\V$-categories is that of a $\V$-functor.
Given $\V$-categories $(X,d_X)$ and $(Y,d_Y)$, a function $f\colon X\to Y$ is  a \emph{$\V$-functor} if for all $x,x'\in X$ we have $d_X(x,x') \vle d_Y(f(x),f(x'))$.
This condition can equivalently expressed in a pointfree manner as $f\cdot d_X \vle d_Y\cdot f$.
Together, $\V$-categories and $\V$-functors form a category $\VCat$.
In the special cases of $\twoQ$ and $\zeroinfQ$, we also sometimes denote this category by $\PreOrd$ or $\HMet$, respectively.

One important example of a $\V$-category is given by $\V$ itself:
One easily verifies that $\hom\colon\frel{\V}{\V}$ is a $\V$-category.

The final category of interest arises by considering $\V$-relations between $\V$-categories that respect the structures of the given $\V$-categories.
This can be expressed in multiple equivalent ways:

\begin{lem}\label{lem:v-dist}
  Let $(X,d_X)$ and $(Y,d_Y)$ be $\V$-categories and $r\colon\frel{X}{Y}$.
  Then the following are equivalent:
  \begin{enumerate}
    \item We have $r\cdot d_X \vle r$ and $d_Y\cdot r \vle r$.
    \item We have $r\cdot d_X = r$ and $d_Y\cdot r = r$.
    \item The map $r\colon (X,\rev{d_X})\times(Y,d_Y) \to (\V,\hom)$ is a $\V$-functor.
    \item $(X+Y,s)$ is a $\V$-category, where $s(x,x') = d_X(x,x')$, $s(x,y) = r(x,y)$, $s(y,x) = \vminel$, $s(y,y') = d_Y(y,y')$ for all $x,x'\in X$ and $y,y'\in Y$.
  \end{enumerate}
\end{lem}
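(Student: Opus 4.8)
The plan is to treat condition~(1) as the hub and prove $(1)\Leftrightarrow(2)$, $(1)\Leftrightarrow(3)$ and $(1)\Leftrightarrow(4)$ separately, reducing each to pointwise manipulations of $\V$-relations. The key preliminary observation, obtained by computing composites pointwise, is that $r\cdot d_X\vle r$ is equivalent to the elementwise family $d_X(x,x')\vtimes r(x',y)\vle r(x,y)$ for all $x,x',y$, and symmetrically $d_Y\cdot r\vle r$ is equivalent to $r(x,y')\vtimes d_Y(y',y)\vle r(x,y)$ for all $x,y',y$. With this, $(2)\Rightarrow(1)$ is immediate, while for the converse I use reflexivity: since $\Delta_X\vle d_X$ and $\Delta_X$ is the unit for composition, monotonicity of relational composition gives $r=r\cdot\Delta_X\vle r\cdot d_X$, which together with the hypothesis $r\cdot d_X\vle r$ yields $r\cdot d_X=r$; the equation $d_Y\cdot r=r$ follows symmetrically from $\Delta_Y\vle d_Y$.

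For $(1)\Leftrightarrow(3)$ I unfold the $\V$-functor condition. The product $\V$-category $(X,\rev{d_X})\times(Y,d_Y)$ carries the (tensor) distance $d_X(x',x)\vtimes d_Y(y,y')$ from $(x,y)$ to $(x',y')$, so $r$ being a $\V$-functor into $(\V,\hom)$ means $d_X(x',x)\vtimes d_Y(y,y')\vle\hom(r(x,y),r(x',y'))$ for all arguments; by the hom-adjunction this is exactly $r(x,y)\vtimes d_X(x',x)\vtimes d_Y(y,y')\vle r(x',y')$. Specializing $y'=y$ and using $d_Y(y,y)\vge\vunit$ recovers the first elementwise family, and specializing $x'=x$ with $d_X(x,x)\vge\vunit$ recovers the second; conversely, chaining the two families (first moving the $X$-coordinate, then the $Y$-coordinate) reproduces the combined inequality.

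For $(1)\Leftrightarrow(4)$ I unfold the two $\V$-category axioms for $(X+Y,s)$. Reflexivity $\Delta_{X+Y}\vle s$ holds automatically because $s$ restricts to $d_X$ and $d_Y$ on the diagonal blocks. For transitivity $s\cdot s\vle s$ I case-split on the summand memberships of the two endpoints and of the intermediate element: the $X,X$ and $Y,Y$ blocks reduce to $d_X\cdot d_X\vle d_X$ and $d_Y\cdot d_Y\vle d_Y$ (all cross terms vanish because $s(y,x)=\vminel$), the $Y,X$ block is trivially $\vminel\vle\vminel$, and the $X,Y$ block computes to $(s\cdot s)(x,y')=(r\cdot d_X)(x,y')\vjoin(d_Y\cdot r)(x,y')$, whose being $\vle r(x,y')$ for all $x,y'$ is precisely condition~(1).

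The only genuinely delicate point is the product structure in~(3): the equivalence uses the tensor ($\vtimes$-)product of $\V$-categories rather than the Cartesian (meet-)product, and it is the additivity of the two mismatches across the two coordinates -- visible in the need to spend one $\vtimes$-factor per coordinate when chaining the two elementwise families -- that makes both directions of $(1)\Leftrightarrow(3)$ go through simultaneously. Everything else is routine bookkeeping once condition~(1) has been rewritten in its elementwise form.
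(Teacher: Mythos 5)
Your proof is correct. The paper itself offers no proof of this lemma (it is stated as a standard characterization of $\V$-distributors and not revisited in the appendix), so there is no official argument to compare against; your pointwise unfolding of all four conditions around the hub~(1) is the standard route and all the case analyses check out. The one genuinely delicate point, which you correctly identify and resolve, is that the equivalence $(1)\Leftrightarrow(3)$ requires reading $(X,\rev{d_X})\times(Y,d_Y)$ as the \emph{tensor} product of $\V$-categories, with structure $d_X(x',x)\vtimes d_Y(y,y')$, rather than the Cartesian (meet) product: with the meet the chaining step in the direction $(1)\Rightarrow(3)$ would fail for a general quantale (and the specialization step would need $\vunit=\vmaxel$), whereas with the tensor the hom-adjunction turns $\V$-functoriality into exactly the conjunction of the two elementwise families. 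Since the paper never defines the product on $\V$-categories, flagging this reading explicitly, as you do, is the right call.
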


\noindent
We say that $r$ is a \emph{$\V$-distributor} if it satisfies the equivalent conditions of Lemma~\ref{lem:v-dist}.
Other names for them include \emph{$\V$-profunctor}, \emph{$\V$-bimodule} or \emph{$\V$-module}.
$\V$-distributors form a category $\VDist$, in which the objects are $\V$-categories, the arrows are $\V$-distributors between them, and identity arrows are the $\V$-category structures.

\section{Functors with structure}\label{sec:functors}

\noindent As indicated in the introduction, our construction takes
functors $F\colon\Set\to\VCat$ as input.  We sometimes write
$|F|\colon\Set\to\Set$ for the functor that arises by forgetting the
$\V$-categorical structure, i.e.\ the composite of~$F$ with the
forgetful functor $\VCat\to\Set$.  Similarly, we may use
$|FX|$ to denote the underlying set of the $\V$-category
$FX$.  Depending on the quantale, we use $\le_{FX}$ or
$d_{FX}$ to denote the associated
$\V$-relation, that is $\le_{FX},d_{FX}\colon |FX|\times|FX|\to\V$.

The motivation behind using functors of this type is that, as usual, $FX$ describes the space of possible successor structures of states in a coalgebra, but that this space already has some $\V$-categorical structure defined on it.
If we then lift such a functor $F$ to an endofunctor $\overline{F}\colon\VCat\to\VCat$, the structure on $\overline{F} (X,d_X)$ combines the structure $d_X$ on $X$ with that on $FX$.
Similarly, we can define a notion of simulation between coalgebras by combining the data from the relation between states with the data about their successor structures.
We discuss some examples to illustrate this idea:

\begin{expl}\label{expl:simulations}
  Fix a set $\A$ of labels and let $|F| = \Pow(\A\times(-))$ be the
  LTS functor (cf.\ Section~\ref{sec:prelim}).  We put
  $\le_{FX} \coloneqq {\subseteq}$.  Coalgebraically, the usual notion
  of bisimulation is given via the \emph{Egli-Milner construction}:
  For $r\colon\frel{X}{Y}$ define $Lr\colon\frel{FX}{FY}$ to be the
  relation where $U\,Lr\,V$ iff for all $(a,x)\in U$ there exists
  $(a,y)\in V$ such that $x\,r\,y$ and for all $(a,y)\in V$ there
  exists $(a,x)\in U$ such that $x\,r\,y$.  Then, given two LTS
  $\alpha\colon X\to FX$ and $\beta\colon Y\to FY$, a relation
  $r\colon\frel{X}{Y}$ is a \emph{bisimulation} iff
  $r \vle \rev{\beta}\cdot Lr\cdot\alpha$, equivalently
  $\beta\cdot r\vle Lr\cdot\alpha$.

  To obtain a notion of \emph{simulation}, we simply replace $Lr$ with
  \begin{equation*}
    L^\subseteq r \coloneqq {\le}_{FY}\cdot Lr\cdot {\le}_{FX},
  \end{equation*}
  which has the same effect as simply omitting the second clause from
  the definition of $Lr$ above (see the appendix).
\end{expl}

\begin{expl}\label{expl:lists}
  Next, we consider the \emph{list functor} $\List$, which maps each set to the set of finite sequences over it:
  \begin{math}
    \List X = \{(x_1,\dots,x_n) \mid n < \omega, x_i\in X \text{ for each } i\}.
  \end{math}
  Given a pseudometric $d_X$ on a set $X$, where the distance $d_X(x,x')$ is thought of as the cost of transforming the symbol $x$ into the symbol $x'$,
  we obtain a generalized \emph{Hamming distance} $\Ham(d_X)$ on $\List X$, where lists of different length are assigned distance $\infty$ and the distance of lists of equal length is the element-wise sum of distances.
  
  We can obtain non-infinite distance values for lists of different lengths by extending $\List$ to a functor $F\colon\Set\to\HMet$ such that $|F| = \List$.
  Explicitly, we put $d_{FX}(s,t) = \left||t|-|s|\right|$ if one of $s$ and $t$ is a subsequence of the other, and $d_{FX}(s,t) = \infty$ otherwise.
  We now define $\Lev(d_X) \coloneqq d_{FX}\cdot\Ham(d_X)\cdot d_{FX}$ and obtain a generalized form of \emph{Levenshtein distance} (or \emph{edit distance}), where the distance between two lists is defined to be the minimal total cost of changing the first list to the second through character insertions/deletions (modelled by $d_{FX}$) and character substitutions (modelled by $\Ham(d_X)$).
\end{expl}

\begin{expl}\label{expl:mlmc}
  Fix a metric space $(\A,d_\A)$ of labels and consider the functor $G = \dfun(\A\times(-))$.
  Coalgebras for~$G$ are \emph{metric-labelled Markov chains}.
  If $(X,d_X)$ is a metric space, then the distance of two distributions in $GX$ can be computed using the Wasserstein lifting applied to the product metric $d_\A\times d_X$.
  This construction cannot be achieved using exact couplings, however, as elements of $GX$ only admit couplings if they have the same probability mass for each individual label in $\A$.
  We will later see how one can use lax couplings to represent this construction, using the functor $F\colon\Set\to\HMet$ with $|F| = G$ and $d_{FX} = \Kant(d_\A\times\Delta_X)$ that partially applies the Wasserstein lifting to the metric on the labels while treating the set $X$ as a discrete metric space.
\end{expl}

We will use these functors as running examples throughout the paper.
Further examples are discussed in Section~\ref{sec:examples}.

\section{Distributorial lax extensions}
\label{sec:lax}

\noindent The further technical development will be centered around
variants of the notion of lax extension, which we use to define both
functor liftings and notions of (bi-)simulation.  First, we recall the
standard notion~\cite{Thijs96,SchubertSeal08,Levy11,MartiVenema15}:

\begin{defn}[Lax extension]\label{defn:lax}
  Let $F\colon\Set\to\Set$ be a functor.
  A \emph{($\V$-valued) lax extension} of $F$ is an assignment~$L$ from $\V$-valued relations $r\colon\frel{A}{B}$ to $\V$-valued relations $Lr\colon\frel{FA}{FB}$ satisfying the following properties for all $r,r'\colon\frel{A}{B}$, $s\colon\frel{B}{C}$ and all $f\colon A\to B$:
  \begin{description}
    \item[(L1)] $r \vle r' \implies Lr \vle Lr'$
    \item[(L2)] $Ls \cdot Lr \vle L(s \cdot r)$
    \item[(L3)] $Ff \vle Lf$ and $\rev{(Ff)} \vle L(\rev{f})$
  \end{description}
\end{defn}
\noindent 
A classical example of a lax extension in the case $\V=\zeroinfQ$ is
the \emph{Kantorovich/Wasserstein extension}
$\Kant$~\cite{Kantorovich39,BBKK18} of the discrete distribution
functor $\dfun$, whose coupling-based description is given as follows.
For a fuzzy relation $r\colon\frel{X}{Y}$ and distributions
$\mu\in\dfun X$ and $\nu\in\dfun Y$, we have
$\Kant(r)(\mu,\nu) = \inf_\rho \expect{\rho} r$, where $\expect{}$
takes expected values and the infimum ranges over all probability
distributions~$\rho$ on $X\times Y$ that have~$\mu$ and~$\nu$ as
\emph{marginals}, that is, $\mu(x) = \sum_{y\in Y} \rho(x,y)$ for each
$x\in X$ and $\nu(y) = \sum_{x\in X} \rho(x,y)$ for each $y\in Y$.
Such a~$\rho$ is called a \emph{coupling} of $\mu$ and $\nu$.

A lax extension induces a notion of $\V$-valued behavioural distance
on coalgebras:

\begin{defn}\label{defn:L-simulation}
  Let $L$ be a lax extension of $F\colon\Set\to\Set$, and let $\alpha\colon X\to FX$ and $\beta\colon Y\to FY$ be $F$-coalgebras.
  \begin{enumerate}
    \item\label{item:L-sim} A relation $r\colon\frel{X}{Y}$ is an \emph{$L$-simulation} if  $r \vle \rev{\beta}\cdot Lr\cdot\alpha$.
    \item\label{item:L-bd} \emph{$L$-behavioural distance} $\bd^L_{\alpha,\beta}\colon\frel{X}{Y}$ is the greatest $L$-simulation, that is
      \begin{equation*}
        \bd^L_{\alpha,\beta} \coloneqq \vJoin \{ r\colon\frel{X}{Y} \mid r \vle \rev{\beta}\cdot Lr\cdot\alpha\}.
      \end{equation*}
  \end{enumerate}
\end{defn}
\noindent The second clause above relies on the Knaster-Tarski
fixpoint theorem, together with the fact that the assignment
$r \mapsto \rev{\beta}\cdot Lr\cdot\alpha$ is monotone by the axioms
of lax extensions.

Lax extensions induce functor liftings.  Specifically, if
$F\colon\Set\to\Set$ is a set functor, then given a $\V$-category
$d_X\colon\frel{X}{X}$, the axioms of lax extensions guarantee that
$Ld_X\colon\frel{FX}{FX}$ is also a $\V$-category.  Therefore, we have
a functor $\overline{F}\colon\VCat\to\VCat$ given by
\begin{equation*}
  \overline{F}(X,d_X) = (FX,Ld_X)
  \quad\text{and}\quad
  \overline{F}f = Ff.
\end{equation*}
(It is easily verified that $\overline{F}$ preserves $\V$-functors.)

A suitable generalization of lax extensions to functors as considered in Section~\ref{sec:functors} arises from the following observation:
\begin{lem}\label{lem:distrib-lax}
  Let $F\colon\Set\to\VCat$ be a functor, and let $L\colon\VRel\to\VRel$ be a lax extension of $|F|\colon\Set\to\Set$.
  Then the following are equivalent:
  \begin{description}
    \item[(D)]\label{item:distrib-lax-diag}
      For every set $X$, $d_{FX} \vle L\Delta_X$.
    \item[(D')]\label{item:distrib-lax-distrib} For every
      $r\colon\frel{X}{Y}$, $Lr\colon\frel{FX}{FY}$ is a $\V$-distributor (cf.\
      Section~\ref{sec:prelim}).
  \end{description}
\end{lem}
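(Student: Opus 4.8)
The plan is to decode (D') through Lemma~\ref{lem:v-dist}. Since $F$ lands in $\VCat$, both $(FX,d_{FX})$ and $(FY,d_{FY})$ are $\V$-categories, so by the equivalence of items~(1) and~(2) there, the statement that $Lr\colon\frel{FX}{FY}$ is a $\V$-distributor is equivalent to the two inequalities $Lr\cdot d_{FX}\vle Lr$ and $d_{FY}\cdot Lr\vle Lr$ (the reverse inequalities hold automatically, as $\Delta_{FX}\vle d_{FX}$ and $\Delta_{FY}\vle d_{FY}$). The whole lemma thus reduces to comparing $d_{FX}$ with $L\Delta_X$.

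Before splitting into the two implications, I would record one structural observation that clarifies why the comparison with $L\Delta_X$ is the right one. By (L3) applied to $\id_X$ we have $\Delta_{FX}\vle L\Delta_X$, and by (L2) we have $L\Delta_X\cdot L\Delta_X\vle L(\Delta_X\cdot\Delta_X)=L\Delta_X$, so $L\Delta_X$ is itself a $\V$-category on $FX$. Moreover, for every $r$, (L2) gives $Lr\cdot L\Delta_X\vle L(r\cdot\Delta_X)=Lr$ and $L\Delta_Y\cdot Lr\vle Lr$, while $\Delta_{FX}\vle L\Delta_X$ and monotonicity of composition supply the reverse inequalities; hence $Lr$ is \emph{always} a $\V$-distributor with respect to the $L\Delta$-structures. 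Condition (D') asks for the same with respect to the prescribed $d_F$-structures, so the content of the lemma is precisely how $d_F$ relates to $L\Delta$.

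For (D)$\Rightarrow$(D') I would simply feed the hypothesis $d_{FX}\vle L\Delta_X$ into the computation above: monotonicity of relational composition gives $Lr\cdot d_{FX}\vle Lr\cdot L\Delta_X\vle Lr$ and $d_{FY}\cdot Lr\vle L\Delta_Y\cdot Lr\vle Lr$, which are exactly the two inequalities characterizing a $\V$-distributor. This direction is immediate.

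The converse (D')$\Rightarrow$(D) carries the only genuinely nontrivial step. I would instantiate (D') at $r=\Delta_X$ (taking $Y=X$), so that $L\Delta_X$ is a $\V$-distributor from $(FX,d_{FX})$ to itself; in particular $L\Delta_X\cdot d_{FX}\vle L\Delta_X$. Unfolding $(L\Delta_X\cdot d_{FX})(u,v)=\vJoin_w d_{FX}(u,w)\vtimes L\Delta_X(w,v)$ and retaining only the contribution of the index $w=v$ yields $d_{FX}(u,v)\vtimes L\Delta_X(v,v)\vle L\Delta_X(u,v)$. The remaining task is to discharge the factor $L\Delta_X(v,v)$ by reflexivity: since $\vunit=\Delta_{FX}(v,v)\vle L\Delta_X(v,v)$, monotonicity of $\vtimes$ together with the unit law gives $d_{FX}(u,v)=d_{FX}(u,v)\vtimes\vunit\vle d_{FX}(u,v)\vtimes L\Delta_X(v,v)\vle L\Delta_X(u,v)$, that is, $d_{FX}\vle L\Delta_X$. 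I expect this extraction---picking the right index $w=v$ in the defining join and then exploiting reflexivity of $L\Delta_X$ on the diagonal---to be the main (if still modest) obstacle; everything else is bookkeeping with the lax-extension axioms and the composition formula for $\V$-relations.
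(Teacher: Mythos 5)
Your proof is correct and follows essentially the same route as the paper: (D)$\Rightarrow$(D') via (L2) and monotonicity of composition, and (D')$\Rightarrow$(D) by instantiating at $r=\Delta_X$ and using reflexivity $\Delta_{FX}\vle L\Delta_X$ from (L3) — your pointwise extraction of the index $w=v$ is just the unrolled form of the paper's pointfree chain $d_{FX}=d_{FX}\cdot\Delta_{FX}\vle d_{FX}\cdot L\Delta_X\vle L\Delta_X$.
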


\begin{defn}[Distributorial lax extension]\label{defn:distrib-lax}
  Let $F\colon\Set\to\VCat$ be a functor, and let $L\colon\VRel\to\VRel$ be a lax extension of $|F|\colon\Set\to\Set$.
  We say that $L$ is a \emph{distributorial lax extension} of $F$ if it satisfies the equivalent conditions of Lemma~\ref{lem:distrib-lax}.
\end{defn}

\begin{rem}\label{rem:lax-functor}
  By Lemma~\ref{lem:distrib-lax}, just like a lax extension of a functor $F\colon\Set\to\Set$ is a lax functor $L\colon\VRel\to\VRel$ (in the sense that (L2) says the composition is preserved laxly), a distributorial lax extension of a functor $F\colon\Set\to\VCat$ is a lax functor $L\colon\VRel\to\VDist$.
\end{rem}
Classically, lax extensions often arise as instances of two generic
constructions, the so-called Kantorovich (or codensity) and
Wasserstein (or coupling-based) extensions~\cite{BBKK18,FuzzyLaxHemi},
which are both based on a choice of $\V$-valued predicate
liftings~\cite{CirsteaEA11}.
\begin{defn}
  Let $F\colon\Set\to\Set$ be a functor.
  A (unary) \emph{$\V$-valued predicate lifting} is a natural transformation
  \begin{equation*}
    \lambda\colon\Set(-,\V)\to\Set(F-,\V).
  \end{equation*}
  We say that $\lambda$ is \emph{monotone} if for every $f,g\colon X\to\V$ we have that $f\vle g$ implies $\lambda_X(f) \vle \lambda_X(g)$.
\end{defn}
\noindent As indicated by the name, the components $\lambda_X$ of a
predicate lifting $\lambda$ lift $\V$-valued predicates on $X$ to
$\V$-valued predicates on $FX$.

Under suitable conditions on the functor and the predicate lifting, both the Kantorovich and the Wasserstein construction generalize to distributorial lax extensions.
For the former this is comparatively straightforward.
Recall that for every $\V$-valued relation $r\colon\frel{X}{Y}$ and every $\V$-valued predicate $f\colon X\to\V$, the \emph{relational image} $r[f]\colon Y\to\V$ of $f$ under $r$ is given by
\begin{equation*}
  r[f](b) = \vJoin_{a\in A} f(a)\vtimes r(a,b).
\end{equation*}
\begin{defn}[Kantorovich extension]\label{defn:kantorovich}
  Let $F\colon\Set\to\VCat$ be a functor, and $\lambda$ a monotone predicate lifting for $|F|$.
  The \emph{Kantorovich extension} $K_\lambda$ is given by
  \begin{equation*}
    K_\lambda r(t_1,t_2) = \vMeet_{\mathclap{f\colon X\to\V}} \hom(\lambda_X(f)(t_1), \lambda_Y(r[f])(t_2)).
  \end{equation*}
  Given a set $\Lambda$ of predicate liftings, put $K_\Lambda r \coloneqq \vMeet_{\lambda\in\Lambda} K_\lambda r$.
\end{defn}
\begin{thm}\label{thm:k-lax}
  Let $F\colon\Set\to\VCat$ be a functor and $\lambda$ a monotone predicate lifting for $|F|$.
  Then the Kantorovich extension is a distributorial lax extension iff $\lambda$ is a natural transformation
  \begin{equation*}
    \lambda\colon\Set(-,\V)\to\VCat(F-,(\V,\hom)).
  \end{equation*}
\end{thm}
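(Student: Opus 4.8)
The plan is to peel off the purely lax-extension content---which holds for any monotone predicate lifting regardless of the $\V$-categorical structure on $F$---and reduce the ``distributorial'' part, via Lemma~\ref{lem:distrib-lax}, to a single inequality that is literally the $\V$-functoriality of the components $\lambda_X(f)$. First I would recall that for any monotone predicate lifting $\lambda$ of $|F|$ the Kantorovich extension $K_\lambda$ is a lax extension of $|F|$ in the sense of Definition~\ref{defn:lax}: (L1) is immediate from monotonicity of $\lambda$ and of $\hom$ in its second argument; (L3) follows from naturality of $\lambda$ (with monotonicity for one of the two inequalities); and (L2) follows from the identity $(s\cdot r)[f] = s[r[f]]$ together with the adjunction $a\vtimes(-)\dashv\hom(a,-)$, by chaining $\lambda_X(f)(t_1)\vtimes K_\lambda r(t_1,t_2)\vle\lambda_Y(r[f])(t_2)$ with $\lambda_Y(r[f])(t_2)\vtimes K_\lambda s(t_2,t_3)\vle\lambda_Z((s\cdot r)[f])(t_3)$. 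Since this part is insensitive to $d_{FX}$, Definition~\ref{defn:distrib-lax} reduces the theorem to the equivalence between condition (D) of Lemma~\ref{lem:distrib-lax}, i.e.\ $d_{FX}\vle K_\lambda\Delta_X$ for all $X$, and $\lambda$ corestricting to $\VCat(F-,(\V,\hom))$.

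The crux is to evaluate $K_\lambda\Delta_X$. Because the relational image satisfies $\Delta_X[f] = f$ for every predicate $f\colon X\to\V$, Definition~\ref{defn:kantorovich} collapses to
\[
  K_\lambda\Delta_X(t_1,t_2) = \vMeet_{f\colon X\to\V}\hom(\lambda_X(f)(t_1),\lambda_X(f)(t_2)).
\]
Using that $a\vle\vMeet_i b_i$ holds iff $a\vle b_i$ for all $i$, condition (D) is then equivalent to the statement that for every $f$ and all $t_1,t_2\in FX$ one has $d_{FX}(t_1,t_2)\vle\hom(\lambda_X(f)(t_1),\lambda_X(f)(t_2))$. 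By the definition of the $\V$-category $(\V,\hom)$ and of a $\V$-functor, this is exactly the requirement that each $\lambda_X(f)\colon(FX,d_{FX})\to(\V,\hom)$ be a $\V$-functor. Hence (D) holds for all $X$ iff every component $\lambda_X(f)$ is a $\V$-functor into $(\V,\hom)$.

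Finally I would match this componentwise condition with the naturality statement of the theorem. Since $F$ takes values in $\VCat$, every $Fh$ is a $\V$-functor, so $\VCat(F-,(\V,\hom))$ is a well-defined functor and precomposition with $Fh$ preserves $\V$-functors; the underlying-set inclusions $\VCat(FX,(\V,\hom))\hookrightarrow\Set(FX,\V)$ assemble into a componentwise-injective natural transformation into $\Set(F-,\V)$. A monotone predicate lifting $\lambda$---which is by definition a natural transformation into $\Set(F-,\V)$---therefore factors through this subfunctor exactly when each component $\lambda_X(f)$ lands among the $\V$-functors, and any such factorization is automatically natural (by cancelling the componentwise monos). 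Thus $\lambda$ being a natural transformation $\Set(-,\V)\to\VCat(F-,(\V,\hom))$ is equivalent to every $\lambda_X(f)$ being a $\V$-functor, which by the previous paragraph is equivalent to (D); this yields both directions of the iff.

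The main obstacle I anticipate is not the equivalence, which is essentially the observation $\Delta_X[f]=f$ read against the definition of a $\V$-functor, but the groundwork ensuring Definition~\ref{defn:distrib-lax} even applies, namely that $K_\lambda$ is a lax extension of $|F|$; among its axioms, (L2) is the one genuinely using the quantale adjunction and requires care that $r[f]$ is a legitimate test predicate in the meet defining $K_\lambda s$. A lesser but easy-to-overlook point is the identification of $\VCat$-valued naturality with $\Set$-valued naturality plus componentwise $\V$-functoriality, which hinges on $F$ landing in $\VCat$ so that the codomain functor $\VCat(F-,(\V,\hom))$ is defined at all.
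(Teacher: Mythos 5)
Your proposal is correct and follows essentially the same route as the paper: reduce to condition (D) of Lemma~\ref{lem:distrib-lax} (the lax-extension axioms (L1)--(L3) being independent of the $\V$-categorical structure, which the paper simply cites from prior work), then use $\Delta_X[f]=f$ to identify (D) with $\V$-functoriality of each $\lambda_X(f)$. The extra care you take in spelling out (L2) and in matching componentwise $\V$-functoriality with naturality into the subfunctor $\VCat(F-,(\V,\hom))$ is sound but not a different argument.
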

\noindent (In words, the condition of the theorem says that $\lambda$
lifts predicates on~$X$ to $\V$-functorial predicates on~$FX$.)
\begin{proof}
  As the definition is the same as usual, $K_\lambda$ satisfies the
  axioms (L1)-(L3) by established results~\cite[Theorem
  5.8]{FuzzyLaxHemi}.  We thus only need to check that condition~(D)
  of Lemma~\ref{lem:v-dist} is equivalent to the condition on~$\lambda$.  Expanding definitions, and noting that
  $\Delta_X[f] = f$, (D) is equivalent to
  \begin{equation*}
    d_{FX}(t_1,t_2) \vle \hom(\lambda_X(f)(t_1), \lambda_X(f)(t_2))
  \end{equation*}
  for every $X$, $f$, $t_1$ and $t_2$, which is precisely $\V$-functoriality of $\lambda_X(f)$.
\end{proof}

\begin{rem}
  The Kantorovich extension admits a point-free representation, based on the observation that a $\V$-valued predicate $f\colon X\to\V$ can be seen as a $\V$-relation $f\colon\frel{1}{X}$.
  The relational image of $f$ under $r$ is then simply given by the relational composition $r\cdot f$, and the Kantorovich extension can be given in terms of the right adjoint $-\multimapdotinv s$ of relational composition $-\cdot s$:
  \begin{equation*}
    K_\lambda r = \vMeet_{\mathclap{f\colon\frel{1}{X}}} \lambda_X(f) \multimapdotinv \lambda_Y(r\cdot f).
  \end{equation*}
  This representation is based on recent similar one where $\V$-relations of the form $\frel{X}{1}$ are used instead~\cite{PointFreeLax}.
\end{rem}
In the following two sections, we introduce a distributorial version of the other generic construction, the Wasserstein extension, and work out conditions under which it is a distributorial lax extension.

\section{Exact squares}\label{sec:exact}

\noindent The Wasserstein extension~\cite{HofmannEA14,BBKK18,FuzzyLaxHemi} of a functor is  a lax extension only if the functor at hand preserves weak pullbacks and the corresponding predicate lifting satisfies certain requirements.
We will now adapt these requirements to the distributorial setting, starting with those on the functor.

A weak pullback is a weak limit of a cospan diagram, that is, a diagram
\begin{equation}\label{eqn:diag-pxyz}
  \begin{tikzcd}
    P \arrow[d, "u"] \arrow[r, "v"] & Y \arrow[d, "g"] \\
    X \arrow[r, "f"]                & Z               
  \end{tikzcd}    
\end{equation}
that commutes and additionally satisfies the universal property that for every triple $(Q, s\colon Q\to X, t\colon Q\to Y)$ satisfying $f\cdot s = g\cdot t$ there exists a (not necessarily unique) arrow $h\colon Q\to P$ such that $s = u\cdot h$ and $t = v\cdot h$.
It can be observed that the above diagram commutes iff $\rev{g}\cdot f \supseteq v\cdot\rev{u}$ and is a weak pullback iff $\rev{g}\cdot f = v\cdot\rev{u}$, so that weak pullback preservation by $F$ may be phrased as the condition $\rev{g}\cdot f = v\cdot\rev{u} \implies \rev{(Fg)}\cdot Ff \subseteq Fv\cdot\rev{(Fu)}$ (noting that the reverse inclusion always holds by functoriality).

A square satisfying the condition $\rev{g}\cdot f=v\cdot\rev{u}$ is also known as a \emph{Beck-Chevalley square} or an \emph{exact square}~\cite{Guitart80}.
We may more appropriately write this condition as $\rev{g}\cdot\gr{f}=\gr{v}\cdot\rev{u}$, using the two embeddings of functions into $\V$-relations, the covariant functor $\gr{(-)}\colon\Set\to\VRel$ and the contravariant functor $\rev{(-)}\colon\Set^\op\to\VRel$, which act as identity on objects and send maps to their graphs and inverse graphs, respectively.

The analogue of exact squares in this sense in the category $\VDist$~\cite{BilkovaEA13,KurzVelebil16} is based on the functors $\grv{(-)}\colon\VCat\to\VDist$ and $\revv{(-)}\colon\VCat^\op\to\VDist$, which also act as identity on objects and send a $\V$-functor $f\colon\frel{(X,d_X)}{(Y,d_Y)}$ to $\grv{f} \coloneqq d_Y\cdot f$ and $\revv{f} \coloneqq \rev{f}\cdot d_Y$, respectively.
To see that $\grv{f}$ is indeed a $\V$-distributor, note that $d_Y\cdot f\cdot d_X \vle d_Y\cdot d_Y\cdot f \vle d_Y\cdot f$, using that $f$ is a $\V$-functor and $d_Y$ is a $\V$-category, which shows both required inequalities at once.
The proof for $\revv{f}$ is very similar.
We now say that a commuting square~\eqref{eqn:diag-pxyz} in $\VCat$ is \emph{exact} if $\revv{g}\cdot\grv{f} \vle \grv{v}\cdot\revv{u}$.
Expanding definitions, a square of shape~\eqref{eqn:diag-pxyz} in $\VCat$ with $\V$-category structures $d_P$, $d_X$, $d_Y$ and $d_Z$ on $P$, $X$, $Y$ and $Z$ is thus exact iff
\begin{equation*}
  \rev{g} \cdot d_Z \cdot f = d_Y \cdot v \cdot \rev{u} \cdot d_X,
\end{equation*}
or, in pointful notation, iff for all $x\in X$ and $y\in Y$ we have
\begin{equation*}
  \textstyle d_Z(f(x),g(y)) = \vMeet_{p\in P} d_X(x,u(p)) \vtimes d_Y(v(p),y).
\end{equation*}

In analogy to the requirement that $\Set$ functors preserve weak
pullbacks we typically want our functors to preserve exact squares.
To this end, it is sometimes convenient to assume that the square to
be preserved is a pullback square instead of just a weak pullback
square (in analogy to the well-known fact that preservation of weak
pullbacks is equivalent to weak preservation of
pullbacks~\cite{GummSchroder00}):

\begin{lem}\label{lem:exact-via-pullback}
  Let $F\colon\Set\to\VCat$ be a functor. Then
  $F$ preserves exact squares iff for every pullback square~\eqref{eqn:diag-pxyz} we have $\revv{(Fg)}\cdot\grv{(Ff)} \vle \grv{(Fv)}\cdot\revv{(Fu)}$.
\end{lem}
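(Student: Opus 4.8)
The plan is to exploit the observation made just before the statement that, in $\Set$, exact squares coincide with weak pullbacks, so that ``$F$ preserves exact squares'' means precisely that $F$ sends every weak pullback in $\Set$ to a $\VCat$-square satisfying the exactness inequality $\revv{(Fg)}\cdot\grv{(Ff)} \vle \grv{(Fv)}\cdot\revv{(Fu)}$. The ``only if'' direction is then immediate: every pullback square is in particular a weak pullback, hence an exact square, so if $F$ preserves all exact squares it certainly satisfies the displayed inequality on pullback squares. The whole content lies in the ``if'' direction, for which I would reduce an arbitrary weak pullback to the genuine pullback over the same cospan.

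Concretely, given an exact (equivalently, weak pullback) square~\eqref{eqn:diag-pxyz}, I would let $P'$ with projections $u'\colon P'\to X$ and $v'\colon P'\to Y$ be the genuine pullback of $f$ and $g$, and invoke the standard fact (cf.~\cite{GummSchroder00}) that the comparison map $h\colon P\to P'$ determined by $u'\cdot h = u$ and $v'\cdot h = v$ is a split epi precisely because the square is a weak pullback; I then fix a section $k\colon P'\to P$ with $h\cdot k = \id_{P'}$. Applying $F$ and passing to graphs yields $\gr{(Fu)} = \gr{(Fu')}\cdot\gr{(Fh)}$, $\gr{(Fv)} = \gr{(Fv')}\cdot\gr{(Fh)}$, and $\gr{(Fh)}\cdot\gr{(Fk)} = \Delta_{FP'}$.

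The key relational observation I would then record is that $\gr{(Fk)}\vle\rev{(Fh)}$: whenever $Fk(s)=w$, functoriality gives $Fh(w)=s$, so the graph of $Fk$ lies pointwise below the converse graph of $Fh$. Combined with the previous identity this gives $\Delta_{FP'}\vle\gr{(Fh)}\cdot\rev{(Fh)}$, whence
\begin{equation*}
  \rev{(Fu')} = \Delta_{FP'}\cdot\rev{(Fu')} \vle \gr{(Fh)}\cdot\rev{(Fh)}\cdot\rev{(Fu')} = \gr{(Fh)}\cdot\rev{(Fu)},
\end{equation*}
using $\rev{(Fh)}\cdot\rev{(Fu')} = \rev{(Fu)}$. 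Sandwiching this inequality between $d_{FY}\cdot\gr{(Fv')}$ on the left and $d_{FX}$ on the right, and then using $\gr{(Fv')}\cdot\gr{(Fh)} = \gr{(Fv)}$, yields $\grv{(Fv')}\cdot\revv{(Fu')}\vle\grv{(Fv)}\cdot\revv{(Fu)}$. Since the hypothesis applied to the genuine pullback square gives $\revv{(Fg)}\cdot\grv{(Ff)}\vle\grv{(Fv')}\cdot\revv{(Fu')}$, chaining the two inequalities establishes exactness of the image of the weak pullback square, completing the ``if'' direction.

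I expect the main obstacle to be purely bookkeeping: keeping the orientations of the graph and converse embeddings straight (for instance, that $u=u'\cdot h$ forces $\rev{(Fu)}=\rev{(Fh)}\cdot\rev{(Fu')}$ rather than the opposite order), and making sure that the $d_{FX}$ and $d_{FY}$ factors coming from the $\VDist$-embeddings $\grv{(-)}$ and $\revv{(-)}$ sit on the correct sides, so that monotonicity of relational composition can be applied cleanly. The one genuinely nontrivial input is the split-epi factorization of weak pullbacks through genuine pullbacks; everything after that is monotone manipulation in $\VRel$.
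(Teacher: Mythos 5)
Your proof is correct and follows essentially the same route as the paper's: both factor the weak pullback through the genuine pullback via the comparison map, observe that the weak universal property makes this map a split epi so that $\Delta \vle \gr{(Fh)}\cdot\rev{(Fh)}$ after applying $F$, and then conclude by inserting this inequality and using monotonicity of relational composition. The only difference is cosmetic (you isolate the inequality $\rev{(Fu')}\vle\gr{(Fh)}\cdot\rev{(Fu)}$ before sandwiching, whereas the paper inserts $\Delta_{FQ}\vle Fc\cdot\rev{(Fc)}$ directly into the middle of the composite).
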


\begin{expl}\label{expl:running-exact}
  The functors in our running examples preserve exact squares.
  We show this for the case of the list functor from Example~\ref{expl:lists} and defer the proofs for the other two functors to Section~\ref{sec:examples}.
  Assume a pullback square~\eqref{eqn:diag-pxyz}.
  Let $s\in FX$ and $t\in FY$.
  If $Ff(s)$ is a subsequence of $Fg(t)$, then for each entry $x$ in $s$ there is a corresponding entry $y$ in $t$ such that $f(x) = g(y)$ and these entries $y$ together form a subsequence of $t$.
  By assumption we can now form the sequence $r$ consisting of the pairs $(x,y)$ in $P$; then $Fu(r) = s$ and $Fv(r)$ is a subsequence of $t$.
  The case where $Fg(t)$ is a subsequence of $Ff(s)$ is symmetrical, and in the case where neither is a subsequence of the other nothing needs to be shown.
\end{expl}
\noindent
In all of the examples we consider in this paper, the functor at hand also satisfies the following condition:

\begin{defn}\label{defn:functional-bisim}
  A functor $F\colon\Set\to\VCat$ is said to be \emph{cool}
  if for every surjective function $f\colon X\to Y$ we have $d_{FY}\cdot Ff \vle Ff\cdot d_{FX}$ and $\rev{(Ff)}\cdot d_{FY} \vle d_{FX}\cdot\rev{(Ff)}$.
\end{defn}

\begin{rem}\label{rem:cool-equal}
  The two inequalities in Definition~\ref{defn:functional-bisim} in fact give rise to equalities, as their converses already follow from the fact that $F$ sends functions to $\V$-functors.  
\end{rem}

If all the structures $d_{FX}$ are symmetric, as is the case for functors into equivalence relations or pseudometrics, then we only need to show one of the conditions of Definition~\ref{defn:functional-bisim}:
\begin{lem}\label{lem:cool-sym}
  Let $F\colon\Set\to\VCat$ be a functor such that we have $\rev{d_{FX}} = d_{FX}$ for every set $X$.
  Then $F$ is cool iff for every surjective $f\colon X\to Y$ we have $d_{FY}\cdot Ff \vle Ff\cdot d_{FX}$.
\end{lem}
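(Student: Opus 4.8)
The plan is to prove Lemma~\ref{lem:cool-sym} by showing that under the symmetry assumption $\rev{d_{FX}} = d_{FX}$, the two conditions in Definition~\ref{defn:functional-bisim} are equivalent, so that the second follows from the first. The key tool is that taking converses of $\V$-relations is an involution that reverses the order of composition: for $\V$-relations $p\colon\frel{A}{B}$ and $q\colon\frel{B}{C}$ one has $\rev{(q\cdot p)} = \rev{p}\cdot\rev{q}$, and moreover converse is order-preserving, so $p\vle p'$ iff $\rev{p}\vle\rev{p'}$. These are routine facts about $\VRel$ that I would invoke without detailed verification.

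First I would fix a surjective $f\colon X\to Y$ and write out the first condition of Definition~\ref{defn:functional-bisim}, namely
\begin{equation*}
  d_{FY}\cdot Ff \vle Ff\cdot d_{FX},
\end{equation*}
where I silently identify $Ff$ with its graph $\gr{(Ff)}$. I would then apply the converse operation to both sides. Using that converse reverses composition, the left-hand side becomes $\rev{(Ff)}\cdot\rev{d_{FY}}$ and the right-hand side becomes $\rev{d_{FX}}\cdot\rev{(Ff)}$. Now the symmetry hypothesis $\rev{d_{FY}} = d_{FY}$ and $\rev{d_{FX}} = d_{FX}$ lets me rewrite this as
\begin{equation*}
  \rev{(Ff)}\cdot d_{FY} \vle d_{FX}\cdot\rev{(Ff)},
\end{equation*}
which is precisely the second condition of Definition~\ref{defn:functional-bisim}. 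Since converse is an order-isomorphism on hom-sets of $\VRel$, this implication runs in both directions, so the first condition holds iff the second does.

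Having established this equivalence, the lemma follows immediately: coolness is by definition the conjunction of both conditions, but since each implies the other in the symmetric setting, coolness is equivalent to either one alone, in particular to the condition $d_{FY}\cdot Ff \vle Ff\cdot d_{FX}$ stated in the lemma. I do not anticipate a serious obstacle here; the only point requiring a little care is the bookkeeping of which $\V$-category structure ($d_{FX}$ versus $d_{FY}$) attaches to which side after taking converses, since the symmetry assumption must be applied to both structures. The argument is essentially a formal manipulation in the order-enriched category $\VRel$, so the proof will be short.
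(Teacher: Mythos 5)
Your proof is correct and matches the paper's own argument: both take converses of the inequality, use that converse reverses composition, and then apply the symmetry hypothesis $\rev{d_{FX}} = d_{FX}$ to identify the two conditions of Definition~\ref{defn:functional-bisim}. The paper presents the same chain of equivalences in a single line.
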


\begin{expl}\label{expl:running-cool}
  The functors in our running examples are cool:
  \begin{enumerate}[wide]
    \item For the functor $FX = (\pfun(\A\times X), \subseteq)$ from Example~\ref{expl:simulations}, let $U\subseteq\A\times X$ and $V'\subseteq\A\times Y$ such that $f[U]\subseteq V$.
      Then we have $U\subseteq U'$ for $U' \coloneqq f^{-1}[V']$.

      For the other inequality, let $V\subseteq\A\times Y$ and $U'\subseteq\A\times X$ such that $V\subseteq f[U']$, and put $U\coloneqq f^{-1}[V] \cap U'$.
      Then $U\subseteq U'$ by definition, and also 
      \begin{equation*}
        f[U] = f[f^{-1}[V] \cap U'] = f[f^{-1}[V]] \cap f[U'].
      \end{equation*}
    \item For the list functor $F$ from Example~\ref{expl:lists} we only need to show one inequality by Lemma~\ref{lem:cool-sym}.
      Let $s\in FX$, $t'\in FY$ and put $\epsilon = d_{FY}(Ff(s),t')$.
      If neither of $Ff(s)$ and $t'$ is a subsequence of the other, then $\epsilon = \infty$ and there is nothing to show.
      If $t'$ is a subsequence of $Ff(s)$, then define $s'$ to be the subsequence of $s$ that arises by picking the same indices as for the occurrence of $t'$ within $Ff(s)$.
      Finally, if $Ff(s)$ is a subsequence of $t'$, it suffices to define a supersequence $s'$ of $s$ such that $Ff(s') = t'$.
      This is easily achieved using surjectivity of $f$, making sure that $s$ matches $s'$ in the same indices as $Ff(s)$ matches $t'$.
    \item For the functor from Example~\ref{expl:mlmc}, see the appendix.
  \end{enumerate}
\end{expl}

\section{The Distributorial Wasserstein Extension}
\label{sec:wasserstein}

The distributorial Wasserstein extension is based on a single unary predicate lifting $\lambda$ of the functor $F\colon\Set\to\VCat$, of which we need to require additional properties.
As we work with pointfree representations wherever possible, the following notation will be helpful: given a $\V$-valued predicate $g\colon Y\to\V$, define $\diag{g}\colon\frel{Y}{Y}$ to be the endorelation on $Y$ where
$\diag{g}(y,y) = g(y)$ and $\diag{g}(y,y') = \vminel$ whenever $y\neq y'$.
We most often apply this construction to lifted predicates $\lambda_X(f)$ where $f\colon X\to\V$, for which we introduce the notation $\lambdadiag_X(f) \coloneqq \diag{\lambda_X(f)}$.
Additionally, for every set $Y$ we write $\vunit_Y\colon Y\to\V$ for the constant map $y\mapsto\vunit$ to the unit of the quantale.

\begin{defn}[Well-behaved predicate lifting]\label{defn:well-behaved}
  A predicate lifting $\lambda$ for a functor $F\colon\Set\to\VCat$ is said to be \emph{well-behaved} if it satisfies the following properties for all $f,g\colon X\to\V$:
  \begin{description}
    \item[Monotonicity:] $f \vle g \implies \lambda_X(f) \vle \lambda_X(g)$
    \item[Preservation of unit:] $\vunit_{FX} \vle \lambda_X(\vunit_X)$
    \item[$\V$-Normality:]
      \begin{math}
        \lambdadiag_X(g)\cdot d_{FX}\cdot\lambdadiag_X(f)
        \vle d_{FX}\cdot\lambdadiag_X(f\vtimes g)\cdot d_{FX}
      \end{math}
  \end{description}
\end{defn}
In the Boolean quantale $\twoQ$, $\V$-normality expands to the condition that for all $f,g\in 2^X$ and all $t_1,t_2\in FX$,
\begin{multline}\label{eqn:v-normal}
  t_1\vDash\lambda_X(f) \;\land\; t_1 \le_{FX} t_2 \;\land\; t_2\vDash\lambda_X(g) \implies
  \\ \exists t_3\in FX.\; t_1\le_{FX}t_3\le_{FX}t_2 \;\land\;t_3\vDash\lambda_X(f\land g),
\end{multline}
where $t\vDash\lambda_X(h)$ means that $\lambda_X(h)(t) = \vmaxel$.
In particular, if $\le_{FX}$ is discrete, then this condition can be further simplified, as only the case $t_1 = t_2 = t_3$ needs to be considered.
In this case, $\V$-normality is equivalent to (finitary) \emph{normality}~\cite{Schroder08}.

In the quantale $\zeroinfQ$, $\V$-normality expands as follows: for all $f,g\in [0,\infty]^X$ and all $t_1,t_2\in FX$,
\begin{multline}\label{eqn:v-subadd}
    \inf_{t_3\in FX} d_{FX}(t_1,t_3) + \lambda_X(f+g)(t_3) + d_{FX}(t_3,t_2) \\
    \le \lambda_X(f)(t_1) + d_{FX}(t_1,t_2) + \lambda_X(g)(t_2).
\end{multline}
Similar to before, in the special case where $d_{FX}$ is discrete,~\eqref{eqn:v-subadd} further simplifies to the condition that $d_{FX}$ is \emph{subadditive}.
In the real-valued case we therefore often refer to $\V$-normality as \emph{$\V$-subadditivity} instead.

For general quantales, if every $\V$-category $d_{FX}$ is discrete, then the conditions of Definition~\ref{defn:well-behaved} coincide with or are equivalent to conditions appearing in various related work on Wasserstein constructions~\cite{Hofmann07,BBKK18,BonchiEA23}.

\begin{rem}\label{rem:v-subadd-simpler}
  The condition~\eqref{eqn:v-subadd} in the metric case ($\V=\zeroinfQ$) may still seem relatively complicated, but in actual examples we can typically simplify it a bit further, and show that there exists some $t_3\in FX$ that satisfies both $d_{FX}(t_1,t_3) + d_{FX}(t_3,t_2) = d_{FX}(t_1,t_2)$ and $\lambda_X(f+g)(t_3) \le \lambda_X(f)(t_1) + \lambda_X(g)(t_2)$.
  In practice, $t_3$ is often equal to one of $t_1$ or $t_2$, or easily constructed from them.
\end{rem}

\begin{expl}\label{expl:lists-predicate-lifting}
  The list functor from Example~\ref{expl:lists} admits a well-behaved predicate lifting given by
  \begin{equation*}
    \lambda_X(f)(x_1,\dots x_n) = f(x_1) + \dots + f(x_n).
  \end{equation*}
  This predicate lifting is clearly monotone and preserves the unit zero.
  For $\V$-subadditivity, we follow Remark~\ref{rem:v-subadd-simpler} and let $t_1,t_2\in FX$ and $f,g\colon X\to[0,\infty]$.
  If $t_1$ is a subsequence of $t_2$, put $t_3 \coloneqq t_1$, otherwise put $t_3 \coloneqq t_2$.
  In either case we have $d_{FX}(t_1,t_3) + d_{FX}(t_3,t_2) = d_{FX}(t_1,t_2)$ and $\lambda_X(f+g)(t_3) \le \lambda_X(f)(t_1) + \lambda_X(g)(t_2)$, and~\eqref{eqn:v-subadd} follows.
\end{expl}

The usual Wasserstein extension of a set functor $F\colon\Set\to\Set$ is given in terms of \emph{couplings}, which generalize the concept of distributions with marginals to arbitrary functors.
Given $t_1\in FX$ and $t_2\in FY$, we define the set $\cpl{t_1}{t_2}$ to consist of all those $t\in F(X\times Y)$ such that $F\pi_1(t) = t_1$ and $F\pi_2(t) = t_2$, where $\pi_1\colon X\times Y\to X$ and $\pi_2\colon X\times Y\to Y$ are the projections.
The Wasserstein extension with respect to a predicate lifting $\lambda$ is now defined on a $\V$-valued relation $r\colon\frel{X}{Y}$ via
\begin{equation*}
  \Wc_\lambda r(t_1,t_2) \coloneqq \vJoin \{ \lambda_{X\times Y}(r)(t) \mid t \in \cpl{t_1}{t_2} \},
\end{equation*}
based on the idea that $r$ is a $\V$-valued predicate on $X\times Y$ and can thus be lifted using $\lambda$.
Using diagonals as introduced earlier in this section, we may also represent the Wasserstein extension in pointfree style:
\begin{equation*}
  \Wc_\lambda r = \gr{(F\pi_2)} \cdot \lambdadiag_{X\times Y}(r) \cdot \rev{(F\pi_1)}
\end{equation*}

\begin{expl}\label{expl:prob-wasserstein}
  For $\V=\zeroinfQ$, $F = \dfun$ and $\lambda = \expect{}$, the resulting Wasserstein extension is precisely the probabilistic Kantorovich/Wasserstein extension $\Kant$.
\end{expl}

\begin{expl}\label{expl:simulations-wasserstein-exact}
  Egli-Milner bisimulations for labelled transition systems (Example~\ref{expl:simulations}) can be recovered as $\Wc_\Box$-simulations for the box modality $\Box$ given by
  \begin{equation*}
    U \vDash \Box_X(f) \iff \forall (a,x)\in U.\, x \vDash f.
  \end{equation*}
  Note that this modality ignores the labels, while the back and forth conditions require transitions with matching labels.
  This label-matching is enforced through coupling; in particular, sets with differing sets of labels have no coupling.
\end{expl}

As indicated earlier, the Wasserstein extension of a set functor $F$ is a lax extension when the functor preserves weak pullbacks and the predicate lifting is well-behaved in the above sense, where we view $F$ as a functor into $\VCat$ with each $d_{FX}$ being discrete~\cite{BBKK18,FuzzyLaxHemi}.
For general functors $F\colon\Set\to\VCat$, it is typically not distributorial on its own, so we adapt the definition as follows:

\begin{defn}[Distributorial Wasserstein extension]\label{defn:w}
  Let $F\colon\Set\to\VCat$ be a functor, and let $\lambda$ be a unary predicate lifting for $F$.
  The \emph{distributorial Wasserstein extension}
  $W_\lambda$ is given by
  \begin{equation*}
    W_\lambda r \coloneqq \grv{(F\pi_2)}\cdot\lambdadiag_{X\times Y}(r)\cdot\revv{(F\pi_1)}.
  \end{equation*}
  Given a set $\Lambda$ of predicate liftings, put $W_\Lambda r \coloneqq \vMeet_{\lambda\in\Lambda} W_\lambda r$.
\end{defn}
Alternatively, the distributorial Wasserstein extension can be represented as
$W_\lambda r = d_{FY}\cdot\Wc_\lambda r\cdot d_{FX}$ or 
\begin{align*}
  W_\lambda r
  &= d_{FY}\cdot F\pi_2\cdot\lambdadiag_{X\times Y}(r)\cdot\rev{(F\pi_1)}\cdot d_{FX}.
\end{align*}
Similar to the formulation of the original Wasserstein extension in terms of couplings, we can also give a pointful account of the distributorial Wasserstein extension, which is best understood in the case $\V=\zeroinfQ$:
For $\epsilon \ge 0$, we say that $t\in F(A\times B)$ is an \emph{$\epsilon$-coupling} of $t_1\in FA$ and $t_2\in FB$ if
\begin{equation*}
  d_{FX}(t_1, F\pi_1(t)) + d_{FY}(F\pi_2(t), t_2) \le \epsilon.
\end{equation*}
If we write $\Cpl_\epsilon(t_1,t_2)$ for the set of $\epsilon$-couplings of $t_1$ and $t_2$, then the distributorial Wasserstein extension can equivalently be defined as
\begin{equation*}
  W_\lambda r(t_1,t_2) = \inf \{\epsilon+\lambda_{X\times Y}(r)(t) \mid \epsilon\ge 0, t\in\Cpl_\epsilon(t_1,t_2)\}.
\end{equation*}
This construction gives rise to a distributorial lax extension:
\begin{thm}\label{thm:mw-lax}
  Let $F\colon\Set\to\VCat$ be a functor that preserves exact squares and is cool, and let $\lambda$ be a well-behaved predicate lifting for $F$.
  Then the distributorial Wasserstein extension $W_\lambda$ is a distributorial lax extension.
\end{thm}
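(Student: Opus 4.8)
The plan is to check that $W_\lambda$ is a lax extension of $|F|$ and additionally satisfies condition~(D) of Lemma~\ref{lem:distrib-lax}; by that lemma this is exactly what it means for $W_\lambda$ to be a distributorial lax extension. Throughout I would work pointfree, using the representation $W_\lambda r = \grv{(F\pi_2)}\cdot\lambdadiag_{X\times Y}(r)\cdot\revv{(F\pi_1)}$ together with the fact that every $d_{FX}$ is a $\V$-category, so that $d_{FX}\cdot d_{FX} = d_{FX}$. Monotonicity (L1) is immediate from monotonicity of $\lambda$: $r\vle r'$ gives $\lambdadiag_{X\times Y}(r)\vle\lambdadiag_{X\times Y}(r')$, and relational composition is monotone. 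For condition~(D), namely $d_{FX}\vle W_\lambda\Delta_X$, I would first establish $\Delta_{FX}\vle\Wc_\lambda\Delta_X$: the diagonal $\delta\colon X\to X\times X$ maps $t$ to a coupling $F\delta(t)\in\cpl{t}{t}$, and naturality rewrites $\lambda_{X\times X}(\Delta_X)(F\delta(t)) = \lambda_X(\Delta_X\circ\delta)(t) = \lambda_X(\vunit_X)(t)$, which satisfies $\vunit\vle\lambda_X(\vunit_X)(t)$ by preservation of unit (Definition~\ref{defn:well-behaved}). Since $W_\lambda\Delta_X = d_{FX}\cdot\Wc_\lambda\Delta_X\cdot d_{FX}$, this yields $d_{FX} = d_{FX}\cdot\Delta_{FX}\cdot d_{FX}\vle W_\lambda\Delta_X$.

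The graph axiom (L3) uses the same ingredients. Applying $\lambda$ and naturality to the couplings $F\langle\id,f\rangle(t)\in\cpl{t}{Ff(t)}$ and $F\langle f,\id\rangle(t)\in\cpl{Ff(t)}{t}$, together with preservation of unit, gives $\gr{(Ff)}\vle\Wc_\lambda\gr{f}$ and $\rev{(Ff)}\vle\Wc_\lambda\rev{f}$; decorating with the reflexive $d$'s then places these below $W_\lambda\gr{f}$ and $W_\lambda\rev{f}$, respectively.

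The crux is (L2), i.e.\ $W_\lambda s\cdot W_\lambda r\vle W_\lambda(s\cdot r)$ for $r\colon\frel{A}{B}$ and $s\colon\frel{B}{C}$. Expanding the composite, the two inner factors meet in the block $\revv{(F\pi^{BC}_1)}\cdot\grv{(F\pi^{AB}_2)}$, which I would glue along $B$: the object $P = A\times B\times C$ with projections $p\colon P\to A\times B$ and $q\colon P\to B\times C$ is the pullback of $\pi^{AB}_2$ and $\pi^{BC}_1$, so Lemma~\ref{lem:exact-via-pullback} (preservation of exact squares) bounds this block by $\grv{(Fq)}\cdot\revv{(Fp)} = d_{F(B\times C)}\cdot Fq\cdot\rev{(Fp)}\cdot d_{F(A\times B)}$. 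Next I would transport the liftings to $P$: writing $f_r = r\circ p$ and $f_s = s\circ q$, naturality of $\lambda$ yields the diagonal identities $\lambdadiag_{B\times C}(s)\cdot Fq = Fq\cdot\lambdadiag_P(f_s)$ and $\rev{(Fp)}\cdot\lambdadiag_{A\times B}(r) = \lambdadiag_P(f_r)\cdot\rev{(Fp)}$, while the cool condition in its equality form (Remark~\ref{rem:cool-equal}) lets me slide the surrounding $d_{F(B\times C)}$ and $d_{F(A\times B)}$ through $Fq$ and $\rev{(Fp)}$ onto $d_{FP}$. After collapsing $d_{FP}\cdot d_{FP} = d_{FP}$ and using $F\pi^{BC}_2\cdot Fq = F\pi_C$ and $\rev{(Fp)}\cdot\rev{(F\pi^{AB}_1)} = \rev{(F\pi_A)}$ for the $P$-projections $\pi_A,\pi_C$, the composite is dominated by $d_{FC}\cdot F\pi_C\cdot\lambdadiag_P(f_s)\cdot d_{FP}\cdot\lambdadiag_P(f_r)\cdot\rev{(F\pi_A)}\cdot d_{FA}$.

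At this point everything lives over $P$, so $\V$-normality of $\lambda$ applies directly, giving $\lambdadiag_P(f_s)\cdot d_{FP}\cdot\lambdadiag_P(f_r)\vle d_{FP}\cdot\lambdadiag_P(f_r\vtimes f_s)\cdot d_{FP}$; monotonicity together with $f_r\vtimes f_s\vle(s\cdot r)\circ\rho$, where $\rho\colon P\to A\times C$ forgets $B$, replaces $f_r\vtimes f_s$ by $(s\cdot r)\circ\rho$. Finally, using $\pi_A = \pi^{AC}_1\circ\rho$ and $\pi_C = \pi^{AC}_2\circ\rho$, the naturality and coolness identities for $\rho$, and $F\rho\cdot\rev{(F\rho)}\vle\Delta$, I would collapse the $P$-level expression to $d_{FC}\cdot F\pi^{AC}_2\cdot\lambdadiag_{A\times C}(s\cdot r)\cdot\rev{(F\pi^{AC}_1)}\cdot d_{FA} = W_\lambda(s\cdot r)$, absorbing the leftover $d_{F(A\times C)}$ factors via coolness for $\pi^{AC}_1$ and $\pi^{AC}_2$. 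The main obstacle is precisely this orchestration in (L2): exact-square preservation supplies the gluing, naturality moves the two liftings onto the common object $P$, coolness commutes the $\V$-categorical decorations with functor images, and $\V$-normality merges the two liftings into one. A minor technical caveat is that $p,q,\rho$ and the $\pi^{AC}_i$ are surjective only when $A,B,C$ are nonempty, so the cool-condition steps require the degenerate cases (one of the sets empty) to be dispatched separately, where the inequality holds trivially.
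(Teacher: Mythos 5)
Your proposal is correct and follows essentially the same route as the paper's proof: the same pointfree verification of (L1) and (L3) via unit-preservation and naturality, and the same orchestration in (L2) — gluing along the middle factor via the pullback $A\times B\times C$ and exact-square preservation, transporting the two liftings onto the triple product by naturality, commuting the $\V$-categorical structures through the projections by coolness, and merging via $\V$-normality before collapsing with $F\pi_{13}\cdot\rev{(F\pi_{13})}\vle\Delta$. The only (harmless) deviations are that you verify condition (D) of Lemma~\ref{lem:distrib-lax} directly via the diagonal coupling where the paper checks the equivalent condition (D') from the triangle inequalities, and that a couple of your ``absorption'' steps invoke coolness where plain $\V$-functoriality of $F$ on maps already suffices (which also removes the surjectivity caveat at those particular spots).
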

\begin{proof}[Proof (sketch)]
  The proofs for items (L1) and (L3) are quite similar to those appearing in previous work, e.g.~\cite[Theorem 6.7]{FuzzyLaxHemi}.
  The proof for item (D) is immediate from the definition of $W_\lambda$ and the fact that both $\revv{(F\pi_1)}$ and $\grv{(F\pi_2)}$ are $\V$-distributors.
  In the proof of (L2), we make use of the fact that the projections $\pi_2\cdot\pi_{12} = \pi_1\cdot\pi_{23}$ from $X\times Y\times Z$ into $Y$ form a pullback and therefore an exact square that is preserved by $F$.
\end{proof}

\begin{expl}\label{expl:simulations-wasserstein}
  Continuing Example~\ref{expl:simulations}, Egli-Milner \emph{simulations} (with only the forth condition) arise by equipping the LTS functor $|F| = \Pow(\A\times(-))$ with the preorder ${\le_{FX}} = {\subseteq}$.
  Details and proofs will be provided in Section~\ref{sec:examples}.
\end{expl}

\begin{expl}\label{expl:lists-wasserstein}
  In Example~\ref{expl:lists}, we considered a hemimetric structure for the list functor.
  The generalized Hamming and Levenshtein distances from this example correspond to the classical and distributorial Wasserstein extension for 
  the well-behaved predicate lifting from Example~\ref{expl:lists-predicate-lifting} that simply sums function values along the list.
  As we are using lax extensions, this distance is even slightly more general than presented earlier, as we can now also work with relations between different alphabets.
  Specifically, we view $\Wc_\lambda r\colon\frel{\List X}{\List Y}$ as the Hamming distance where $r\colon\frel{X}{Y}$ specifies the costs of transforming characters from the alphabet $X$ into characters from the alphabet $Y$, and similarly for the Levenshtein distance.
  The usual Hamming and Levenshtein distance with mutation cost $1$ are recovered for $X=Y$ and $r=\Delta^{01}_X$, where $\Delta^{01}_X(x,y) = 0$ if $x = y$ and $\Delta^{01}_X(x,y) = 1$ otherwise.
\end{expl}

Coolness of the functor at hand is not strictly required, and we can make do without it, but this comes at the cost of requiring a more complicated condition on the predicate lifting.
Specifically:
\begin{prop}\label{prop:mw-lax-not-cool}
  Let $F\colon\Set\to\VCat$ be a functor that preserves exact squares and let $\lambda$ be a well-behaved predicate lifting.
  Then $W_\lambda$ is a distributorial lax extension, provided $\lambda$ satisfies the following additional property:
  for every pair of maps $f\colon X\to Y$ and $g\colon X\to Z$ and every pair of $\V$-valued predicates $p\colon Y\to\V$ and $q\colon Z\to\V$ we have
  \begin{equation*}
    \lambdadiag_Z(q)\cdot\grv{(Fg)}\cdot\revv{(Ff)}\cdot\lambdadiag_Y(p) \\
    \vle \grv{(Fg)}\cdot\lambdadiag_X(p\cdot f\vtimes q\cdot g)\cdot\revv{(Ff)}.
  \end{equation*}
\end{prop}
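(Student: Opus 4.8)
The plan is to reuse the proof of Theorem~\ref{thm:mw-lax} verbatim for axioms (L1), (L3) and (D), none of which invokes coolness: (L1) follows from monotonicity of $\lambda$ together with monotonicity of relational composition, (L3) is obtained exactly as in \cite[Theorem 6.7]{FuzzyLaxHemi} from preservation of the unit, and (D) is immediate from the fact that $\grv{(F\pi_2)}$ and $\revv{(F\pi_1)}$ are $\V$-distributors (cf.\ Definition~\ref{defn:w}). Coolness entered the proof of Theorem~\ref{thm:mw-lax} only through (L2), and the additional hypothesis on $\lambda$ is engineered to take its place there. The entire content of the proof is therefore (L2), i.e.\ $W_\lambda s\cdot W_\lambda r\vle W_\lambda(s\cdot r)$ for $r\colon\frel{X}{Y}$ and $s\colon\frel{Y}{Z}$.

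First I would expand $W_\lambda s\cdot W_\lambda r$ via Definition~\ref{defn:w} and isolate the middle factor $\revv{(F\pi_1^{YZ})}\cdot\grv{(F\pi_2^{XY})}$, where superscripts record the product on which each projection lives. The maps $\pi_2^{XY}$ and $\pi_1^{YZ}$ into $Y$ have the ternary product $X\times Y\times Z$ as their pullback, with legs $\pi_{12}$ and $\pi_{23}$; since $F$ preserves exact squares, Lemma~\ref{lem:exact-via-pullback} gives $\revv{(F\pi_1^{YZ})}\cdot\grv{(F\pi_2^{XY})}\vle\grv{(F\pi_{23})}\cdot\revv{(F\pi_{12})}$. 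After substituting this bound, the two diagonals $\lambdadiag_{Y\times Z}(s)$ and $\lambdadiag_{X\times Y}(r)$ surround the factor $\grv{(F\pi_{23})}\cdot\revv{(F\pi_{12})}$, and the resulting expression is exactly the left-hand side of the additional hypothesis, instantiated with $\pi_{12}$ and $\pi_{23}$ in the roles of $f$ and $g$ and with $r$ and $s$ in the roles of $p$ and $q$. Applying the hypothesis collapses the two diagonals into one, bounding the whole expression by $\grv{(F\pi_2^{YZ})}\cdot\grv{(F\pi_{23})}\cdot\lambdadiag_{X\times Y\times Z}(\psi)\cdot\revv{(F\pi_{12})}\cdot\revv{(F\pi_1^{XY})}$, where $\psi\coloneqq r\cdot\pi_{12}\vtimes s\cdot\pi_{23}$ maps $(x,y,z)$ to $r(x,y)\vtimes s(y,z)$.

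Next I would contract the outer pairs of graphs using functoriality of $\grv{(-)}$ and $\revv{(-)}$, obtaining $\grv{(F\pi_2^{YZ})}\cdot\grv{(F\pi_{23})}=\grv{(F\pi_3)}$ and $\revv{(F\pi_{12})}\cdot\revv{(F\pi_1^{XY})}=\revv{(F\pi_1)}$ for the outer projections $\pi_1$ and $\pi_3$ of $X\times Y\times Z$ onto $X$ and $Z$. Refactoring these through $\pi_{13}\colon X\times Y\times Z\to X\times Z$ (via $\pi_1=\pi_1^{XZ}\cdot\pi_{13}$ and $\pi_3=\pi_2^{XZ}\cdot\pi_{13}$, now mixing $\grv$ and $\revv$ with the bare graph $\gr{(F\pi_{13})}$) rewrites the bound as $\grv{(F\pi_2^{XZ})}\cdot\big(\gr{(F\pi_{13})}\cdot\lambdadiag_{X\times Y\times Z}(\psi)\cdot\rev{(F\pi_{13})}\big)\cdot\revv{(F\pi_1^{XZ})}$. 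Since the outer factors are exactly those of $W_\lambda(s\cdot r)$, it suffices to prove the single inequality $\gr{(F\pi_{13})}\cdot\lambdadiag_{X\times Y\times Z}(\psi)\cdot\rev{(F\pi_{13})}\vle\lambdadiag_{X\times Z}(s\cdot r)$, after which monotonicity of composition closes the chain.

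This last inequality is the one genuinely new computation, and I expect it to be the main (if modest) obstacle. A direct computation shows both sides vanish off the diagonal, so the claim reduces to $\vJoin_{w\colon F\pi_{13}(w)=u}\lambda_{X\times Y\times Z}(\psi)(w)\vle\lambda_{X\times Z}(s\cdot r)(u)$ for each $u\in F(X\times Z)$. Here I would observe that $\psi\vle(s\cdot r)\cdot\pi_{13}$ pointwise, because $(s\cdot r)(x,z)=\vJoin_{y}r(x,y)\vtimes s(y,z)$ dominates its single summand $r(x,y)\vtimes s(y,z)=\psi(x,y,z)$; monotonicity of $\lambda$ then yields $\lambda_{X\times Y\times Z}(\psi)\vle\lambda_{X\times Y\times Z}((s\cdot r)\cdot\pi_{13})$, and naturality of $\lambda$ rewrites the right-hand side as $\lambda_{X\times Z}(s\cdot r)\cdot F\pi_{13}$, whose value at any $w$ with $F\pi_{13}(w)=u$ is $\lambda_{X\times Z}(s\cdot r)(u)$. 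Taking the join over all such $w$ gives the claim. The remaining care is purely bookkeeping: tracking the many projections and translating consistently between the $\V$-distributor embeddings $\grv{(-)},\revv{(-)}$ and the bare graphs $\gr{(-)},\rev{(-)}$. The conceptual difficulty has been front-loaded into the hypothesis on $\lambda$, which is precisely the exact-square gluing step specialized to the ternary product.
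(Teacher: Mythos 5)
Your proposal is correct and follows essentially the same route as the paper: the paper's own proof of Proposition~\ref{prop:mw-lax-not-cool} simply re-runs the (L2) argument from Theorem~\ref{thm:mw-lax}, replacing the coolness--naturality--$\V$-normality block with a single application of the new hypothesis, which is exactly the substitution you make explicit (with $\pi_{12},\pi_{23}$ for $f,g$ and $r,s$ for $p,q$). Your endgame differs only cosmetically, using functoriality of $\grv{(-)},\revv{(-)}$ and a pointwise check where the paper invokes Lemma~\ref{lem:g-diag-prop} and $F\pi_{13}\cdot\rev{(F\pi_{13})}\vle\Delta$.
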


\begin{rem}\label{rem:barr-extension}
  The earliest approaches to relation lifting~\cite{Barr70} worked by representing a given relation $r\colon\frel{X}{Y}$ as a span $v\cdot\rev{u}$ and then putting $\overline{F} r \coloneqq Fv\cdot\rev{(Fu)}$.
  This is known as the \emph{Barr extension}.
  While the definition does not specify which span should be taken (and is indeed not dependent on this choice), the standard choice for the set at the tip of the span is the relation itself, viewed as a subset of $X\times Y$.
  The maps $u$ and $v$ are then simply the left and right projections.

  We can also view the Barr extension through the lens of the Wasserstein extension, where rather than working with a subset of $X\times Y$ we work with the full set and instead regard $r$ as a predicate on $X\times Y$.
  This predicate can then be lifted using a predicate lifting $\lambda$, resulting in the relation lifting given by $F\pi_2\cdot\delta(\lambda_{X\times Y}(r))\cdot\rev{(F\pi_1)}$.
  The Barr extension is then typically obtained using a box-like modality, see Example~\ref{expl:simulations-wasserstein-exact} and the examples in Sections~\ref{subsec:simulations} and~\ref{subsec:modal}.
  
  Hughes' and Jacobs' coalgebraic treatment of simulations~\cite{HughesJacobs04}, which works by surrounding the Barr extension with preorders on either side, can therefore be seen as an instance of the distributorial Wasserstein extension in the case where the Barr extension is modelled as just described.
\end{rem}

\section{Examples}\label{sec:examples}

\noindent We complement the running examples from Section~\ref{sec:functors} with a number of additional examples, showcasing the versatility of lax couplings.

\subsection{Simulations}\label{subsec:simulations}

\noindent We have already claimed in Example~\ref{expl:simulations-wasserstein} that simulations can be modelled via lax couplings by equipping the LTS functor $\Pow(\A\times(-))$ with the preorder structure of subsethood.
We will now prove this claim and show  also that other types of simulation can be covered in a similar manner by using different preorders.
Specifically, we recover \emph{ready simulation} and \emph{complete simulation}, which are both part of the linear-time/branching-time spectrum~\cite{Glabbeek93}.
For a state $x$ in a labelled transition system, we denote by $I(x)$ the set of \emph{initial actions} of $x$, given by $I(x) = \{a\in\A \mid \exists x'.\, x\overset{a}{\rightarrow}x'\}$~\cite{Glabbeek93}.
\begin{defn}\label{defn:simulations}
  Let $\alpha\colon X\to \Pow(\A\times X)$ and $\beta\colon Y\to \Pow(\A\times Y)$ be labelled transition systems, and let $r\colon\frel{X}{Y}$.
  \begin{enumerate}
    \item We say that $r$ is a \emph{simulation} if for every $x\,r\,y$ and every $(a,x')\in\alpha(x)$ there exists $y'\in Y$ such that $(a,y')\in\beta(y)$ and $x'\,r\,y'$.
    \item We say that $r$ is a \emph{complete simulation} if~$r$ is a simulation and for every $x\,r\,y$ we have $I(x) = \emptyset \iff I(y) = \emptyset$.
    \item We say that $r$ is a \emph{ready simulation} if~$r$ is a simulation and for every $x\,r\,y$ we have $I(x) = I(y)$.
  \end{enumerate}
\end{defn}
\noindent Now consider the following preorders on $\Pow(\A\times X)$, where for $U\in \Pow(\A\times X)$ and $a\in\A$, we put $U_a = \{x\in X \mid (a,x)\in U\}$:
\begin{gather*}
  U \lecomplete V \iff U = V = \emptyset \text{ or } \emptyset \neq U \subseteq V \\
  U \leready V \iff \forall a\in\A.\;U_a = V_a = \emptyset \text{ or } \emptyset \neq U_a \subseteq V_a
\end{gather*}
In addition to the preorders $\subseteq$, $\lecomplete$ and $\leready$, we also consider their converses $\supseteq$, $\geready$ and $\gecomplete$, which correspond to the respective converse simulations.

\begin{lem}\label{lem:simulations-conditions}
  For each choice of
  ${\le_{FX}} \in \{\subseteq, \lecomplete, \leready, \supseteq,
  \gecomplete, \geready\}$, the induced construction
  $F\colon\Set\to\PreOrd$ given by $FX=(\Pow(\A\times
  X),\le_{FX})$ has the following properties:
  \begin{enumerate}
  \item $F$ is functorial, preserves exact squares, and is cool.
    \item $\Box$ is well-behaved as a predicate lifting for~$F$.
  \end{enumerate}
\end{lem}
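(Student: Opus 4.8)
The plan is to prove the lemma in detail for the three forward preorders $\subseteq$, $\lecomplete$ and $\leready$, and then obtain the converses $\supseteq$, $\gecomplete$ and $\geready$ by a duality argument. The unifying observation I would exploit throughout is that each forward preorder refines $\subseteq$ (that is, $U\le_{FX}V$ always implies $U\subseteq V$) and differs from it only by demanding that emptiness be matched: globally for $\lecomplete$ and slice-wise for $\leready$, where I write $U_a=\{x\mid(a,x)\in U\}$ and use that $(Ff(U))_a=f[U_a]$, so that $(Ff(U))_a=\emptyset$ iff $U_a=\emptyset$. Functoriality of $F\colon\Set\to\PreOrd$ is then just monotonicity of each $Ff$: from $U\le_{FX}U'$ one gets $f[U_a]\subseteq f[U'_a]$ slice-wise, and the emptiness pattern is preserved because $f[U_a]=\emptyset$ iff $U_a=\emptyset$, so the defining clauses of $\lecomplete$ and $\leready$ are maintained.

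For preservation of exact squares I would invoke Lemma~\ref{lem:exact-via-pullback} and work with a pullback square~\eqref{eqn:diag-pxyz}, for which the required inequality $\revv{(Fg)}\cdot\grv{(Ff)}\vle\grv{(Fv)}\cdot\revv{(Fu)}$ unwinds (using $d\cdot d=d$ for the $\V$-categories involved) to the pointwise statement that $Ff(U)\le_{FZ}Fg(V)$ entails the existence of $R\in FP$ with $U\le_{FX}Fu(R)$ and $Fv(R)\le_{FY}V$. The key step is that the preorder on $Z$ forces the emptiness patterns of $U$ and $V$ to align: for $\leready$, $Ff(U)\leready Fg(V)$ gives $U_a=\emptyset\iff V_a=\emptyset$ for every $a$, and analogously $U=\emptyset\iff V=\emptyset$ for $\lecomplete$. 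I would then take the standard pullback witness $R=\{(a,(x,y_{a,x}))\mid(a,x)\in U\}$, where $P$ is realised as $\{(x,y)\mid f(x)=g(y)\}$ and, for each $(a,x)\in U$, $y_{a,x}$ is chosen with $(a,y_{a,x})\in V$ and $g(y_{a,x})=f(x)$ (possible precisely because $Ff(U)\le_{FZ}Fg(V)$). This yields $Fu(R)=U$ and $Fv(R)\subseteq V$, and the emptiness alignment guarantees that $(Fv(R))_a$ is a nonempty subset of $V_a$ exactly when $V_a\neq\emptyset$ and is empty otherwise, which is precisely $Fv(R)\le_{FY}V$ in the refined sense; the forward case $\subseteq$ is Example~\ref{expl:running-exact} specialised to this functor.

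Coolness (Definition~\ref{defn:functional-bisim}) requires both inequalities, since none of the six preorders is symmetric and Lemma~\ref{lem:cool-sym} does not apply. For $\subseteq$ both are Example~\ref{expl:running-cool}(1); for $\lecomplete$ and $\leready$ I would reuse the same witnesses, namely $U'=f^{-1}[V']$ for the first inequality $d_{FY}\cdot Ff\vle Ff\cdot d_{FX}$ and $U'=f^{-1}[V]\cap U_0$ for the second, and verify that they respect the refined preorders. Surjectivity of $f$ gives $Ff(f^{-1}[V'])=V'$, and because taking preimages and these intersections preserves the slice-wise emptiness pattern, the nonemptiness clauses of $\lecomplete$ and $\leready$ hold exactly where needed. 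For well-behavedness of $\Box$ (Definition~\ref{defn:well-behaved}), monotonicity and preservation of the unit are immediate from $\Box_X(f)(U)=\vMeet_{(a,x)\in U}f(x)$, and $\V$-normality I would verify in its Boolean form~\eqref{eqn:v-normal} by exhibiting, for $U_1\le_{FX}U_2$ with $U_1\vDash\Box_X(f)$ and $U_2\vDash\Box_X(g)$, the witness $t_3=U_1$ (the $\subseteq$-smaller of the two): every $(a,x)\in U_1$ satisfies $f$ by $U_1\vDash\Box_X(f)$ and, lying also in $U_2\supseteq U_1$, satisfies $g$ by $U_2\vDash\Box_X(g)$, whence $U_1\vDash\Box_X(f\land g)$, while $U_1\le_{FX}U_1\le_{FX}U_2$ holds by reflexivity.

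Finally, the converse preorders follow by duality. Reversing the $\V$-category structure $d\mapsto\rev d$ interchanges the two coolness inequalities of Definition~\ref{defn:functional-bisim} via the anti-involution $(-)^\circ$ on $\VRel$, so coolness for $\ge_{FX}$ is equivalent to coolness for $\le_{FX}$; and the exact-square inequality for $\ge_{FX}$ on a pullback square coincides with the one for $\le_{FX}$ on its transpose, which is again a pullback, so Lemma~\ref{lem:exact-via-pullback} transports preservation across. Well-behavedness of $\Box$ for the converses is checked the same way, now taking $t_3$ to be the $\subseteq$-smaller set again, which for $\ge_{FX}$ is $U_2$. I expect the main obstacle to be bookkeeping the emptiness clauses in the ready and complete cases: the naive coupling and coolness witnesses that work for $\subseteq$ could \emph{a priori} violate the ``nonempty'' disjuncts, and the argument only goes through once one observes that the hypothesis $Ff(U)\le_{FZ}Fg(V)$ (resp.\ $Ff(U)\le_{FY}V'$) itself forces the emptiness patterns on the two sides to coincide.
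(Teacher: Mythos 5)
Your proposal is correct, and its core constructions coincide with the paper's: the exact-square witness built by pairing each $(a,x)\in U$ with a matching $(a,y)\in V$ over the pullback, the $\V$-normality witness $U\cap V$ (which, as you note, is always the $\subseteq$-smaller of $U_1,U_2$, i.e.\ exactly your $t_3$), and the preimage-based coolness witnesses from Example~\ref{expl:running-cool}(1). Two points where you genuinely diverge, both to your advantage: first, the paper's appendix proof of this lemma silently skips coolness, implicitly leaning on Example~\ref{expl:running-cool}(1), which only treats $\subseteq$; your explicit check that the witnesses $U'=f^{-1}[V']$ and $U=f^{-1}[V]\cap U'$ also respect the emptiness clauses of $\lecomplete$ and $\leready$ fills a real gap, and your observation that Lemma~\ref{lem:cool-sym} is unavailable here is the right flag to raise. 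Second, where the paper disposes of the three converse orders with ``can be shown similarly,'' you give an actual duality argument: reversing the order swaps the two coolness inequalities under $(-)^\circ$, and transposing a pullback square converts the exact-square condition for $\ge_{FX}$ into the one for $\le_{FX}$; this halves the case analysis and is cleaner than redoing the pointwise checks. The one step worth double-checking in a write-up is the claim that $Ff(U)\le_{FZ}Fg(V)$ forces the emptiness patterns to align -- this is exactly the content of the paper's auxiliary Lemma~\ref{lem:lts-fun-preserves-empty} ($U_a=\emptyset\iff(Ff(U))_a=\emptyset$), which you correctly identify as the crux for the ready and complete cases.
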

\noindent By Theorem~\ref{thm:mw-lax}, we thus obtain a distributorial
Wasserstein extension $W^{\le_{FX}}_\Box$ for each choice of
${\le_{FX}} \in \{\subseteq, \lecomplete, \leready, \supseteq,
\gecomplete, \geready\}$. The induced notions of simulation capture
the intended standard concepts:
\begin{lem}\label{lem:characterize-simulations}
  For
  ${\le_{FX}} \in \{\subseteq, \lecomplete, \leready, \supseteq,
  \gecomplete, \geready\}$, $W^{\le_{FX}}_\Box$-simu\-lations are
  characterized as
  \begin{enumerate}
  \item Similarity for ${\le_{FX}}={\subseteq}$, and backwards similarity
    for $\le_{FX}=\supseteq$
  \item (Backwards) complete similarity for ${\le_{FX}}={\lecomplete}$
    (${\le_{FX}}={\gecomplete}$)
  \item (Backwards) ready similarity for ${\le_{FX}}={\leready}$
    (${\le_{FX}}={\geready}$)
  \end{enumerate}
\end{lem}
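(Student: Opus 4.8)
The plan is to reduce the fixpoint condition defining $W^{\le_{FX}}_\Box$-simulations to a pointful ``sandwich'' condition and then match it against Definition~\ref{defn:simulations} case by case (Lemma~\ref{lem:simulations-conditions} guarantees $W^{\le_{FX}}_\Box$ is a genuine distributorial lax extension, so the greatest simulation exists). By Definition~\ref{defn:L-simulation}, a relation $r\colon\frel{X}{Y}$ is a $W^{\le_{FX}}_\Box$-simulation iff $r\vle\rev{\beta}\cdot W^{\le_{FX}}_\Box r\cdot\alpha$, which in the Boolean quantale $\twoQ$ unfolds pointwise to: whenever $x\,r\,y$, then $W^{\le_{FX}}_\Box r(\alpha(x),\beta(y))=\vmaxel$. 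Using the representation $W^{\le_{FX}}_\Box r = d_{FY}\cdot\Wc_\Box r\cdot d_{FX}$ from Section~\ref{sec:wasserstein}, the identification of $\Wc_\Box r$ with the Egli--Milner relation $Lr$ (Example~\ref{expl:simulations-wasserstein-exact}), and $d_{FX}={\le_{FX}}$, this value is $\vmaxel$ exactly when there exist $U'\in\Pow(\A\times X)$ and $V'\in\Pow(\A\times Y)$ with $\alpha(x)\le_{FX}U'$, $U'\,Lr\,V'$ and $V'\le_{FY}\beta(y)$. Recording this equivalence is the first step; it is purely a matter of expanding relational composition in $\twoQ$.

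First I would treat ${\le_{FX}}={\subseteq}$. Here the sandwich is exactly $L^\subseteq r$ from Example~\ref{expl:simulations}, where surrounding $Lr$ by $\subseteq$ on both sides is observed to drop the back clause of the Egli--Milner definition. Concretely, for the reverse direction I take $U'=\alpha(x)$ and let $V'$ collect, for each $(a,x')\in\alpha(x)$, a witness $(a,y')\in\beta(y)$ with $x'\,r\,y'$ supplied by the forth condition; then $U'\,Lr\,V'$ holds with its back clause satisfied for free, $\alpha(x)\subseteq U'$ and $V'\subseteq\beta(y)$. For the forward direction, given $U'\supseteq\alpha(x)$ and $V'\subseteq\beta(y)$ with $U'\,Lr\,V'$, any $(a,x')\in\alpha(x)\subseteq U'$ has by the forth clause a partner $(a,y')\in V'\subseteq\beta(y)$ with $x'\,r\,y'$. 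Hence $W^{\subseteq}_\Box$-simulations are precisely simulations, and the greatest one is similarity; the case ${\le_{FX}}={\supseteq}$ is symmetric, using the back clause of $Lr$ in place of the forth clause, and yields backwards similarity.

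For the complete and ready cases the extra work is to track the sets of initial actions. Writing $I(U)=\{a\mid U_a\neq\emptyset\}$ (so $I(\alpha(x))=I(x)$), the key observations are that $U'\,Lr\,V'$ forces $I(U')=I(V')$ since both clauses match labels, that $\alpha(x)\lecomplete U'$ forces $\alpha(x)=\emptyset\iff U'=\emptyset$ while $\alpha(x)\leready U'$ forces $I(U')=I(x)$ label by label, and symmetrically on the side of $V'$ and $\beta(y)$. Chaining these through the sandwich yields, on top of the forth condition recovered exactly as in the $\subseteq$ case, the constraint $\alpha(x)=\emptyset\iff\beta(y)=\emptyset$, i.e.\ $I(x)=\emptyset\iff I(y)=\emptyset$, for $\lecomplete$, and $I(x)=I(y)$ for $\leready$. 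For the converse implications I reuse the $\subseteq$-witness construction with $U'=\alpha(x)$ and the witness set $V'$, checking in addition the nonemptiness (resp.\ per-label) conditions: here $I(V')=I(x)$ by construction, and the complete (resp.\ ready) hypothesis supplies precisely what is needed for $V'\lecomplete\beta(y)$ (resp.\ $V'\leready\beta(y)$). The backwards variants for $\gecomplete$ and $\geready$ are again symmetric.

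The hard part will be the bookkeeping in the ready case: one must verify that the witness set $V'$ built from the forth condition realises the same initial actions as $\alpha(x)$ on every label simultaneously, and that the per-label comparison $V'\leready\beta(y)$ survives. This relies on the forth witnesses being chosen with matching labels and on the ready hypothesis $I(x)=I(y)$ transporting emptiness of $V'_a$ to emptiness of $\beta(y)_a$ and back. Once this is in place, the remaining manipulations are routine expansions of relational composition in $\twoQ$.
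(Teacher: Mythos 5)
Your proposal is correct and follows essentially the same route as the paper: both unfold $W^{\le_{FX}}_\Box r(\alpha(x),\beta(y))=\vmaxel$ into the existence of a lax coupling and then match the forth condition and the initial-action bookkeeping against Definition~\ref{defn:simulations} case by case. The only difference is presentational — you phrase the coupling as a sandwich ${\le_{FY}}\cdot Lr\cdot{\le_{FX}}$ around an exact Egli--Milner pair $(U',V')$ (so you should include the one-line verification that $\Wc_\Box r=Lr$), whereas the paper works directly with the witness set $U\subseteq\A\times X\times Y$; these are interchangeable via $U'=F\pi_1(U)$, $V'=F\pi_2(U)$.
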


\subsection{Modal transition systems}\label{subsec:modal}

\noindent Modal transition systems~\cite{LarsenThomsen88} allow
specifying processes using two types of requirements, markings certain
transitions between states as \emph{necessary} or \emph{admissible},
respectively.  Technically, this is modelled through two labelled
transition relations on the same system, typically referred to as
\emph{must} and \emph{may}, where the latter subsumes the former.
These systems come with a dedicated notion of simulation, called
\emph{refinement}, where intuitively one system refines another if its
must and may transitions are more alike; the extreme case, where must
and may coincide, is referred to as an \emph{implementation}. More
recently, an extension of modal transition systems has been
proposed~\cite{BauerEA12} in which the so-called \emph{implementation
  labels} form a partially ordered set $(\A,\vle_\A)$:
\begin{defn}
  A \emph{label-structured modal transition system (LSMTS)} is a tuple $(X,
  \dashedrightarrow,\rightarrow)$, where $X$ is a set of states
  and ${\rightarrow}\subseteq{\dashedrightarrow}\subseteq X\times\A\times X$, where $\rightarrow$ is called the \emph{must} transition relation, and $\dashedrightarrow$ is called the \emph{may} transition relation.
  We write $x\overset{a}{\rightarrow}x'$ for $(x,a,x')\in{\rightarrow}$ and $x\overset{a}{\dashedrightarrow}x'$ for $(x,a,x')\in{\dashedrightarrow}$.
\end{defn}
\noindent Bauer et al.~\cite{BauerEA12} introduce a corresponding
concept of \emph{modal refinement} where mutation of the labels
along~$\vle_\A$ is allowed.  Given LSMTSs with state spaces $X$ and
$Y$, 
a relation
$r\subseteq X\times Y$ is a \emph{modal refinement relation} if
for each $x\,r\,y$ we have:
\begin{itemize}
  \item whenever $x\overset{a}{\dashedrightarrow} x'$, then there exists $y\overset{b}{\dashedrightarrow}y'$ such that $a\vle_\A b$ and $x'\,r\,y'$
  \item whenever $y\overset{b}{\rightarrow} y'$, then there exists $x\overset{a}{\rightarrow} x'$ such that $a\vle_\A b$ and $x'\,r\,y'$
\end{itemize}
LSMSTs and modal refinement can be modelled as an instance of the distributorial Wasserstein extension using the functor $F\colon\Set\to\PreOrd$ given by $|FX| = \{(U,V) \in \Pow(\A\times X)^2 \mid U \subseteq V\}$ and
\begin{align*}
  (U,V)\le_{FX}&(U',V') \iff \\
  & (\forall (a,x)\in V.\;\exists b\in\A.\;a\vle_\A b\land (b,x)\in V') \\
  \land\; & (\forall (b,x)\in U'.\;\exists a\in\A.\;a\vle_\A b\land (a,x)\in U).
\end{align*}
(The original definition includes initial states, which in our setting
just amounts to checking whether specific states are in the modal
refinement relation.) In the modelling, we use the predicate lifting
$\lambda$ given by
\begin{equation*}
  (U,V)\vDash\lambda_X(f) \iff \forall (a,x)\in V.\; x\vDash f,
\end{equation*}
akin to the box modality of the previous subsection.

\begin{lem}\label{lem:modal-ts-conditions}
  $F$ preserves exact squares and is cool and $\lambda$ is well-behaved.
\end{lem}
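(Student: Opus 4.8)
The plan is to verify the three assertions in turn, with the exact-square statement as the central difficulty. Throughout we are in the Boolean quantale $\twoQ$, so every $\V$-relation is read as an ordinary relation; I write an element of $FX$ as a pair $(U,V)$ with $U\subseteq V\subseteq\A\times X$, and a function $h$ acts by $Fh(U,V)=((\A\times h)[U],(\A\times h)[V])$. For \emph{well-behavedness of $\lambda$} (Definition~\ref{defn:well-behaved}), monotonicity and preservation of the unit are immediate from $(U,V)\vDash\lambda_X(f)\iff\forall(a,x)\in V.\,x\vDash f$. For $\V$-normality I use the Boolean unfolding~\eqref{eqn:v-normal}: given $t_1=(U_1,V_1)\vDash\lambda_X(f)$ with $t_1\le_{FX} t_2=(U_2,V_2)\vDash\lambda_X(g)$, I claim $t_3\coloneqq t_1$ works. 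Indeed $t_1\le_{FX} t_1\le_{FX} t_2$, and for $(a,x)\in V_1$ one has $x\vDash f$ from $t_1\vDash\lambda_X(f)$, while the may-clause of $t_1\le_{FX} t_2$ produces $b$ with $a\vle_\A b$ and $(b,x)\in V_2$, so $x\vDash g$ from $t_2\vDash\lambda_X(g)$; hence $t_1\vDash\lambda_X(f\land g)$. The point is that label mutation along $\vle_\A$ leaves the state component fixed.

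For \emph{coolness} (Definition~\ref{defn:functional-bisim}), unfolding the two inequalities in $\twoQ$ reduces the claim, for a surjection $f\colon X\to Y$, to: (a) $Ff(s)\le_{FY} t$ implies $s\le_{FX} s'$ and $Ff(s')=t$ for some $s'$; and (b) $t\le_{FY} Ff(s)$ implies $Ff(s')=t$ and $s'\le_{FX} s$ for some $s'$. For (a), writing $s=(U_s,V_s)$ and $t=(U_t,V_t)$, I set $V'\coloneqq(\A\times f)^{-1}[V_t]$; surjectivity gives $(\A\times f)[V']=V_t$, and the may-clause of $Ff(s)\le_{FY} t$ shows $V'$ covers $V_s$. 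For the must part I pick, for each $(b,y)\in U_t$, a witness $a\vle_\A b$ with $(a,x)\in U_s$ and $f(x)=y$ (supplied by the must-clause of $Ff(s)\le_{FY} t$), and collect the pairs $(b,x)$ into $U'$; then $(\A\times f)[U']=U_t$, the must-clause of $s\le_{FX} s'$ holds by choice of $a$, and $U'\subseteq V'$ since $(b,y)\in U_t\subseteq V_t$. Claim (b) is symmetric, again taking preimages under the surjection and maintaining $U'\subseteq V'$ by building the may-part around the must-part.

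For \emph{preservation of exact squares}, Lemma~\ref{lem:exact-via-pullback} lets me assume a pullback, so $P=\{(x,y)\mid f(x)=g(y)\}$ with $u,v$ the projections. In $\twoQ$ the inequality $\revv{(Fg)}\cdot\grv{(Ff)}\vle\grv{(Fv)}\cdot\revv{(Fu)}$ unfolds, using transitivity of the preorders, to: whenever $Ff(s)\le_{FZ} Fg(t)$ there is $p=(U_p,V_p)\in FP$ with $s\le_{FX} Fu(p)$ and $Fv(p)\le_{FY} t$. I construct $p$ by matching transitions across the pullback. For the may-part, the may-clause of $Ff(s)\le_{FZ} Fg(t)$ assigns to each $(a,x)\in V_s$ some $b$ with $a\vle_\A b$ and $(b,y)\in V_t$, $f(x)=g(y)$, so $(x,y)\in P$; I place $(a,(x,y))$ into $V_p$. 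Then $Fu$ recovers $(a,x)$, covering $V_s$ by reflexivity of $\vle_\A$, while $Fv$ yields $(a,y)$, covered by $(b,y)\in V_t$. For the must-part I run the mirror argument using the must-clause, assigning to each $(b,y)\in U_t$ a witness $(a,x)\in U_s$ with $a\vle_\A b$ and $f(x)=g(y)$, and placing $(a,(x,y))$ into $U_p$. Finally I enlarge $V_p$ to contain $U_p$, which preserves $U_p\subseteq V_p$ and, since the added images already lie below $V_t$ and above $V_s$, disturbs none of the clauses already checked; verifying the remaining two clauses is then routine.

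The main obstacle is this last step: it is the only one needing a genuine construction of a coupling in $FP$, and the bookkeeping is delicate because the coupling must simultaneously respect the nesting $U_p\subseteq V_p$, the \emph{opposite} directions in which the must- and may-clauses quantify (may upward from the source, must downward into it), and the label order $\vle_\A$ — which forces one to carry the \emph{original} labels from $V_s$ and $U_t$ rather than the mutated labels witnessing membership in $V_t$ and $(\A\times f)[U_s]$. The coolness argument has the same flavour but is easier, since a single surjection replaces the pullback and no label mutation is introduced, and well-behavedness is entirely routine.
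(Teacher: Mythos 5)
Your proposal is correct and takes essentially the same approach as the paper: a coupling in $FP$ obtained by matching may-transitions forward from $V_s$ and must-transitions backward from $U_t$, preimages along the surjection for coolness, and direct verification for the predicate lifting. The only local differences are that your $\V$-normality witness $t_3=t_1$ is simpler than the paper's (which takes $U_3,V_3$ to consist of those elements of $U_1,V_1$ that have a $\vle_\A$-larger match in $U_2,V_2$), and that your selection-based coupling requires the explicit enlargement of $V_p$ by $U_p$ (which you justify correctly), whereas the paper's comprehension-style definition yields $U\subseteq V$ for free.
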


\begin{lem}\label{lem:characterize-modal-ts}
  $W_\lambda$-simulations coincide with modal refinement relations.
\end{lem}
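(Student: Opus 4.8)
The plan is to work throughout in the Boolean quantale $\twoQ$, unfold the definition of $W_\lambda$-simulation into a concrete lax-coupling condition, and then match that condition clause by clause with modal refinement. I write $\alpha(x) = (U_x,V_x)$ and $\beta(y) = (U'_y,V'_y)$ for the coalgebra maps of the two LSMTSs, so that $U$ is the must- and $V$ the may-part.

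First I would unfold the simulation condition. By Definition~\ref{defn:L-simulation}, $r\vle\rev{\beta}\cdot W_\lambda r\cdot\alpha$ reduces to: for every $x\,r\,y$, $W_\lambda r(\alpha(x),\beta(y)) = \vmaxel$. Using the pointfree form $W_\lambda r = \grv{(F\pi_2)}\cdot\lambdadiag_{X\times Y}(r)\cdot\revv{(F\pi_1)}$ from Definition~\ref{defn:w} and the fact that $\lambdadiag$ forces equality of its two arguments, this unfolds in $\twoQ$ to the existence of $t = (W,Z)\in F(X\times Y)$ with $W\subseteq Z$ such that (i) $\alpha(x)\le_{FX}F\pi_1(t)$, (ii) $F\pi_2(t)\le_{FY}\beta(y)$, and (iii) $t\vDash\lambda_{X\times Y}(r)$, i.e.\ $x'\,r\,y'$ for all $(a,(x',y'))\in Z$. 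Recalling that $F$ acts on a projection by taking images on both components, I would spell out (i) and (ii) via the explicit preorder, obtaining four clauses: two \emph{may} clauses relating $V_x$, $V'_y$ and $Z$, and two \emph{must} clauses relating $U_x$, $U'_y$ and $W$.

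Then I would prove both implications. For $W_\lambda$-simulation $\Rightarrow$ modal refinement I fix $x\,r\,y$ with witness $t=(W,Z)$ and chase: from a may-transition $(a,x')\in V_x$, the may-clause of (i) yields $b_1$ with $a\vle_\A b_1$ and $(b_1,(x',y'))\in Z$ for some $y'$; clause (iii) gives $x'\,r\,y'$, and the may-clause of (ii) promotes the label to some $b_2$ with $b_1\vle_\A b_2$ and $(b_2,y')\in V'_y$, so transitivity of $\vle_\A$ yields the required may-refinement step. The must case is symmetric: from $(b,y')\in U'_y$ the must-clause of (ii) gives $(a_1,(x',y'))\in W\subseteq Z$, whence $x'\,r\,y'$ by (iii), and the must-clause of (i) pulls the label back down to some $a_2$ with $a_2\vle_\A a_1\vle_\A b$ and $(a_2,x')\in U_x$. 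For the converse I construct the witness explicitly: into $Z$ I place, for each $(a,x')\in V_x$, a pair $(b,(x',y'))$ supplied by the may-condition of modal refinement (so $a\vle_\A b$, $(b,y')\in V'_y$, $x'\,r\,y'$); into $W$ I place, for each $(b,y')\in U'_y$, a pair $(a,(x',y'))$ supplied by the must-condition (so $a\vle_\A b$, $(a,x')\in U_x$, $x'\,r\,y'$); and finally I enlarge $Z$ by $W$ to guarantee $W\subseteq Z$. I then verify the four preorder clauses and the box condition (iii) against this construction.

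The main obstacle I expect is the bookkeeping in the converse: one must choose the coupling labels carefully -- keeping may-pairs at the larger label $b$ and must-pairs at the smaller label $a$ -- so that all four clauses hold at once, and one must check that adjoining $W$ to $Z$ breaks neither the may-clauses (it only enlarges $\pi_1[W],\pi_1[Z]$, and the added $Z$-pairs still project into $V'_y$ via $U'_y\subseteq V'_y$) nor the box condition (the added pairs are $r$-related by construction). Beyond this, only reflexivity and transitivity of $\vle_\A$ are used, so antisymmetry of the label order plays no role.
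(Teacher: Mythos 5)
Your proposal is correct and follows essentially the same route as the paper: unfold the lax-coupling condition for $\alpha(x)$ and $\beta(y)$ into the four preorder clauses plus the box condition, chase labels through them (using transitivity of $\vle_\A$) for one direction, and build an explicit witness $(W,Z)\in F(X\times Y)$ from the refinement clauses for the other. Your witness differs from the paper's only in bookkeeping details (which label each coupling pair carries, and explicitly adjoining $W$ to $Z$ and restricting to $r$-related pairs), and your verification of the four clauses goes through as you describe.
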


\subsection{Metric streams}\label{subsec:streams}

\noindent In the real-valued case $\V=\zeroinfQ$, fix a hemimetric space $(\A,d_\A)$ of labels and put $FX = (\A\times X, d_\A\times\Delta_X)$, thus treating~$X$ as a discrete metric space.
$F$-coalgebras correspond to \emph{metric streams}, in the sense that each state in a coalgebra gives rise to an infinite sequence of labels from the hemimetric space $(\A,d_\A)$.
We consider the predicate lifting $\lambda$ given by $\lambda_X(f)(a,x) = f(x)$ for $f\colon X\to\V$, $a\in\A$ and $x\in X$.
Then:
\begin{lem}\label{lem:metric-streams-conditions}
  $F$ preserves exact squares and is cool and $\lambda$ is well-behaved.
\end{lem}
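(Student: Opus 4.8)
The plan is to treat the three assertions separately; all become routine once we record the one structural feature of $FX=\A\times X$ that drives everything. Recall $|F|=\A\times(-)$ acts by $h\mapsto\bigl((a,x)\mapsto(a,h(x))\bigr)$, and that we work in $\V=\zeroinfQ$, so that $\vle$ reads as $\ge$, joins as infima and $\vtimes$ as $+$. Since the factor $\Delta_X$ takes only the values $\vunit=0$ and $\vminel=\infty$, any reasonable reading of the product $d_\A\times\Delta_X$ (formed either with $\vmeet$ or with $\vtimes$) yields the \emph{same} metric, namely
\[
  d_{FX}((a,x),(a',x')) =
  \begin{cases}
    d_\A(a,a') & \text{if } x=x',\\
    \infty & \text{if } x\neq x'.
  \end{cases}
\]
Thus $d_{FX}$ is finite only between points sharing an $X$-component, on which it reduces to $d_\A$. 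This observation collapses essentially every case distinction below.

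For exactness I would invoke Lemma~\ref{lem:exact-via-pullback} and verify $\revv{(Fg)}\cdot\grv{(Ff)}\vle\grv{(Fv)}\cdot\revv{(Fu)}$ only for the canonical pullback $P=\{(x,y)\mid f(x)=g(y)\}$ of a cospan $f\colon X\to Z$, $g\colon Y\to Z$. Pointwise, for $s=(a,x)\in FX$ and $t=(b,y)\in FY$, this amounts to
\[
  d_{FZ}(Ff(s),Fg(t)) \ge \inf_{p\in FP}\bigl(d_{FX}(s,Fu(p))+d_{FY}(Fv(p),t)\bigr).
\]
If $f(x)\neq g(y)$ the left-hand side is $\infty$ and there is nothing to show. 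If $f(x)=g(y)$, the pullback supplies $p_0=(x,y)\in P$ with $u(p_0)=x$ and $v(p_0)=y$; choosing $p=(a,p_0)\in FP=\A\times P$ yields $Fu(p)=s$ and $Fv(p)=(a,y)$, so the bracket equals $d_\A(a,b)$, which is exactly the left-hand side. Hence the infimum is bounded above by $d_{FZ}(Ff(s),Fg(t))$, as required.

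For coolness, note that since $d_\A$ is only a hemimetric the structures $d_{FX}$ are in general asymmetric, so Lemma~\ref{lem:cool-sym} does not apply and both inequalities of Definition~\ref{defn:functional-bisim} must be checked directly. I would do each by a short pointwise computation; e.g.\ $d_{FY}\cdot Ff\vle Ff\cdot d_{FX}$ unfolds to $d_{FY}(Ff(s),t')\ge\inf_{s'\colon Ff(s')=t'}d_{FX}(s,s')$, and evaluating both sides with the above description of $d_{FX}$ shows they are in fact equal: the matching of $X$-components forced by $\Delta_Y$ on the left corresponds precisely to the existence of a preimage $s'$ with the same $X$-component on the right. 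The second inequality is symmetric. The argument works for arbitrary $f$, hence a fortiori for surjective $f$, and the converse inequalities hold automatically by Remark~\ref{rem:cool-equal}.

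Finally, for well-behavedness of $\lambda_X(f)(a,x)=f(x)$, monotonicity and preservation of the unit are immediate, since $\lambda_X(f)$ is plain evaluation and $\lambda_X(\vunit_X)=\vunit_{FX}$. The only real content is $\V$-subadditivity~\eqref{eqn:v-subadd}, which I would establish via Remark~\ref{rem:v-subadd-simpler}. Writing $t_1=(a_1,x_1)$ and $t_2=(a_2,x_2)$: if $x_1\neq x_2$ then $d_{FX}(t_1,t_2)=\infty$, so the right-hand side of~\eqref{eqn:v-subadd} is already $\infty$ and the inequality is trivial; if $x_1=x_2=x$, then taking $t_3=t_1$ gives $d_{FX}(t_1,t_3)+d_{FX}(t_3,t_2)=d_{FX}(t_1,t_2)$ together with $\lambda_X(f+g)(t_3)=f(x)+g(x)=\lambda_X(f)(t_1)+\lambda_X(g)(t_2)$, so~\eqref{eqn:v-subadd} holds with equality. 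The only point requiring any care is this last case split: the convenient witness $t_3=t_1$ meets the simplified criterion of Remark~\ref{rem:v-subadd-simpler} exactly when the $X$-components agree, and one must fall back on the $d_{FX}(t_1,t_2)=\infty$ argument otherwise. Given how directly the discrete factor $\Delta_X$ trivialises the mismatched cases, I expect no genuine obstacle, only this modest bookkeeping of asymmetry and cases.
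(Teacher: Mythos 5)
Your proposal is correct and follows essentially the same route as the paper: in each of the three parts, split on whether the discrete ($X$-)components match — the mismatched case trivialises because the relevant side is $\infty$ — and otherwise exhibit the obvious witness ($p=(a,(x,y))$ for exactness, $(b,x)$ for coolness, $t_3=t_1$ versus the paper's $(c,x)$ for $\V$-subadditivity). The extra observations you add (that $\max$ versus $+$ in $d_\A\times\Delta_X$ is immaterial, and that Lemma~\ref{lem:cool-sym} is unavailable since $d_\A$ is only a hemimetric so both coolness inequalities need checking) are accurate but do not change the argument.
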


\noindent The corresponding distributorial Wasserstein extension computes the (Manhattan) tensor of $d_\A$ with a given relation, which is given by
\begin{equation*}
  (d_\A\boxplus r)((a,x),(b,y)) = d_\A(a,b) + r(x,y).
\end{equation*}

\begin{lem}\label{lem:characterize-metric-streams}
  For every fuzzy relation $r$, $W_\lambda r = d_\A \boxplus r$.
\end{lem}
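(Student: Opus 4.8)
The plan is to unfold the pointful, $\epsilon$-coupling description of the distributorial Wasserstein extension and compute the resulting infimum directly. Fix $t_1 = (a,x)\in FX$ and $t_2 = (b,y)\in FY$. From Definition~\ref{defn:w} and the $\epsilon$-coupling characterization stated just before Theorem~\ref{thm:mw-lax}, and using that for a fixed coupling candidate $t$ the smallest admissible $\epsilon$ is exactly $d_{FX}(t_1, F\pi_1(t)) + d_{FY}(F\pi_2(t), t_2)$, I would write
\[
  W_\lambda r(t_1,t_2) = \inf_{t\in F(X\times Y)} \bigl( d_{FX}(t_1, F\pi_1(t)) + d_{FY}(F\pi_2(t), t_2) + \lambda_{X\times Y}(r)(t) \bigr).
\]
A generic element of $F(X\times Y) = (\A\times(X\times Y), d_\A\times\Delta_{X\times Y})$ has the form $t = (c,(x',y'))$, with $F\pi_1(t) = (c,x')$ and $F\pi_2(t) = (c,y')$.

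Next I would plug in the concrete data. By definition of the predicate lifting, $\lambda_{X\times Y}(r)(t) = r(x',y')$, while the two mismatch terms expand, using $d_{FX} = d_\A\times\Delta_X$, to $d_\A(a,c)+\Delta_X(x,x')$ and $d_\A(c,b)+\Delta_Y(y',y)$ respectively. The key observation is that $\Delta_X$ and $\Delta_Y$ are the discrete metrics, taking value $\infty$ off the diagonal; hence any coupling with $x'\neq x$ or $y'\neq y$ contributes $\infty$ and can be discarded. The infimum is therefore attained over couplings of the form $t = (c,(x,y))$ with $c\in\A$ arbitrary, and it reduces to
\[
  W_\lambda r(t_1,t_2) = \Bigl( \inf_{c\in\A} \bigl( d_\A(a,c) + d_\A(c,b) \bigr) \Bigr) + r(x,y).
\]

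It then remains to evaluate the inner infimum. By the triangle inequality of the hemimetric $d_\A$ we have $d_\A(a,b) \le d_\A(a,c)+d_\A(c,b)$ for every $c$, and taking $c = b$ together with reflexivity $d_\A(b,b) = 0$ shows the bound is tight, so $\inf_{c\in\A}(d_\A(a,c)+d_\A(c,b)) = d_\A(a,b)$. Substituting yields $W_\lambda r((a,x),(b,y)) = d_\A(a,b) + r(x,y) = (d_\A\boxplus r)((a,x),(b,y))$, as claimed. Since the whole derivation is an elementary computation, I do not expect a genuine obstacle; the only point demanding care is the asymmetric setting, where one must invoke the triangle inequality in the correct order $d_\A(a,b)\le d_\A(a,c)+d_\A(c,b)$ rather than appeal to symmetry, and confirm that the slack introduced by the lax coupling is absorbed entirely by the label component $d_\A$ while the discrete components $\Delta_X,\Delta_Y$ pin down the $X$- and $Y$-entries of the coupling on the nose.
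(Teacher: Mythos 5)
Your proposal is correct and follows essentially the same route as the paper's proof, which likewise expands the definition, discards the off-diagonal (infinite) contributions of the discrete components $\Delta_X,\Delta_Y$ to reduce to $\inf_{c\in\A} d_\A(a,c) + r(x,y) + d_\A(c,b)$, and evaluates this via the triangle inequality and reflexivity of $d_\A$. (Your writing of $d_\A\times\Delta_X$ as a sum rather than the max used in the paper's appendix is harmless here, since $\Delta_X$ takes values in $\{0,\infty\}$.)
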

\begin{proof}
  Expanding definitions and disregarding terms evaluating to $\infty$,
  \begin{align*}
    W_\lambda r((a,x),(b,y)) &= \inf_{c\in\A} d_\A(a,c) + r(x,y) + d_\A(c,b) \\
    &= d_\A(a,b) + r(x,y). \qedhere
  \end{align*}
\end{proof}

\subsection{Monoid-valued functors}\label{subsec:monoids}

\noindent Fix a monoid $(M,+,0)$ equipped with a hemimetric $d_M$ and
consider the \emph{monoid-valued functor} $|F| = M^{(-)}$, which maps
a set $X$ to the set $M^{(X)}$ of \emph{finitely supported} maps
$\mu\colon X\to M$, i.e.\ $\mu(x) = 0$ for all but finitely many
$x\in X$.  On functions $f\colon X\to Y$, $M^{(-)}$ acts as
$M^{(f)}(\mu)(y) = \sum_{f(x)=y} \mu(x)$.

We extend $M^{(-)}$ to a functor $F\colon\Set\to\HMet$ by equipping
each set $FX$ with the \emph{total variation distance} $\dtv$, where
\begin{equation*}
  \dtv(\mu,\nu) = \sup_{U\subseteq X} d_M(\mu(U),\nu(U)).
\end{equation*}
We consider two specific instances of this construction: the \emph{bag functor} $\Bag$, which corresponds to the monoid $M = \Nat$ of natural numbers, and the \emph{finite measure functor}, which corresponds to the monoid $M = \Real^+$ of nonnegative real numbers.
In both cases we use addition as the monoid structure and \emph{truncated subtraction} $d_M(x,y) = \max(0, y-x)$ as the hemimetric.
One immediate consequence is that in both cases, total variation distance simplifies to
\begin{equation*}
  \dtv(\mu,\nu) = \textstyle\sum_{x\in X} d_M(\mu(x),\nu(x)).
\end{equation*}

\begin{lem}\label{lem:distrib-exact}
  For both choices of monoid above, the functor $F\colon\Set\to\HMet$ preserves exact squares and is cool.
\end{lem}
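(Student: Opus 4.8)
The plan is to reduce both assertions to elementary fiberwise mass-transport constructions, exploiting the simplified formula $\dtv(\mu,\nu) = \sum_x d_M(\mu(x),\nu(x))$ with $d_M(a,b) = (b-a)^+$ that holds for both monoids. Since $\Nat$ and $\Real^+$ behave identically for this purpose — the only extra care in the bag case being that all transported masses can be chosen integer-valued, which the constructions below respect — I would treat them uniformly.

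For preservation of exact squares I would invoke Lemma~\ref{lem:exact-via-pullback} and work with a pullback square~\eqref{eqn:diag-pxyz}, $P = \{(x,y) \mid f(x) = g(y)\}$ with projections $u, v$. Expanding the pointfree expressions, $\revv{(Fg)}\cdot\grv{(Ff)}$ evaluated at $(\mu,\nu) \in FX \times FY$ collapses (using that $d_{FZ}$ is a $\V$-category) to $d_{FZ}(Ff(\mu), Fg(\nu))$, while $\grv{(Fv)}\cdot\revv{(Fu)}$ at $(\mu,\nu)$ is $\inf_{\rho \in FP} d_{FX}(\mu, Fu(\rho)) + d_{FY}(Fv(\rho),\nu)$. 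In $\zeroinfQ$ the required inequality $\vle$ thus amounts to exhibiting, for each $\mu,\nu$, a single coupling $\rho \in FP$ with $d_{FX}(\mu, Fu(\rho)) + d_{FY}(Fv(\rho),\nu) \le d_{FZ}(Ff(\mu),Fg(\nu)) = \sum_z (b_z - a_z)^+$, where $a_z = \sum_{f(x)=z}\mu(x)$ and $b_z = \sum_{g(y)=z}\nu(y)$. The key structural fact is that, for a pullback, the fiber of $P$ over $z$ is exactly $f^{-1}[z] \times g^{-1}[z]$, so within each $z$ we have complete freedom to couple the $x$-masses against the $y$-masses. I would build $\rho$ fiber by fiber: when $b_z \le a_z$, transport the $y$-demands $\nu$ into the $x$-supplies $\mu$ without exceeding any $\mu(x)$, contributing $0$; when $b_z > a_z$, use up all of $\mu$ and leave $b_z - a_z$ demand unmet, contributing exactly $(b_z - a_z)^+$. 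Summing over the finitely many nonzero fibers gives the bound, and finiteness (and integrality) of $\rho$ is automatic.

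For coolness I would verify both inequalities of Definition~\ref{defn:functional-bisim} directly, since $d_M$ is asymmetric and so Lemma~\ref{lem:cool-sym} does not apply (their converses hold automatically by Remark~\ref{rem:cool-equal}). Fix a surjection $f \colon X \to Y$. Expanding pointfree, the first inequality $d_{FY}\cdot Ff \vle Ff\cdot d_{FX}$ reduces to: for all $\mu \in FX$ and $\nu \in FY$ there is $\mu' \in FX$ with $Ff(\mu') = \nu$ and $d_{FX}(\mu,\mu') \le d_{FY}(Ff(\mu),\nu)$; the second inequality is the mirror image, with the distance arguments swapped. Both are handled fiberwise over $Y$: on each fiber $f^{-1}[y]$ I would realize the target mass $\nu(y)$ by either removing mass from $\mu$ (when $\nu(y)$ lies below the current fiber mass) or adding mass (otherwise), choosing the one-sided adjustment that matches the relevant truncated-subtraction term. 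Here surjectivity is exactly what guarantees each fiber $f^{-1}[y]$ is nonempty, so a prescribed image mass can always be realized; summing the per-fiber contributions yields the claimed bound.

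The main obstacle, and the only real content, is the transport bookkeeping in each case: checking that the fiberwise adjustments are jointly feasible and that their contributions telescope to the one-sided total-variation distance on the base. Everything else is structural — the pullback-fiber-is-a-product identity for exact squares, and nonemptiness of fibers under surjectivity for coolness — and the asymmetry of $d_M$ merely doubles the work by forcing separate (but symmetric) arguments for the two inequalities in the definition of coolness.
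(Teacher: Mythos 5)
Your proposal is correct and follows essentially the same route as the paper's proof: both reduce to pullback squares, build the lax coupling by greedy fiberwise mass transport (your per-fiber case split on $b_z\le a_z$ versus $b_z>a_z$ is exactly the paper's intermediate measure $\sigma(z)=\min(Ff(\mu)(z),Fg(\nu)(z))$ in disguise), and establish coolness by correcting $\mu$ up or down on each fiber $f^{-1}[y]$, with surjectivity ensuring nonempty fibers. The only cosmetic difference is that the paper factors the sum-splitting bookkeeping into a small auxiliary lemma, which you instead carry out inline.
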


As predicate lifting we use the \emph{expected value modality} $\expect{}$, which is defined by
\begin{equation}\label{eqn:dtv-simpler}
  \expect{X}(\mu)(f) = \textstyle\sum_{x\in X} \mu(x)\cdot f(x).
\end{equation}

\begin{lem}\label{lem:expect-well-behaved}
  The predicate lifting $\expect{}$ is well-behaved.
\end{lem}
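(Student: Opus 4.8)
The plan is to verify the three clauses of Definition~\ref{defn:well-behaved} for $\lambda = \expect{}$ in the quantale $\V = \zeroinfQ$, where the unit is $\vunit = 0$ and $a \vle b$ means $a \ge b$ in the reals. The first two clauses are routine. For monotonicity, I would note that $\expect{X}(\mu)(f) = \sum_{x\in X}\mu(x)\cdot f(x)$ is a sum with nonnegative coefficients $\mu(x)$, so $f \vle g$ (i.e.\ $f \ge g$ pointwise in the reals) forces $\expect{X}(\mu)(f) \ge \expect{X}(\mu)(g)$, which is exactly $\lambda_X(f) \vle \lambda_X(g)$. For preservation of unit, I would simply compute $\expect{X}(\mu)(\vunit_X) = \sum_{x\in X}\mu(x)\cdot 0 = 0 = \vunit$, so that $\lambda_X(\vunit_X) = \vunit_{FX}$ and the required inequality holds with equality.

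The substance of the lemma is $\V$-subadditivity, i.e.\ condition~\eqref{eqn:v-subadd}. Here I would appeal to Remark~\ref{rem:v-subadd-simpler}: it suffices, for $\mu,\nu\in M^{(X)}$ and $f,g\colon X\to[0,\infty]$, to exhibit a single witness $\rho\in M^{(X)}$ satisfying both $\dtv(\mu,\rho)+\dtv(\rho,\nu) = \dtv(\mu,\nu)$ and $\expect{X}(\rho)(f+g) \le \expect{X}(\mu)(f) + \expect{X}(\nu)(g)$. Recalling that in both cases $\dtv(\mu,\nu) = \sum_{x} d_M(\mu(x),\nu(x))$ with $d_M(a,b) = \max(0,b-a)$, the natural candidate is the pointwise minimum $\rho(x) \coloneqq \min(\mu(x),\nu(x))$; this lies in $M^{(X)}$ because $\Nat$ and $\Real^+$ are meet-semilattices and because $\supp(\rho)\subseteq\supp(\mu)\cap\supp(\nu)$ is finite.

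With this choice, both required properties reduce to coordinatewise checks. For the distance equation, since $\dtv$ sums truncated subtractions coordinatewise it is enough to observe that $\rho(x)$ lies between $\mu(x)$ and $\nu(x)$, whence $\max(0,\rho(x)-\mu(x)) + \max(0,\nu(x)-\rho(x)) = \max(0,\nu(x)-\mu(x))$ for every $x$; summing over $x$ gives the triangle equality. For the predicate inequality, I would use that $\rho(x)\le\mu(x)$ and $\rho(x)\le\nu(x)$ together with $f,g\ge 0$ yield $\rho(x)f(x)\le\mu(x)f(x)$ and $\rho(x)g(x)\le\nu(x)g(x)$; adding these over $x$ delivers $\expect{X}(\rho)(f+g)\le\expect{X}(\mu)(f)+\expect{X}(\nu)(g)$.

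I expect the only genuinely delicate point to be the choice of witness $\rho$: the triangle equality for the asymmetric distance $\dtv$ forces $\rho$ to lie pointwise between $\mu$ and $\nu$, and among all such choices it is precisely the pointwise minimum that keeps $\expect{X}(\rho)$ small enough for the predicate inequality to go through, exploiting the nonnegativity of $f$ and $g$. Once this witness is identified everything else is a pointwise computation, and nothing beyond $\Nat$ and $\Real^+$ being nonnegative meet-semilattices equipped with the truncated-subtraction hemimetric is used, so the argument applies uniformly to both monoids under consideration.
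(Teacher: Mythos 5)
Your proposal is correct and follows essentially the same route as the paper's proof: both invoke Remark~\ref{rem:v-subadd-simpler} and take the pointwise minimum $\rho(x)=\min(\mu(x),\nu(x))$ as the witness, checking the triangle equality for $\dtv$ and the expectation inequality coordinatewise. The additional detail you supply (finiteness of $\supp(\rho)$, the explicit coordinatewise computations) is consistent with what the paper leaves implicit.
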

\begin{proof}
  $\expect{}$ is clearly monotonic and preserves the zero function; we show $\V$-subadditivity.
  Let $X$ be a set, let $\mu,\nu\in FX$ and let $f,g\colon X\to[0,\infty]$.
  We follow the recipe laid out in Remark~\ref{rem:v-subadd-simpler} and define $\rho\in FX$ by $\rho(x) = \min(\mu(x),\nu(x))$ for each $x\in X$.
  According to~\eqref{eqn:dtv-simpler}, we have $\dtv(\mu,\nu) = \dtv(\mu,\rho) + \dtv(\rho,\nu)$ and by definition of $\rho$ and linearity of expectation we also have
  \begin{equation*}
    \expect{\rho}(f+g) = \expect{\rho}(f) + \expect{\rho}(g) \le \expect{\mu}(f) + \expect{\nu}(g). \qedhere
  \end{equation*}
\end{proof}

The usual Wasserstein metric, which arises from the lax extension
$\Wc_{\expect{}}$, measures distances between probability measures, or
more generally between measures of equal total mass. From
Lemmas~\ref{lem:distrib-exact} and~\ref{lem:expect-well-behaved}, we
obtain a \emph{distributorial Wasserstein extension $W_{\expect{}}$
  that assigns non-trivial distances also to measures of different
  total mass}.

Already Kantorovich~\cite{Kantorovich39} observed that the computation
of the distance between two measures can be viewed as the optimization
of a \emph{transportation problem} of the mass from the first measure
to the second. In recent years, the problem of defining distances
between measures of potentially non-equal total mass has received a
considerable amount of attention in the artificial intelligence and
data science communities under the name of \emph{unbalanced optimal
  transport}~\cite{LieroEA18,ChizatEA18a,ChizatEA18b}.  In this
context, such distances are used as loss functions between weighted
samples as they arise in applications such as image
processing~\cite{RubnerEA97,LeeBR20} or the analysis of machine
learning methods~\cite{ChizatB18}.  Piccoli and
Rossi~\cite{PiccoliRossi14,PiccoliRossi16} define a generalized
Wasserstein distance for Borel measures which, applied to finite
measures, exactly coincides with our above-mentioned distributorial
Wasserstein extension $W_{\expect{}}$.  More recent
approaches~\cite{LieroEA18} replace the total variation distance by
more general penalizing functionals and cannot be covered by the
current framework, which requires at least a hemimetric structure.
Capturing these more general distances in coalgebraic generality is an
interesting direction for future work.

\subsection{Metric labelled Markov chains}\label{subsec:mlmc}

\noindent Continuing  Example~\ref{expl:mlmc}, we work with the functor $FX = (\dfun(\A\times X), \Kant(d_\A\times \Delta_X))$ that applies the Kantorovich/Wasserstein lifting to probability measures on $\A\times X$ while treating $X$ as a discrete metric space.
Viewed in terms of transportation plans (see the discussion in the previous subsection), $d_{FX}$ measures the distance between such probability measures by only considering plans where goods are transported along the $\A$ dimension, with probability mass remaining constant inside each slice.

To combine the distance on $FX$ with a given distance on a set $X$, or more generally with a fuzzy relation $r\colon\frel{X}{Y}$ modelling the distances from set $X$ to set $Y$, we consider the distributorial Wasserstein extension for the predicate lifting $\lambda$ given by
\begin{equation*}
  \lambda_X(f)(\mu)
  \coloneqq \expect{\dfun\pi_2(\mu)}(f)
  = \;\;\sum_{\mathclap{(a,x)\in\A\times X}}\;\; \mu(a,x)\cdot f(x).
\end{equation*}

\begin{lem}\label{lem:mlmc-conditions}
  $F$ preserves exact squares and is cool,
  and $\lambda$ is well-behaved.
\end{lem}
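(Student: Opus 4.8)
The plan is to verify the three requirements separately, exploiting one structural observation that drives all of them: since $d_{FX} = \Kant(d_\A\times\Delta_X)$ only permits transport of probability mass within the slices cut out by the $X$-coordinate (the discrete metric $\Delta_X$ forbids any move that changes $x$), the distance $d_{FX}(\mu,\nu)$ is finite precisely when $\mu$ and $\nu$ share the same $X$-marginal $\dfun\pi_2(\mu)=\dfun\pi_2(\nu)$, and in that case equals the optimal cost of redistributing mass along the $\A$-coordinate. Note in particular that $\lambda_X(f)(\mu)=\expect{\dfun\pi_2(\mu)}(f)$ depends on $\mu$ only through its $X$-marginal.

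Well-behavedness of $\lambda$ is the easiest part. Monotonicity and preservation of the unit are immediate from the definition and linearity of expectation. For $\V$-subadditivity I would check~\eqref{eqn:v-subadd} by cases: if $d_{FX}(t_1,t_2)=\infty$ the right-hand side is $\infty$ and nothing is to show; if $d_{FX}(t_1,t_2)<\infty$ then $t_1$ and $t_2$ have equal $X$-marginals, so $\lambda_X(g)(t_1)=\lambda_X(g)(t_2)$, and taking $t_3=t_1$ in the infimum together with $\lambda_X(f+g)(t_1)=\lambda_X(f)(t_1)+\lambda_X(g)(t_1)$ yields the inequality. This mirrors the argument for $\expect{}$ in Lemma~\ref{lem:expect-well-behaved}.

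For coolness, since $d_\A$ and $\Delta_X$ are symmetric so is their product, and $\Kant$ preserves symmetry, so every $d_{FX}$ is symmetric and by Lemma~\ref{lem:cool-sym} it suffices to establish $d_{FY}\cdot Ff\vle Ff\cdot d_{FX}$ for surjective $f$. Pointwise this asks, given $\mu\in FX$ and $\nu\in FY$, for some $\mu'\in FX$ with $Ff(\mu')=\nu$ and $d_{FX}(\mu,\mu')\vle d_{FY}(Ff(\mu),\nu)$. Assuming the latter finite, I fix an optimal coupling $\gamma$ witnessing $d_{FY}(Ff(\mu),\nu)$, which is supported on pairs $((a,y),(b,y))$ of matching $Y$-coordinate, and then \emph{disintegrate} $\mu$ over this transport: for each $(a,x)$ with $f(x)=y$ I split its mass among the targets $(b,x)$ in the same proportions by which $\gamma$ moves the mass of $\xi=Ff(\mu)$ at $(a,y)$ to $(b,y)$. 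Summing over the fibres $f^{-1}(y)$ shows that the resulting coupling has cost exactly that of $\gamma$, that its second marginal $\mu'$ satisfies $Ff(\mu')=\nu$, and that, as it keeps $x$ fixed, it witnesses $d_{FX}(\mu,\mu')\vle d_{FY}(Ff(\mu),\nu)$.

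Exactness is the crux and is where I expect the main difficulty. Using Lemma~\ref{lem:exact-via-pullback} I reduce to a pullback square~\eqref{eqn:diag-pxyz}, where the goal becomes $d_{FZ}(Ff(\mu),Fg(\nu))\rge\inf_{\rho\in FP}[\,d_{FX}(\mu,Fu(\rho))+d_{FY}(Fv(\rho),\nu)\,]$. The plan is to produce a single $\rho$ realizing this bound in two stages. First, exactly as in the coolness argument, I lift the optimal $\A$-transport underlying $d_{FZ}(Ff(\mu),Fg(\nu))$ back along $f$ to obtain $\mu'\in FX$ with $Ff(\mu')=Fg(\nu)$ and $d_{FX}(\mu,\mu')\vle d_{FZ}(Ff(\mu),Fg(\nu))$. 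Second, since $\mu'$ and $\nu$ now have a common image under $Ff$ and $Fg$, I need a coupling $\rho\in FP$ with $Fu(\rho)=\mu'$ and $Fv(\rho)=\nu$ on the nose, so that the $Y$-side contributes distance $0$. Such a $\rho$ exists because $\A\times(-)$ sends the pullback $P$ to the pullback $\A\times P$ of $\A\times X\to\A\times Z\leftarrow\A\times Y$, and $\dfun$ preserves this weak pullback; the cone $(\mu',\nu)$ over $\dfun(\A\times Z)$ therefore lifts to $\dfun(\A\times P)=FP$. Combining the two stages gives $d_{FX}(\mu,Fu(\rho))+d_{FY}(Fv(\rho),\nu)\vle d_{FZ}(Ff(\mu),Fg(\nu))$, as required. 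The delicate points are the disintegration construction (in particular the treatment of coordinates carrying zero mass) and the appeal to weak-pullback preservation of $\dfun$; everything else is bookkeeping.
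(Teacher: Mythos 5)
Your proposal is correct on all three counts, and the guiding observation -- that $d_{FX}(\mu,\nu)<\infty$ forces equal $X$-marginals, on which alone $\lambda_X(f)$ depends -- does real work in each part. The coolness argument is essentially the paper's: symmetry plus Lemma~\ref{lem:cool-sym} reduces to one inequality, and the disintegration of $\mu$ along an optimal label-transport of $Ff(\mu)$ and $\nu'$ is exactly the construction given in the appendix for Example~\ref{expl:running-cool}. For well-behavedness the paper gives no explicit argument; your case split on $d_{FX}(t_1,t_2)$ with $t_3=t_1$ and $\lambda_X(g)(t_1)=\lambda_X(g)(t_2)$ is a clean instance of the recipe of Remark~\ref{rem:v-subadd-simpler} and fills that gap correctly. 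Where you genuinely diverge is exactness: the paper builds $\rho\in FP$ in one shot by an explicit product-of-conditionals formula over the fibres of $f$ and $g$ simultaneously, arranging $Fu(\rho)=\mu$ exactly and putting all the slack on the $Y$ side, whereas you first push the entire $\A$-transport cost onto the $X$ side (obtaining $\mu'$ with $Ff(\mu')=Fg(\nu)$) and then invoke weak-pullback preservation of $\dfun$ composed with the pullback-preserving $\A\times(-)$ to get an on-the-nose coupling of $\mu'$ and $\nu$ over $P$. Your route trades the delicate explicit formula for a standard categorical fact and is arguably more robust; the paper's is self-contained. Two small points you should make explicit if you write this up: in stage one, $f$ need not be surjective, so you should note that finiteness of $d_{FZ}(Ff(\mu),Fg(\nu))$ guarantees that every point of $Z$ receiving mass from $Fg(\nu)$ already lies in the image of the support of $\dfun\pi_2(\mu)$, so the lift is well defined; and your inequality $d_{FX}(\mu,\mu')\vle d_{FZ}(Ff(\mu),Fg(\nu))$ uses $\vle$ where the numerical order $\rle$ is meant (in $\zeroinfQ$ the quantale order is reversed), which should be fixed for consistency with the paper's conventions.
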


Following the common theme from previous examples, this predicate lifting essentially ignores the labelling information present in the left component $\A$ and instead only cares about the distance values that are supplied when applying $\lambda$ to the lifted relation $r$.
Consequently, we obtain the following characterization for the corresponding classical Wasserstein extension:

\begin{lem}\label{lem:mlmc-representation-classical}
  We have $\Wc_\lambda r = \Kant(\Delta_\A\times r)$ for every fuzzy relation $r\colon\frel{X}{Y}$.
\end{lem}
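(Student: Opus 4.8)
The plan is to establish the pointwise equality $\Wc_\lambda r(t_1,t_2)=\Kant(\Delta_\A\times r)(t_1,t_2)$ for all $t_1\in FX$ and $t_2\in FY$ by unfolding both sides into infima over couplings---recalling that in $\V=\zeroinfQ$ the join $\vJoin$ is ordinary infimum---and then exhibiting a cost-preserving correspondence between the two coupling sets.

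First I would expand the left-hand side. A coupling $t\in\cpl{t_1}{t_2}$ is a distribution $t\in\dfun(\A\times X\times Y)$ satisfying $t_1(a,x)=\sum_y t(a,x,y)$ and $t_2(a,y)=\sum_x t(a,x,y)$, with cost $\lambda_{X\times Y}(r)(t)=\sum_{a,x,y} t(a,x,y)\,r(x,y)$, since $\lambda$ is the expected-value modality reading only the $X\times Y$-component. On the right, $\Kant(\Delta_\A\times r)(t_1,t_2)=\inf_\rho\sum_{a,x,a',y}\rho((a,x),(a',y))\,(\Delta_\A\times r)((a,x),(a',y))$, where $\rho$ ranges over all couplings of $t_1$ and $t_2$ viewed as distributions on $(\A\times X)\times(\A\times Y)$. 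Since the tensor in $\zeroinfQ$ is addition and $\Delta_\A(a,a')$ is $0$ for $a=a'$ and $\infty$ otherwise, the weight $(\Delta_\A\times r)((a,x),(a',y))$ equals $r(x,y)$ when $a=a'$ and $\infty$ otherwise.

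The key observation is that any $\rho$ placing positive mass on a pair with $a\neq a'$ incurs infinite cost and may be dropped from the infimum; the surviving, finite-cost couplings are exactly those supported on the $\A$-diagonal. I would then set up the bijection between these and the exact couplings: given $t$, put $\rho((a,x),(a',y))=t(a,x,y)$ when $a=a'$ and $0$ otherwise, and conversely recover $t(a,x,y)=\rho((a,x),(a,y))$ from any diagonal $\rho$. A short check confirms that the two marginal constraints translate into one another and that the cost expressions agree term by term, so the two infima coincide.

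The computation is routine bookkeeping, so the step I would watch most carefully is the interplay between the two coupling sets, which a priori differ: the right-hand infimum ranges over \emph{all} couplings of $t_1$ and $t_2$, not just the diagonal ones. The $\Delta_\A$ factor is precisely what resolves this, penalising off-diagonal transport by $\infty$ and pinning the optimum to the diagonal couplings that biject with the exact couplings used by $\Wc_\lambda$. This also settles the edge case of mismatched $\A$-marginals: then $\cpl{t_1}{t_2}=\emptyset$, so the left side is $\vJoin\emptyset=\infty$, while on the right every coupling necessarily carries off-diagonal mass and hence infinite cost, so both sides again agree at $\infty$.
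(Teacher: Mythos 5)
Your proposal is correct and follows essentially the same route as the paper's proof: both expand the two sides into infima over couplings, observe that any coupling on the Kantorovich side placing mass off the $\A$-diagonal incurs infinite cost via the $\Delta_\A$ factor, and then match the surviving diagonal couplings bijectively (and cost-preservingly) with the exact couplings used by $\Wc_\lambda$. Your explicit treatment of the empty-coupling edge case is a small addition the paper leaves implicit, but the argument is the same.
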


\noindent
To obtain a characterization for the corresponding distributorial Wasserstein extension $W_\lambda$, we make use of the following decomposition that holds whenever the Wasserstein extension is applied to the tensor (as seen in Section~\ref{subsec:streams}) of two distances:

\begin{lem}\label{lem:mlmc-decompose}
  For every fuzzy relation $r\colon\frel{X}{Y}$,
  \begin{multline*}  
    \Kant(d_\A\boxplus r) = \\
    \Kant(d_\A\times\Delta_Y) \cdot \Kant(\Delta_\A\times r) \cdot \Kant(d_\A\times\Delta_X).
  \end{multline*}
\end{lem}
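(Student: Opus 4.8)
The plan is to reduce the identity to the lax-extension axioms of $\Kant$ together with a single explicit coupling construction. Throughout we work in $\V=\zeroinfQ$, so that relational composition reads $(s\cdot r)(u,w)=\inf_v[r(u,v)+s(v,w)]$ and, for a fuzzy relation $q$, $\Kant(q)(\mu,\nu)=\inf_\rho\expect{\rho}q$ with $\rho$ ranging over couplings of $\mu$ and $\nu$. Both sides of the claimed equation are $\V$-relations $\frel{\dfun(\A\times X)}{\dfun(\A\times Y)}$, and I will show that they agree when evaluated at arbitrary $\mu\in\dfun(\A\times X)$ and $\nu\in\dfun(\A\times Y)$.

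First I would establish the corresponding decomposition one level down, at the level of $\V$-relations on $\A\times X$ and $\A\times Y$:
\begin{equation*}
  d_\A\boxplus r = (d_\A\times\Delta_Y)\cdot(\Delta_\A\times r)\cdot(d_\A\times\Delta_X).
\end{equation*}
Evaluating the right-hand side at $((a,x),(b,y))$, the discrete factors $\Delta_X$ and $\Delta_Y$ force the intermediate $X$- and $Y$-coordinates to be copied, and $\Delta_\A$ forces the label to be preserved across the $r$-step, so that the composite reduces to $\inf_{a'}[d_\A(a,a')+r(x,y)+d_\A(a',b)]$; reflexivity and the triangle inequality of $d_\A$ then collapse this to $d_\A(a,b)+r(x,y)$, which is exactly $(d_\A\boxplus r)((a,x),(b,y))$.

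Given this identity, one of the two inequalities is immediate: two applications of the lax-composition axiom \textbf{(L2)} of $\Kant$, together with monotonicity of relational composition in $\VRel$, give
\begin{multline*}
  \Kant(d_\A\times\Delta_Y)\cdot\Kant(\Delta_\A\times r)\cdot\Kant(d_\A\times\Delta_X) \\
  \vle \Kant\bigl((d_\A\times\Delta_Y)\cdot(\Delta_\A\times r)\cdot(d_\A\times\Delta_X)\bigr) = \Kant(d_\A\boxplus r),
\end{multline*}
the last equality by the displayed relational decomposition.

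The reverse inequality is the main obstacle, since \textbf{(L2)} only provides lax, one-sided preservation of composition. Here I would argue directly with couplings. Fix $\mu$, $\nu$ and an arbitrary coupling $\rho$ of them for $d_\A\boxplus r$, and build intermediate distributions together with couplings for the three factors whose total cost equals $\expect{\rho}(d_\A\boxplus r)$. I would take the first intermediate distribution to be $\mu$ itself, so that the $\Kant(d_\A\times\Delta_X)$-stage is realized by the diagonal coupling at cost $0$. I would then split $\rho$ into an $r$-stage that keeps the source label fixed and a $d_\A$-stage that keeps the target $Y$-coordinate fixed, effectively routing each transport $(a,x)\to(b,y)$ through the point $(a,y)$; the second intermediate distribution is $\nu'(a,y)=\textstyle\sum_{x,b}\rho((a,x),(b,y))$. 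The substance of this step is the verification that the two constructed couplings indeed have marginals $(\mu,\nu')$ and $(\nu',\nu)$, and that (because the mismatched, infinite-cost configurations carry zero mass) their costs evaluate to $\textstyle\sum_{a,x,b,y}\rho((a,x),(b,y))\,r(x,y)$ and $\textstyle\sum_{a,x,b,y}\rho((a,x),(b,y))\,d_\A(a,b)$, respectively. Summing these with the vanishing first stage reproduces $\expect{\rho}(d_\A\boxplus r)$, so that, after taking the infimum over $\rho$, the composite $\Kant(d_\A\times\Delta_Y)\cdot\Kant(\Delta_\A\times r)\cdot\Kant(d_\A\times\Delta_X)\vge\Kant(d_\A\boxplus r)$. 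Combining the two inequalities yields the stated equality.
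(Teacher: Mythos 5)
Your proof is correct and follows essentially the same route as the paper's: the easy inequality is obtained from the pointwise relational decomposition of $d_\A\boxplus r$ together with the lax-extension axioms of $\Kant$, and the converse by disintegrating an arbitrary coupling $\rho$ of $\mu$ and $\nu$ into a three-stage transport in which one of the two $d_\A$-padding stages has cost zero. The only (cosmetic) difference is that you route each unit of mass $(a,x)\to(b,y)$ through $(a,y)$, making the first $\Kant(d_\A\times\Delta_X)$ stage trivial, whereas the paper routes through $(b,x)$ and makes the final $\Kant(d_\A\times\Delta_Y)$ stage trivial.
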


\begin{cor}\label{cor:mlmc-representation}
  We have $W_\lambda r = \Kant(d_\A\boxplus r)$ for every fuzzy relation $r\colon\frel{X}{Y}$.
\end{cor}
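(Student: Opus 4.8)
The plan is to obtain the corollary as a purely formal consequence of the two preceding lemmas, chained through the composite representation of the distributorial Wasserstein extension. The entire substance has already been carried by Lemma~\ref{lem:mlmc-representation-classical} and, especially, the decomposition in Lemma~\ref{lem:mlmc-decompose}; what remains is simply to assemble these pieces.

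First, I would start from the alternative representation recorded just after Definition~\ref{defn:w}, namely $W_\lambda r = d_{FY}\cdot\Wc_\lambda r\cdot d_{FX}$. By the definition of the functor $F$ in this subsection, the $\V$-category structures are $d_{FX} = \Kant(d_\A\times\Delta_X)$ and $d_{FY} = \Kant(d_\A\times\Delta_Y)$, so this reads
\[
  W_\lambda r = \Kant(d_\A\times\Delta_Y)\cdot\Wc_\lambda r\cdot\Kant(d_\A\times\Delta_X).
\]

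Second, I would substitute the classical Wasserstein extension using Lemma~\ref{lem:mlmc-representation-classical}, which identifies $\Wc_\lambda r = \Kant(\Delta_\A\times r)$. This yields exactly the threefold composite
\[
  W_\lambda r = \Kant(d_\A\times\Delta_Y)\cdot\Kant(\Delta_\A\times r)\cdot\Kant(d_\A\times\Delta_X),
\]
which is precisely the right-hand side of the decomposition in Lemma~\ref{lem:mlmc-decompose}. Applying that lemma collapses the composite to $\Kant(d_\A\boxplus r)$, giving the claimed identity $W_\lambda r = \Kant(d_\A\boxplus r)$.

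The corollary therefore presents no genuine obstacle on its own; the real work lies in the two lemmas it invokes, and in particular in the decomposition of $\Kant(d_\A\boxplus r)$ into a relational composite. The one point worth flagging is that the representation $W_\lambda r = d_{FY}\cdot\Wc_\lambda r\cdot d_{FX}$ holds by definition and does not presuppose that $W_\lambda$ is a lax extension, so the chaining is valid independently of coolness and exact-square preservation (Lemma~\ref{lem:mlmc-conditions}); those properties are what promote $W_\lambda$ to a \emph{distributorial lax extension}, but are not needed for the identity itself.
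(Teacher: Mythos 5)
Your proposal is correct and matches the paper's own (very terse) proof: both chain the representation $W_\lambda r = d_{FY}\cdot\Wc_\lambda r\cdot d_{FX}$ with the identifications $d_{FX}=\Kant(d_\A\times\Delta_X)$ and $\Wc_\lambda r=\Kant(\Delta_\A\times r)$ from Lemma~\ref{lem:mlmc-representation-classical}, then apply Lemma~\ref{lem:mlmc-decompose}. Your closing remark that the identity needs only the definitional representation of $W_\lambda$, not its lax-extension status, is a correct and worthwhile clarification.
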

\noindent That is, while the classical Wasserstein
extension~$\Wc_\lambda$ ignores the metric on the label set~$\A$,
replacing it with the discrete metric, the distributorial Wasserstein
extension $W_\lambda$ takes the metric on~$\A$ into account in the
intended manner.
\begin{proof}[Proof (Corollary~\ref{cor:mlmc-representation})]
  As $\Kant(d_\A\times\Delta_X) = d_{FX}$ and $\Kant(\Delta_\A\times r) = F\pi_2\cdot\delta(\lambda_{X\times Y})\cdot\rev{(F\pi_1)}$, this is immediate from Lemma~\ref{lem:mlmc-decompose}.
\end{proof}

\begin{rem}\label{rem:optimized-flow}
  The decomposition from Lemma~\ref{lem:mlmc-decompose} is not only helpful for characterizing the distributorial Wasserstein extension, but may also be viewed as a recipe to optimize its computation.
  Specifically, if $\A$, $X$ and $Y$ are finite sets, for simplicity of the same size $n$, then the distance between two probability measures on $\A\times X$ and $\A\times Y$ can be computed using a network flow algorithm on a bipartite graph with $\A\times X$ and $\A\times Y$ as the left and right partition, respectively~\cite{BreugelWorrell06}.
  As this graph has $\Theta(n^4)$ edges, the distance between two probability measures can be computed in $\mathcal{O}(n^6\cdot\log^2(n))$ time using the network simplex algorithm~\cite{Tarjan97}.

  Using Lemma~\ref{lem:mlmc-decompose}, we can instead model the computation of the distance using a tiered network with four layers, where the first two layers are copies of $\A\times X$ and the second two layers are copies of $\A\times Y$.
  The edges between these layers correspond to the relations $d_\A\times\Delta_X$, $\Delta_\A\times r$ and $d_\A\times\Delta_Y$, which together only have $\mathcal{O}(n^3)$ edges with finite cost.
  This results in an improved runtime complexity of $\mathcal{O}(n^5\cdot\log^2(n))$ when using the same flow algorithm as before.
\end{rem}

\section{Duality}\label{sec:duality}

\noindent Already in the early work on coalgebraic behavioural distances~\cite{BBKK14}, the question was raised whether the generalizations of the two sides of the Kantorovich/Rubinstein duality admit a duality of their own, that is, whether $K_\lambda = \Wc_\lambda$ for a given choice of functor and predicate lifting
(in the sense that $K_\lambda r = \Wc_\lambda r$ for every relation $r$).
Duality in this strict sense is only known to hold in very few cases; other than the original probabilistic case, the most well-known instance is that of the pseudometric Hausdorff lifting of the powerset functor~\cite{BBKK18}, which is modelled using the predicate lifting $\sup$, where $\sup_X(f)(U) = \sup_{x\in U} f(x)$.
In the case of lax extensions, where the Kantorovich extension is treated in terms of not necessarily symmetric relations, this equality breaks, but can be recovered by adding the dual predicate lifting $\inf$ on the Kantorovich side: $K_{\{\sup,\inf\}} = \Wc_{\sup}$.
This motivates the question of finding more generalized dualities of the form $K_\Lambda = \Wc_\lambda$, where $\Lambda$ is a set of predicate liftings.

In a certain sense, this question has already been settled through a representation theorem~\cite{FuzzyLaxHemi} which shows that, under mild assumptions on the functor, every lax extension (and therefore in particular every Wasserstein extension) can be expressed as a Kantorovich extension for a suitable set of predicate liftings.
The sets of predicate liftings arising from this construction, however, are often quite large, being defined in terms of a representation of the underlying functor.
More recently, a more explicit construction for polynomial variations of the powerset and discrete distribution functors has been proposed~\cite{HumeauEA25}.

We now discuss some examples of dualities in the distributorial setting, i.e.~dualities of the shape $K_\Lambda = W_\lambda$, where $\lambda$ is a predicate lifting and $\Lambda$ is a set of predicate liftings.

\begin{expl}\label{expl:sim-duality}
  Continuing the discussion on Egli-Milner simulation (Example~\ref{expl:simulations-wasserstein}), we consider the diamond modality $\Diamond$, where
  \begin{equation*}
    U \vDash \Diamond_X(f) \iff \exists (a,x)\in U.\; x \vDash f.
  \end{equation*}
  This predicate lifting $\Diamond$ satisfies the condition from Theorem~\ref{thm:k-lax}, and indeed we also have the equality $K_\Diamond = W_\Box$.
  For converse Egli-Milner simulation (where the preorder on the functor is $\supseteq$ instead of $\subseteq$), one can check that now it is instead the box modality $\Box$ that satisfies the condition from Theorem~\ref{thm:k-lax} and we have the equality $K_\Box = W_\Box$.
  Finally, Egli-Milner bisimulation, which arises by equipping the sets $FX$ with the discrete preorder, is induced by the Kantorovich extension for $\{\Box,\Diamond\}$.
\end{expl}

\begin{expl}\label{expl:stream-duality}
  The distributorial Wasserstein extension for metric streams (Section~\ref{subsec:streams}) admits a Kantorovich representation which is given by the predicate liftings $\Diamonda{a}$, where $a$ ranges over $\A$ and
  \begin{equation*}
    \Diamonda{c}_X(f)(a,x) \coloneqq d_\A(a,c) + f(x).
  \end{equation*}
  Explicitly, we have $K_\Lambda = W_\lambda$, where $\Lambda \coloneqq \{\Diamonda{a} \mid a\in\A\}$ and $\lambda$ is as in Section~\ref{subsec:streams}.
\end{expl}

\section{Conclusion and future work}\label{sec:conclusion}

\noindent
We have defined a generalization of the coupling-based Wasserstein extension that applies to functors from sets to categories of conformances (either preorders or hemimetrics), where the given conformance structure enables us to consider lax couplings of functor elements, thus bypassing the issue that exact couplings need not exist in general.
The key technical aspect of this construction is the usage of $\V$-distributors, whose properties ensure that one obtains a lax extension and is therefore able to derive behavioural conformances, functor liftings and notions of (bi)simulation.
We have demonstrated the versatility of this construction by showing that it generalizes various known concepts, such as simulations for both labelled and modal transition systems and a generalized Wasserstein metric that applies to finite measures of possibly different masses, as well as natural notions of distance for metric streams and metric-labelled Markov chains.

The latter setting of unbalanced optimal transport motivates further research, as many of the constructions in this context are based around weak notions of distance such as diffuse metrics or divergences, which do not fit within the framework of $\V$-categories and thus are also not covered by distributorial Wasserstein extensions.
A related issue revolves around the question of replacing the Manhattan tensor occurring in the distances for metric-labelled systems with the more commonly used product metric, which similarly calls for a yet more general notion of lax coupling that is detached from the distributorial setting and allows distance values to be combined in different ways than just through summation.

\bibliographystyle{splncs04}
\bibliography{fw}

\onecolumn
\appendix

\section*{Details for Example~\ref{expl:simulations}}

\noindent
We prove that for every pair of sets $U\subseteq\A\times X$ and $V\subseteq\A\times Y$ we have $U\,L^\subseteq r\,V$ if and only if for $(a,x)\in U$ there exists $(a,y)\in V$ such that $x\,r\,y$.
\begin{itemize}
  \item ``$\Rightarrow$'':
    Let $U\,L^\subseteq r\,V$.
    Then there exist sets $U'$ and $V'$ such that $U\subseteq U'$, $V'\subseteq V$ and $U'\,Lr\,V'$.
    Let $(a,x)\in U$.
    Then we also have $(a,x)\in U'$, and thus by definition of $L$ there exists $(a,y)\in V'$ such that $x\,r\,y$.
    As $V'\subseteq V$, we also have $(a,y)\in V$.
  \item ``$\Leftarrow$'':
    Now assume that $U$ and $V$ satisfy the condition that for every $(a,x)\in U$ there exists $(a,y)\in V$ such that $x\,r\,y$.
    We define $U'\coloneqq\{(a,x)\in\A\times X \mid \exists (a,y)\in V.\; x\,r\,y\}$ and
    $V'\coloneqq \{(a,y)\in V \mid \exists (a,x)\in U'.\; x\,r\,y\}$.
    We now need to show that $U\subseteq U'$, $U'\,Lr\,V'$ and $V'\subseteq V$.
    The third of these claims holds by definition.
    For the first, let $(a,x)\in U$.
    Then by assumption there exists $(a,y)\in V$ such that $x\,r\,y$, so $(a,x)\in U'$ by definition of $U'$.
    Finally, we discharge the second claim by showing the forth and back conditions:
    \begin{itemize}
      \item Let $(a,x)\in U'$.
        Then, by definition of $U'$, there exists $(a,y)\in V$ such that $x\,r\,y$.
        Note that this also entails $(a,y)\in V'$, which means there is nothing left to prove here.
      \item Let $(a,y)\in V'$. Then by assumption there exists $(a,x)\in U'$ such that $x\,r\,y$, which finishes the proof.
    \end{itemize}
\end{itemize}

\section*{Proof of Lemma~\ref{lem:distrib-lax}}

\begin{itemize}
  \item (D) $\to$ (D'): By assumption and using (L2),
    \begin{equation*}
      d_{FB}\cdot Lr \vle L\Delta_B\cdot Lr \vle L(\Delta_B\cdot r) = Lr,
    \end{equation*}
    and the inequality $Lr\cdot d_{FA}\vle Lr$ follows similarly.
  \item (D') $\to$ (D): Using (L3) and the assumption,
    \begin{equation*}
      d_{FX} = d_{FX}\cdot\Delta_{FX} \vle d_{FX}\cdot L\Delta_X \vle L\Delta_X. \qedhere
    \end{equation*}
\end{itemize}

\section*{Proof of Lemma~\ref{lem:exact-via-pullback}}

\noindent
`Only if' is clear, as every pullback square in $\Set$ is exact; we show `if'.
Consider a weak pullback square~\eqref{eqn:diag-pxyz}, and let $(Q, s\colon Q\to X, t\colon Q\to Y)$ be the pullback of $f$ and $g$.
Let $c\colon P\to Q$ be the unique map satisfying $u=s\cdot c$ and $v=t\cdot c$ which exists by the universal property of $Q$, and let $d\colon Q\to P$ be some map satisfying $s=u\cdot d$ and $t=v\cdot d$ that exists by the weak universal property of $P$.
Then $c\cdot d = \id_Q$ by the universal property of $Q$, so $c$ is a split epi.
Applying $F$, we get $Fc\cdot Fd = \id_{FQ}$, so $Fc$ is also a split epi.
$Fc$ is therefore surjective when viewed as a set map and thus satisfies $\Delta_{FQ} \vle Fc\cdot\rev{(Fc)}$.
We use this to show that the exact square~\eqref{eqn:diag-pxyz} is preserved:
\begin{align*}
  \rev{(Fg)}\cdot d_{FZ}\cdot Ff
  &\vle d_{FY}\cdot Ft\cdot\rev{(Fs)}\cdot d_{FX} && \by{assumption} \\
  &\vle d_{FY}\cdot Ft\cdot Fc\cdot\rev{(Fc)}\cdot\rev{(Fs)}\cdot d_{FX} && \by{$\Delta_{FQ} \vle Fc\cdot\rev{(Fc)}$} \\
  &= d_{FY}\cdot Fv\cdot\rev{(Fu)}\cdot d_{FX} && \by{$v=t\cdot c$ and $u=s\cdot c$}
\end{align*}

\section*{Proof of Lemma~\ref{lem:cool-sym}}

\noindent
If every $d_{FX}$ is symmetric, then $\rev{(Ff)}\cdot d_{FY} \vle d_{FX}\cdot\rev{(Ff)} \iff \rev{(Ff)}\cdot \rev{d_{FY}} \vle \rev{d_{FX}}\cdot\rev{(Ff)} \iff d_{FY}\cdot Ff \vle Ff\cdot d_{FX}$.

\section*{Details for Example~\ref{expl:running-cool}}

\noindent
For the metric-labelled Markov chain functor $FX = (\dfun(\A\times X), \Kant(d_\A\times\Delta_X))$, consider $\mu\in FX$ and $\nu'\in FY$, put $\nu = Ff(\mu)$ and put $\epsilon = d_{FY}(\nu,\nu')$.
If $\epsilon = \infty$ we are done right away, so assume $\epsilon < \infty$.
By general results from transportation theory~\cite{Villani09}, there exists an optimal coupling $\sigma$ between $\nu$ and $\nu'$.
Because $\epsilon < \infty$, we know that $\sigma(a,y,b,y') = 0$ whenever $y \neq y'$.
We now construct a probability distribution $\mu'\in FX$ and a coupling $\rho$ of $\mu$ and $\mu'$ simultaneously using the following algorithm:
\begin{enumerate}
  \item Initialize $\mu'$ and $\rho$ with zeroes.
  \item For every $(a,y,b,y)\in(\A\times Y)^2$ such that $\sigma(a,y,b,y) > 0$ and every $x\in f^{-1}(x)$, let $\delta = \frac{\mu(a,x)}{\nu(a,y)}\cdot \sigma(a,y,b,y)$ and increase both $\mu'(b,x)$ and $\rho(a,x,b,x)$ by $\delta$.
\end{enumerate}
One can then verify that indeed $\rho$ is a coupling of $\mu$ and $\mu'$ and that $d_{FY}(\nu,\nu') = \expect{\sigma}(d_\A\times\Delta_Y) \ge \expect{\rho}(d_\A\times\Delta_X) = d_{FX}(\mu,\mu')$.
By Lemma~\ref{lem:cool-sym} we do not need to show the other inequality.

\section*{Proof of Theorem~\ref{thm:mw-lax}}

\noindent
We first note the following fact about diagonal relations:

\begin{lem}\label{lem:g-diag-prop}
  Let $f\colon X\to Y$ and $g\colon Y\to\V$.
  Then $f\cdot\diag{g\cdot f} = \diag{g} \cdot f$.
\end{lem}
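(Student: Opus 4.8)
The plan is to prove the identity by a direct pointwise computation, evaluating both $\V$-relations $f\cdot\diag{g\cdot f}$ and $\diag{g}\cdot f$ (both of type $\frel{X}{Y}$, where as per the paper's convention $f$ is identified with its graph $\gr{f}$) at an arbitrary pair $(x,y)\in X\times Y$ and checking that the two results agree.

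First I would unfold the left-hand side. By the definition of relational composition,
\[
  (f\cdot\diag{g\cdot f})(x,y) = \vJoin_{x'\in X} \diag{g\cdot f}(x,x')\vtimes\gr{f}(x',y).
\]
Since $\diag{g\cdot f}(x,x') = \vminel$ whenever $x'\neq x$, and $\vminel$ annihilates products (a consequence of join-continuity, as $\vminel=\vJoin\emptyset$ gives $a\vtimes\vminel=\vminel$), only the summand $x'=x$ survives, leaving $g(f(x))\vtimes\gr{f}(x,y)$. Using that $\gr{f}(x,y)=\vunit$ exactly when $f(x)=y$ and $\vminel$ otherwise, together with neutrality of $\vunit$, this evaluates to $g(y)$ if $f(x)=y$ and to $\vminel$ otherwise.

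Then I would treat the right-hand side symmetrically:
\[
  (\diag{g}\cdot f)(x,y) = \vJoin_{y'\in Y} \gr{f}(x,y')\vtimes\diag{g}(y',y),
\]
where now $\diag{g}(y',y)=\vminel$ for $y'\neq y$ collapses the join to the single term $\gr{f}(x,y)\vtimes g(y)$, which again equals $g(y)$ when $f(x)=y$ and $\vminel$ otherwise. As both sides reduce to the same two-case expression, the claimed equality follows.

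There is no genuine obstacle here; the only points requiring care are getting the direction of relational composition right on each side and recording that $\vminel$ annihilates products, so that the diagonal relations really do collapse the joins to their diagonal entries. Everything else is routine bookkeeping with the monoid unit $\vunit$.
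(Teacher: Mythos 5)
Your proof is correct and takes essentially the same approach as the paper's, namely pointwise evaluation of both composites at $(x,y)$ with a case split on whether $f(x)=y$; you merely spell out the collapse of the joins in more detail than the paper does. No issues.
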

\begin{proof}
  Let $x\in X$ and $y\in Y$, and evaluate the relations at $(x,y)$.
  Then both sides of the equation are equal to $g(f(x))$ if $f(x) = y$, and $\vminel$ otherwise.
\end{proof}

\noindent We show the axioms of distributorial lax extensions one by one:
\begin{description}
  \item[(L1)] Immediate by monotonicity of $\lambda$ and of relational composition.
  \item[(L2)] Let $r\colon\frel{X}{Y}$ and $s\colon\frel{Y}{Z}$.
    We expand $W_\lambda s\cdot W_\lambda r$, but for space reasons omit the first two and last two $\V$-relations in the chain for now.
    For the remaining middle part we proceed as follows:
    \begin{align*}
      &\lambdadiag_{Y\times Z}(s)\cdot
          \rev{(F\pi_1)}\cdot d_{FY}\cdot d_{FY}\cdot F\pi_2\cdot
          \lambdadiag_{X\times Y}(r) \\
      &= \lambdadiag_{Y\times Z}(s)\cdot
          \rev{(F\pi_1)}\cdot d_{FY}\cdot F\pi_2\cdot
          \lambdadiag_{X\times Y}(r) \\
      \intertext{
        using that $d_{FY}$ is a $\V$-category.
        Now, observe that the projections from $X\times Y\times Z$ into $Y$ form a pullback, and hence an exact square $\rev{\pi_1}\cdot\pi_2 = \pi_{23}\cdot\rev{\pi_{12}}$, which is preserved by $F$:
      }
      &\vle \lambdadiag_{Y\times Z}(s)\cdot
          d_{F(Y\times Z)}\cdot
          F\pi_{23}\cdot\rev{(F\pi_{12})}\cdot
          d_{F(X\times Y)}\cdot
          \lambdadiag_{X\times Y}(r) \\
      \intertext{
        Now we may use coolness of $F$ to rearrange in the middle:
      }
      &\vle \lambdadiag_{Y\times Z}(s)\cdot
          F\pi_{23}\cdot
          d_{F(X\times Y\times Z)}\cdot
          d_{F(X\times Y\times Z)}\cdot
          \rev{(F\pi_{12})}\cdot
          \lambdadiag_{X\times Y}(r) \\
      &\vle \lambdadiag_{Y\times Z}(s)\cdot
          F\pi_{23}\cdot
          d_{F(X\times Y\times Z)}\cdot
          \rev{(F\pi_{12})}\cdot
          \lambdadiag_{X\times Y}(r) \\
      \intertext{
        Next, as $\lambda$ is natural, the diagonal relations can be moved to the inside.
        We define $r_{12}, s_{23}\colon X\times Y\times Z\to\V$ by $r_{12}(x,y,z) = r(x,y)$ and $s_{23}(x,y,z) = s(y,z)$ respectively.
        Then $\lambda_{X\times Y}(r)\cdot F\pi_{12} = \lambda_{X\times Y\times Z}(r_{12})$ and $\lambda_{Y\times Z}(s)\cdot F\pi_{23} = \lambda_{X\times Y\times Z}(s_{23})$ by naturality of $\lambda$. 
        Therefore, using Lemma~\ref{lem:g-diag-prop},
      }
      &= F\pi_{23}\cdot\lambdadiag_{X\times Y\times Z}(s_{23})
          \cdot d_{F(X\times Y\times Z)}\cdot
          \lambdadiag_{X\times Y\times Z}(r_{12})\cdot\rev{(F\pi_{12})} \\
      \intertext{
        and we may now use $\V$-normality of $\lambda$ to obtain:
      }
      &\vle F\pi_{23}\cdot d_{F(X\times Y\times Z)}\cdot
          \lambdadiag_{X\times Y\times Z}(r_{12}\vtimes s_{23})
          \cdot d_{F(X\times Y\times Z)}\cdot\rev{(F\pi_{12})}.
    \end{align*}
    Next, we combine the left two terms of this with the left two terms we omitted at the start and get:
    \begin{align*}
      &d_{FZ}\cdot F\pi_2\cdot F\pi_{23}\cdot d_{F(X\times Y\times Z)} \\
      &= d_{FZ}\cdot F(\pi_2\cdot \pi_{23})\cdot d_{F(X\times Y\times Z)}
      &&\by{$F$ is a functor} \\
      &\vle d_{FZ}\cdot d_{FZ}\cdot F(\pi_2\cdot \pi_{23})
      &&\by{$F(\pi_2\cdot F\pi_{23})$ is a $\V$-functor} \\
      &= d_{FZ}\cdot F(\pi_2\cdot \pi_{23})
      &&\by{$d_{FZ}$ is a $\V$-category} \\
      &= d_{FZ}\cdot F(\pi_2\cdot \pi_{13})
      &&\by{$\pi_2\cdot \pi_{23} = \pi_2\cdot \pi_{13}$} \\
      &= d_{FZ}\cdot F\pi_2\cdot F\pi_{13}
      &&\by{$F$ is a functor}
    \end{align*}
    Proceeding similarly with the respective terms on the right yields
    \begin{equation*}
      d_{F(X\times Y\times Z)}\cdot\rev{(F\pi_{12})}\cdot\rev{(F\pi_1)}\cdot d_{FX} \vle \rev{(F\pi_{13})}\cdot\rev{(F\pi_1)}\cdot d_{FX}.
    \end{equation*}
    For the middle term, we observe that for all $(x,y,z)\in X\times Y\times Z$ we have $(r_{12}\vtimes s_{23})(x,y,z) = r(x,y)\vtimes s(y,z) \vle (s\cdot r)(x,z)$.
    Therefore, by monotonicity and naturality of $\lambda$,
    \begin{equation*}
      \lambdadiag_{X\times Y\times Z}(r_{12}\vtimes s_{23}) \vle
      \diag{\lambda_{A\times C}(s\cdot r)\cdot F\pi_{13}}.
    \end{equation*}
    Gathering all terms together, we get:
    \begin{align*}
      &W_\lambda(s)\cdot W_\lambda(r) \\
      &\vle d_{FZ}\cdot F\pi_2\cdot F\pi_{13}\cdot
          \diag{\lambda_{X\times Z}(s\cdot r)\cdot F\pi_{13}}\cdot
          \rev{(F\pi_{13})}\cdot\rev{(F\pi_1)}\cdot d_{FX} \\
      \intertext{
        Using Lemma~\ref{lem:g-diag-prop} once more, as well as the fact that $f\cdot\rev{f} \vle\Delta_Y$ for each $f\colon X\to Y$, we finish up the proof:
      }
      &= d_{FZ}\cdot F\pi_2\cdot
          \diag{\lambda_{X\times Z}(s\cdot r)}\cdot
          F\pi_{13}\cdot\rev{(F\pi_{13})}\cdot\rev{(F\pi_1)}\cdot d_{FX} \\
      &\vle d_{FZ}\cdot F\pi_2\cdot
          \diag{\lambda_{X\times Z}(s\cdot r)}\cdot
          \rev{(F\pi_1)}\cdot d_{FX} = W_\lambda(s\cdot r).
    \end{align*}
  \item[(L3)] Let $f\colon A\to B$ and consider its graph $\gr{f}\colon A\times B\to\V$.
    Then, as $\lambda$ preserves the unit and is natural,
    \begin{equation}\label{eqn:mw-lax-3}
      \vunit_{FA} \vle \lambda_A(\vunit_A) = \lambda_A(\gr{f}\cdot\langle\id_A,f\rangle) = \lambda_{A\times B}(\gr{f})\cdot F\langle\id_A,f\rangle
    \end{equation}
    We also observe that $F\langle\id_A,f\rangle\vle\rev{(F\pi_1)}$, and thus
    \begin{align*}
      &F(\gr{f}) \\
      &= F\pi_2\cdot F\langle\id_A,f\rangle\cdot\diag{\vunit_{FA}} \\
      &\vle F\pi_2\cdot F\langle\id_A,f\rangle\cdot\diag{\lambda_{A\times B}(\gr{f})\cdot F\langle\id_A,f\rangle} &&\by{\ref{eqn:mw-lax-3}} \\
      &\vle F\pi_2\cdot\diag{\lambda_{A\times B}(\gr{f})}\cdot F\langle\id_A,f\rangle &&\by{Lemma~\ref{lem:g-diag-prop}} \\
      &\vle F\pi_2\cdot\diag{\lambda_{A\times B}(\gr{f})}\cdot \rev{(F\pi_1)} &&\by{fact}\\
      &\vle d_{FB}\cdot F\pi_2\cdot\diag{\lambda_{A\times B}(\gr{f})}\cdot \rev{(F\pi_1)}\cdot d_{FA} \\
      &= W_\lambda(\gr{f})
    \end{align*}
    using reflexivity of $\V$-categories in the last step.
    The proof of the other inequality is analogous.
  \item[(D)] By Lemma~\ref{lem:distrib-lax}, we need to show that $d_{FY}\cdot W_\lambda r\cdot d_{FX} \vle W_\lambda r$ for each $r$.
    The claim then follows from the triangle inequalities $d_{FX}\cdot d_{FX} \vle d_{FX}$ and $d_{FY}\cdot d_{FY} \vle d_{FY}$. \qedhere
\end{description}

\section*{Proof of Proposition~\ref{prop:mw-lax-not-cool}}

\noindent
We only need to modify one part of the proof of (L2) in the proof of Theorem~\ref{thm:mw-lax}.
Instead of applying coolness of the functor $F$, followed by naturality and $\V$-normality of $\lambda$, we directly apply the new property of the predicate lifting and skip these steps.
The rest of the proof can then proceed just as before.

\section*{Proof of Lemma~\ref{lem:simulations-conditions}}

\noindent
First, note the following two simple facts about $F$:

\begin{lem}\label{lem:lts-fun-preserves-empty}
  Let $f\colon X\to Y$, and $U\in FX$. Then
  \begin{enumerate}
    \item $U = \emptyset \iff Ff(U) = \emptyset$
    \item For each $a\in\A$, $U_a = \emptyset \iff (Ff(U))_a = \emptyset$.
  \end{enumerate}
\end{lem}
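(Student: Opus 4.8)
The plan is to reduce both parts to the elementary fact that the direct image of a set is empty precisely when that set is empty. Recall that $Ff(U) = \{(a,f(x)) \mid (a,x)\in U\}$ and that $U_a = \{x\in X\mid (a,x)\in U\}$; note in particular that the underlying set action of $F$ is unaffected by the choice of preorder $\le_{FX}$, so a single argument covers all six cases at once.

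For the first part, the forward direction is immediate, since $Ff(\emptyset) = \emptyset$. For the converse I would argue by contraposition: if $U\neq\emptyset$, pick any $(a,x)\in U$; then $(a,f(x))\in Ff(U)$, so $Ff(U)\neq\emptyset$.

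For the second part, the key bookkeeping observation is that taking the $a$-slice commutes with the functor action, namely $(Ff(U))_a = f[U_a]$ for every $a\in\A$. This follows by unfolding the definitions: $y\in (Ff(U))_a$ iff $(a,y)\in Ff(U)$ iff $y = f(x)$ for some $(a,x)\in U$, i.e.\ for some $x\in U_a$. Given this identity, part (2) becomes the statement $U_a = \emptyset \iff f[U_a]=\emptyset$, which is proved exactly as part (1), now applied to the set $U_a\subseteq X$ in place of $U$: the forward direction uses $f[\emptyset]=\emptyset$, and the converse picks a witness $x\in U_a$ to produce $f(x)\in f[U_a]$.

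There is no real obstacle here; the only step requiring any care is verifying the identity $(Ff(U))_a = f[U_a]$, after which both equivalences are witnessed simply by choosing an arbitrary element of the relevant nonempty set.
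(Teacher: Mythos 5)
Your proof is correct: the paper states this lemma without proof as a ``simple fact,'' and your direct element-chasing verification --- reducing both parts to the emptiness of direct images via the identity $(Ff(U))_a = f[U_a]$ --- is exactly the argument implicitly intended.
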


\begin{enumerate}
  \item It is straightforward to see that each $FX$ is a preorder and each $Ff$ is monotone, so $F$ is indeed a functor.
    For preservation of exact squares, let $f\cdot u = g\cdot v$ be an exact square (where $u\colon P\to X$, $v\colon P\to Y$, $f\colon X\to Z$ and $g\colon Y\to Z$), let $U\in FX$ and $V\in FY$ such that $Ff(U) \le_{FZ} Fg(V)$.
    We need to find $W\in FP$ such that $U \le_{FX} Fu(W)$ and $Fv(W) \le_{FY} V$.
    Such a set $W$ can be constructed by selecting, for each $(a,x)\in U$ and $(a,y)\in V$ such that $f(x) = g(y)$, some $p\in P$ such that $u(p) = x$ and $v(p) = y$ (which exists by exactness of $f\cdot u = g\cdot v$), and adding $(a,p)$ to $W$.
    We show the two inequalities in the case ${\le_{FX}} = {\subseteq}$:
    \begin{itemize}
      \item Let $(a,x) \in U$.
        Then, by assumption, there exists $y\in Y$ such that $(a,y)\in V$ and $f(x) = g(y)$.
        By construction of $W$, then, there exists some $(a,p)\in W$ such that $u(p) = x$, which implies $(a,x)\in Fu(W)$.
      \item Let $(a,p) \in W$.
        Then, by construction of $W$, $v(p) = y$ for some $y\in V_a$, implying $Fv(W) \subseteq V$.
    \end{itemize}
    The other cases can be shown similarly, additionally using Lemma~\ref{lem:lts-fun-preserves-empty} in the non-Egli-Milner cases.
  \item Monotonicity is easy to see, as is preservation of $\vmaxel$ -- we show $\V$-normality.
    Let $f,g\in\V^X$ and $U,V\in FX$ such that $U \le_{FX} V$ and $\Box_X(f)(U) = \Box_X(g)(V) = \vmaxel$.
    We need to find $W\in FX$ such that $\Box_X(f\land g)(W) = \vmaxel$ and $U \le_{FX} W \le_{FX} V$.
    Put $W \coloneqq U \cap V$.
    Then $\Box_X(f)(W) \vge \Box_X(f)(U) = \vmaxel$ and $\Box_X(g)(W) \vge \Box_X(g)(V) = \vmaxel$ by definition of $\Box$, and thus also $\Box_X(f\land g)(W) = \vmaxel$.
    In all six cases, we have $W = U$ or $W = V$, so that the other condition also holds.
  \qedhere
\end{enumerate}

\section*{Proof of Lemma~\ref{lem:characterize-simulations}}

\noindent
Let $\alpha\colon X\to FX$ and $\beta\colon Y\to FY$ be labelled transition systems, and let $r\colon\frel{X}{Y}$.
We prove the three cases one by one.
\begin{description}[wide]
  \item[`$\subseteq$':] First, assume that $r$ is a $W_\Box$-simulation.
    Let $(x,y) \in r$, and let $(a,x')\in\alpha(x)$.
    By assumption, $(\alpha(x),\beta(y)) \in W_\Box r$, which means that there exists a set $U \subseteq \A\times X\times Y$ such that $\alpha(x) \subseteq F\pi_1(U)$ and $F\pi_2(U)\subseteq \beta(y)$ and for all $(a',x'',y'') \in U$ we have $(x'',y'')\in r$.
    As $\alpha(x) \subseteq F\pi_1(U)$, we have $(a,x)\in F\pi_1(U)$, meaning that there exists $y'\in Y$ such that $(a,x',y') \in U$.
    Therefore, $(x',y') \in r$, and also $(a,y') \in F\pi_2(U) \subseteq \beta(y)$, which is exactly what we needed to show.

    Second, assume that $r$ is an Egli-Milner simulation, and let $(x,y) \in r$.
    We need to show that $(\alpha(x),\beta(y))\in W_\Box r$.
    By assumption, for each $(a,x')\in\alpha(x)$ there is $y'\in Y$ such that $(a,y')\in\beta(y)$ and $(x',y')\in r$, so construct $U\subseteq\A\times X\times Y$ by taking all the triples $(a,x',y')$ arising in this way.
    By construction, we have $\alpha(x) \subseteq F\pi_1(U)$ and $F\pi_2(U)\subseteq \beta(y)$, as well as $\Box_{X\times Y}(r)(U) = \vmaxel$, so $U$ witnesses that $(\alpha(x),\beta(y))\in W_\Box r$.
  \item[`$\lecomplete$':] First, assume that $r$ is a $W_\Box$-simulation.
    Let $(x,y) \in r$. We proceed as in the previous item, just with $\lecomplete$ in place of $\subseteq$.
    In particular, if $\alpha(x)\neq\emptyset$, then both $I(x)\neq\emptyset$ and $I(y)\neq\emptyset$.
    Otherwise, if $\alpha(x)=\emptyset$, then both $U$ and $\beta(y)$ must be empty as well, so $I(x) = I(y) = \emptyset$.

    Second, if $r$ is a complete simulation, we also follow the steps of the previous item, constructing the set $U$ and showing that it satisfies all the needed conditions.
    In case $I(x) = \emptyset$, then $U = \emptyset$ by construction and $I(y) = \emptyset$ by assumption.
    Therefore, $\emptyset = F\pi_2(U) \lecomplete \beta(y) = \emptyset$.
  \item[`$\leready$':] This can be shown analogously to the previous item, just on a per-label basis, as the condition $I(x) = I(y)$ can be rephrased as $(\alpha(x))_a = \emptyset \iff (\beta(y))_a = \emptyset$ for each $a\in\A$.
  \qedhere
\end{description}

\section*{Proof of Lemma~\ref{lem:modal-ts-conditions}}

\noindent
Suppose we have an exact square $f\cdot u = g\cdot v$.
By exactness, for each pair $(x,y)$ such that $f(x) = g(y)$ we may choose an element $p_{x,y} \in P$ such that $u(p_{x,y}) = x$ and $v(p_{x,y}) = y$.
Let $(U_1,V_1)\in FX$ and $(U_2,V_2)\in FY$, and construct $(U,V)\in FP$ as follows:
For every $(a,x)\in U_1$ and $(b,y)\in U_2$ such that $f(x) = g(y)$ and $a \vle_\A b$, add $(a,p_{x,y})$ to $U$, and similarly for $V_1$, $V_2$ and $V$.
As $U_i\subseteq V_i$ for $i=1,2$, we also have $U\subseteq V$.
It is now straightforward to verify that $(U_1,V_1) \le_{FX} Fu(U,V)$ and $Fv(U,V) \le_{FX} (U_2,V_2)$.

For coolness, let $f\colon X\to Y$ and let $(U_1,V_1)\in FX$ and $(U'_2,V'_2)\in FY$ such that $Ff(U_1,V_1) \le_{FY} (U'_2,V'_2)$.
We need to define $(U'_1,V'_1)\in FX$ such that $Ff(U'_1,V'_1) = (U'_2,V'_2)$ and $(U_1,V_1) \le_{FX} (U'_1,V'_1)$.
One possible choice is to define $V'_1 = (\A\times f)^{-1}[V'_2]$ and $U'_1 = \{(b,x)\in (\A\times f)^{-1}[U'_2] \mid \exists a\in\A.\; a\vle_\A b \land (a,x)\in U_1 \}$ and the two conditions are then easily verified.
The symmetric case can be shown analogously.

$\lambda$ is clearly monotone and preserves the unit $\vmaxel$.
We show $\V$-normality.
Let $f,g\in 2^X$ and $(U_1,V_1), (U_2,V_2)\in FX$ such that $\lambda_X(f)(U_1,V_1) = \lambda_X(g)(U_2,V_2) = \vmaxel$ and $(U_1,V_1) \le_{FX} (U_2,V_2)$.
We need to find $(U_3,V_3) \in FX$ such that $\lambda_X(f\land g)(U_3,V_3) = \vmaxel$ and $(U_1,V_1) \le_{FX} (U_3,V_3) \le_{FX} (U_2,V_2)$.
The following choice is sufficient:
\begin{gather*}
  U_3 \coloneqq \{ (a,x) \mid (a,x)\in U_1 \land\exists b\in\A.\; a\vle_\A b \land (b,x)\in U_2 \} \\
  V_3 \coloneqq \{ (a,x) \mid (a,x)\in V_1 \land\exists b\in\A.\; a\vle_\A b \land (b,x)\in V_2 \}. \qedhere
\end{gather*}

\section*{Proof of Lemma~\ref{lem:characterize-modal-ts}}

Let $\alpha\colon X\to FX$ and $\beta\colon Y\to TY$ be LSMTSs, and $r\colon\frel{X}{Y}$.

First, assume that $r$ is a $W_\lambda$-simulation, and let $x\,r\,y$.
By assumption, we have $\alpha(x)\,W_\lambda r\,\beta(y)$, which means that there exists $(U,V) \in F(X\times Y)$ such that $\alpha(x) \le_{FX} F\pi_1(U,V)$ and $F\pi_2(U,V) \le_{FY} \beta(y)$, and also $x'\,r\,y'$ for all $(a,x',y')\in V$.
Let $x\overset{a}{\dashedrightarrow} x'$.
By the first part of the assumption, there exists $b\in\A$ and $y'\in Y$ such that $a\vle_\A b$ and $(b,x',y')\in V$, and by the second part there moreover exists $c\in\A$ such that $b\vle_\A c$ and $y\overset{c}{\dashedrightarrow}y'$.
By the third part of the assumption we also have $x'\,r\,y'$ as required.
We can similarly prove that for every $y\overset{b}{\rightarrow} y'$ there exists $x\overset{a}{\rightarrow} x'$ such that $a\vle_\A b$ and $x'\,r\,y'$.

Second, assume that $r$ is a modal refinement relation.
Now we need to find $(U,V)\in F(X\times Y)$ that satisfies the properties in the first part of the proof.
The following does the trick:
\begin{gather*}
  U \coloneqq \{ (a,x',y') \mid x\overset{a}{\rightarrow}x' \land \exists b\in\A.\; a\vle_\A b \land y\overset{b}{\rightarrow}y' \} \\
  V \coloneqq \{ (a,x',y') \mid x\overset{a}{\dashedrightarrow}x' \land \exists b\in\A.\; a\vle_\A b \land y\overset{b}{\dashedrightarrow} y' \} \qedhere
\end{gather*}

\section*{Proof of Lemma~\ref{lem:metric-streams-conditions}}

\noindent
We first show that $F$ preserves exact squares.
Let $f\cdot u = g\cdot v$ be an exact square.
Let $(a,x)\in FX$ and $(b,y)\in FY$.
It suffices to find $(c,p)\in FP$ such that
\begin{equation*}
  \max(d_\A(a,c),\Delta_X(x,u(p))) + \max(d_\A(c,b),\Delta_Y(v(p),y))
  \le \max(d_\A(a,b), \Delta_Z(f(x),g(y))).
\end{equation*}
If $f(x) \neq g(y)$, then the right hand side is $\infty$, so there is nothing to show.
Otherwise, by exactness, there exists $p\in P$ such that $u(p) = x$ and $v(p) = y$.
In this case, we may put $c \coloneqq a$ and the two sides of the inequality above then both simplify to $d_\CA(a,b)$.

To see that $F$ is cool, let $f\colon X\to Y$ and let $(a,x)\in FX$ and $(b,y)\in FY$.
If $f(x) \neq y$, then $d_{FY}((a,f(x)),(b,y)) = \infty$ nothing is to show.
Otherwise, construct the pair $(b,x)$.
Then, clearly, $d_{FX}((a,x),(b,x)) = d_{FY}((a,y),(b,y))$
Proving the other inequality is completely symmetrical.

Now we show that $\lambda$ is $\V$-subadditive (monotonicity and preservation of $0$ are immediate).
Let $f,g\colon X\to[0,\infty]$ and $t_1,t_2\in FX$, where $t_1 = (a,x)$ and $t_2 = (b,y)$.
If $x\neq y$, then \eqref{eqn:v-subadd} is immediate, as its right hand side is $\infty$.
We therefore assume $x=y$ and put $t_3\coloneqq (c,x)\in FX$ in~\eqref{eqn:v-subadd}, which causes both sides of the inequality to simplify to $f(x) + g(x) + d_\A(a,b)$.

\section*{Proof of Lemma~\ref{lem:distrib-exact}}

\noindent
One straightforwardly verifies that every $FX$ is a hemimetric space, and that each $Ff$ is a nonexpansive map, so $F$ is indeed a functor.
For the proof of preservation of exact squares and coolness, we make use of the following facts, which hold over both the natural numbers and the nonnegative real numbers:
\begin{lem}\label{lem:number-fact}
  Let $a_1,\dots,a_n,b\in M$, and put $a = a_1+\dots+a_n$. Then
  \begin{enumerate}
    \item If $a \le b$, then there exist $b_1,\dots,b_n\in M$ such that $b_1+\dots + b_n = b$ and $a_i\le b_i$ for each $1 \le i \le n$.
    \item If $a \ge b$, then there exist $b_1,\dots,b_n\in M$ such that $b_1+\dots + b_n = b$ and $a_i\ge b_i$ for each $1 \le i \le n$.
    \item In both of the cases above, we have $d_M(a,b) = d_M(a_1,b_1) + \dots + d_M(a_n,b_n)$.
  \end{enumerate}
\end{lem}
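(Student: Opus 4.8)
The three claims are elementary facts about distributing a discrepancy among summands, so the plan is to construct the $b_i$ explicitly in each case and then read off the distance identity from the sign information built into the construction. Throughout I assume $n \ge 1$, which holds in all intended applications (there $n$ is a support size or list length).

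For part (1), the slack $b - a$ is a well-defined element of $M$ (a natural number or a nonnegative real, since $a \le b$), and there is no upper bound to respect. I would simply concentrate all of it in a single coordinate: set $b_1 = a_1 + (b-a)$ and $b_i = a_i$ for $i > 1$. Then $a_i \le b_i$ is immediate, and $\sum_i b_i = a + (b-a) = b$.

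Part (2) is the one place where a naive construction fails: we now want to decrease the total by $\delta \coloneqq a - b \ge 0$ while keeping every $b_i$ in $M$, and subtracting $\delta$ from a single coordinate may push it below $0$. Since the argument must work uniformly over $\Nat$ and $\Real^+$, proportional scaling is also unavailable (it leaves the integers). The clean route is induction on $n$. The base case $n=1$ forces $b_1 = b$, and $a_1 = a \ge b$ gives $b_1 \le a_1$. In the inductive step I set $a' = \sum_{i<n} a_i = a - a_n$ and split on whether $a'$ already dominates $b$: if $a' \ge b$, put $b_n = 0$ and recurse on the first $n-1$ summands with the same target $b$; if $a' < b$, keep $b_i = a_i$ for $i < n$ and put $b_n = b - a'$, which lies in $[0, a_n]$ precisely because $a' < b \le a = a' + a_n$. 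In either branch $\sum_i b_i = b$ and $b_i \le a_i$ throughout. (Equivalently, one may describe this as greedily removing mass from the left until the deficit $\delta$ is exhausted.)

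Part (3) then follows with no further work, using $d_M(x,y) = \max(0, y-x)$ together with the monotonicity carried by the constructions. In case (1) we have $b_i \ge a_i$ and $b \ge a$, so each $d_M(a_i,b_i) = b_i - a_i$ and $d_M(a,b) = b - a$; the identity is then the telescoping $\sum_i (b_i - a_i) = \sum_i b_i - \sum_i a_i = b - a$. In case (2) we have $b_i \le a_i$ and $b \le a$, so every $d_M(a_i,b_i)$ and $d_M(a,b)$ vanish and the identity reads $0 = 0$. The only genuine obstacle is the nonnegativity bookkeeping in part (2); once the $b_i$ are fixed, parts (1) and (3) are immediate.
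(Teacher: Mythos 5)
Your proof is correct and follows essentially the same route as the paper, which merely sketches a greedy strategy (start from $b_i \coloneqq a_i$ and correct upwards or downwards as needed); your construction for part (2) is exactly that greedy correction, spelled out as an induction, and your sign analysis for part (3) fills in the details the paper leaves implicit.
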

\begin{proof}
All of these facts can be easily proven using a greedy strategy, starting with the assignment $b_i \coloneqq a_i$ and then correcting the terms upwards and downwards as needed.
\end{proof}

\noindent
Let \eqref{eqn:diag-pxyz} be a pullback square and let $\mu\in FX$ and $\nu\in FY$.
Define $\sigma\in FZ$ via $\sigma(z) = \min(Ff(\mu)(z), Fg(\nu)(z))$ for each $z\in Z$.
Then $\dtv(Ff(\mu),Fg(\nu)) = \dtv(Ff(\mu),\sigma) + \dtv(\sigma,Ff(\nu))$.
Using Lemma~\ref{lem:number-fact}, we can decrease some of the values of $\mu$ and $\nu$, arriving at bags/measures $\mu'$ and $\nu'$ such that $Ff(\mu') = \sigma = Fg(\nu')$ and $\dtv(Ff(\mu),\sigma) = \dtv(\mu,\mu')$ and $\dtv(\nu',\nu) = \dtv(\sigma,Ff(\nu))$.
By the assumption that we have a pullback square we can now pick an arbitrary coupling $\rho\in FP$ of $\mu'$ and $\nu'$ and the proof is complete.

For coolness, we proceed as follows.
Let $f\colon X\to Y$ be surjective, let $\mu\in FX$ and $\nu'\in FY$.
We process each $z\in Z$ individually, building a bag/measure $\mu'$ that has the same values as $\mu$, but corrected up or down to ensure that $Ff(\mu') = \nu'$ and $\dtv(\mu,\mu') = \dtv(Ff(\mu,\nu'))$
The proof for the other inequality is symmetrical.

\section*{Proof of Lemma~\ref{lem:mlmc-conditions}}

\noindent Coolness was already proven in Example~\ref{expl:running-cool}; the basic approach for preservation of exact square is not too dissimilar.
Let~\eqref{eqn:diag-pxyz} be a pullback square, let $\mu\in FX$, and let $\nu\in FY$.
Put $\epsilon = d_{FZ}(Ff(\mu),Fg(\nu))$.
If $\epsilon = \infty$, there is nothing to show, so assume $\epsilon < \infty$.
As before, we make use of the fact that an optimal coupling $\sigma\in\dfun(\A\times Z\times\A\times Z)$ of $Ff(\mu)$ and $Fg(\nu)$ exists, and because $\epsilon < \infty$ we know that $\sigma(a,z,b,z') = 0$ whenever $z\neq z'$.
We now construct a probability distribution $\rho\in FP$ such that $Fu(\rho) = \mu$ and $d_{FY}(Fv(\rho),\nu) \le \epsilon$, which will finish the proof.
We do so using the following algorithm:
\begin{enumerate}
  \item Initialize $\rho$ with zeroes.
  \item For every $(a,z,b,z)\in\A\times Z\times\A\times Z$ such that $\sigma(a,z,b,z) > 0$, every $x\in f^{-1}[z]$ and every $y\in g^{-1}[z]$, put $\delta = \mu(a,x)\cdot \frac{\nu(b,y)}{Fg(\nu)(b,z)}$ and increase $\rho(a,x,y)$ by $\delta$.
\end{enumerate}

\noindent
One can then verify that $\rho$ indeed has the claimed properties.

\section*{Proof of Lemma~\ref{lem:mlmc-representation-classical}}

\noindent
Let $\mu\in\dfun(\A\times X)$ and $\nu\in\dfun(\A\times Y)$.
Then we have, by expanding definitions,
\begin{equation*}
  \Wc_\lambda r(\mu,\nu) =
  \inf \{ \sum_{(a,x,y)\in\A\times X\times Y} \sigma(a,x,y)\cdot r(x,y) \mid \sigma\in\dfun(\A\times X\times Y), \dfun(\pi_{12})(\sigma) = \mu, \dfun(\pi_{13})(\sigma) = \nu\} 
\end{equation*}
and
\begin{multline*}
  \Kant(\Delta_\A\times r)(\mu,\nu) =
  \inf \{ \sum_{(a,x,b,y)\in \A\times X\times\A\times Y} \rho(a,x,b,y)\cdot \max(\Delta_\A(a,b), r(x,y)) \\ \mid \rho\in\dfun(\A\times X\times\A\times Y), \dfun(\pi_{12})(\rho) = \mu, \dfun(\pi_{34})(\rho) = \nu \}.
\end{multline*}
The latter infimum is taken over more terms, but we can ignore all those $\rho$ that assign positive probability to any tuple $(a,x,b,y)$ with $a\neq b$, as this blows up the entire sum to $\infty$.
Once those terms are ignored, the two infima are clearly equal.

\section*{Proof of Lemma~\ref{lem:mlmc-decompose}}

\noindent
We show the two inequalities:
\begin{itemize}
  \item ``$\le$'':
    We have $(d_\A\boxplus r) \le (d_\A\times\Delta_Y) \cdot (\Delta_\A\times r) \cdot (d_\A\times\Delta_X)$, so the claim follows because $\Kant$ is a lax extension and thus satisfies axioms (L1) and (L3).
  \item ``$\ge$'':
    Let $\mu\in\dfun(\A\times X)$ and $\nu\in\dfun(\A\times Y)$, and let $\rho\in\dfun((\A\times X)\times(\A\times Y))$ be a coupling of $\mu$ and $\nu$.
    We need to find $\sigma\in\dfun(\A\times(X\times Y))$ such that
    \begin{equation}\label{eqn:mlmc-decompose}
      \expect{\rho}(d_\A\boxplus r) \ge
        \Kant(d_\A\times\Delta_X)(\mu,F\pi_1(\sigma)) + 
        \lambda_{X\times Y}(r) +
        \Kant(d_\A\times\Delta_Y)(F\pi_2(\sigma),\nu).
    \end{equation}
    Let $f\colon ((a,x),(b,y)) \mapsto (b,(x,y))$ and put $\sigma \coloneqq Ff(\rho)$.
    Then $\mu$ and $F\pi_1(\sigma)$ have a coupling $Fg(\rho)$, where $g\colon ((a,x),(b,y)) \mapsto ((a,x),(b,x))$.
    Therefore:
    \begin{align*}
      &\Kant(d_\A\times\Delta_X)(\mu,F\pi_1(\sigma)) \\
      &\le \expect{Fg(\rho)}(d_\A\times\Delta_X) \\
      &= \sum_{((a,x),(b,y)\in(\A\times X)\times(\A\times Y))} \rho((a,x),(b,y)) \cdot d_\A(a,b)
    \end{align*}
    We also have $\lambda_{X\times Y}(r) = \sum_{((a,x),(b,y)\in(\A\times X)\times(\A\times Y))} \rho((a,x),(b,y)) \cdot r(x,y)$, and, as $F\pi_2(\sigma) = \nu$, $\Kant(d_\A\times\Delta_Y)(F\pi_2(\sigma),\nu) = 0$.
    Summing these together, we obtain~\eqref{eqn:mlmc-decompose}. \qedhere
\end{itemize}

\section*{Details for Example~\ref{expl:sim-duality}}

\noindent
We have already seen in Lemma~\ref{lem:characterize-simulations} that $W_\Box r(U,V) = \vmaxel$ iff for all $(a,x)\in U$ there exists $(a,y)\in V$ such that $x\,r\,y$.

Expanding the definition of $K_\Lambda$, we have that $K_\Diamond r(U,V) = \vmaxel$ iff for every $f\colon X\to 2$ we have that $U\vDash\Diamond_X(f)$ implies $V\vDash\Diamond_Y(r[f])$.
We now prove that $K_\Diamond r(U,V) = \vmaxel \iff W_\Box r(U,V) = \vmaxel$ by showing the two directions:
\begin{itemize}
  \item If $K_\Diamond r(U,V) = \vmaxel$, let $(a,x)\in U$ and put $f(x) = \vmaxel$ and $f(x') = \vminel$ for $x'\neq x$.
    Then $U \vDash \Diamond_X f$ and thus, by assumption, $V \vDash \Diamond_Y(r[f])$, which precisely means that there exists $(a,y)$ in $\A$ such that $x\,r\,y$.
  \item Conversely, suppose that $W_\Box r(U,V) = \vmaxel$ and let $f\colon X\to 2$ such that $U \vDash \Diamond_X(f)$.
    Then there exists $(a,x)\in U$ such that $x\vDash f$.
    By assumption we can find some $(a,y)\in V$ such that $x\,r\,y$.
    This implies that $y\vDash r[f]$ and therefore $V \vDash\Diamond_Y r[f]$.
\end{itemize}

The proofs for $K_\Box$ and $K_{\{\Box,\Diamond\}}$ are very similar.

\section*{Details for Example~\ref{expl:stream-duality}}

\noindent
Let $r\colon\frel{X}{Y}$.
We already know by Lemma~\ref{lem:characterize-metric-streams} that $W_\lambda r = d_\A\boxplus r$.
We now show that $K_\Lambda = d_\A\boxplus r$ by proving the two inequalities one by one:
\begin{itemize}
  \item Let $(a,x)\in FX$ and $(b,y)\in FY$, and define $f\colon X\to[0,\infty]$ via $f(x) \coloneqq 0$ and $f(x') \coloneqq \infty$ for $x' \neq x$.
    Then it is easily checked that $r[f](y) = r(x,y)$ and therefore
    \begin{align*}
      d_\A(a,b) + r(x,y)
      &= (d_\A(a,b) + r[f](y)) \ominus (d_\A(a,a) + f(x)) &&\by{$d_\A(a,a) = f(x) = 0$}\\
      &= \Diamonda{a}(r[f])(b,y) \ominus \Diamonda{a}(f)(a,x) &&\by{definition of $\Diamonda{a}$}\\
      &\le K_\Lambda r((a,x),(b,y)) &&\by{definition of $K_\Lambda$}
    \end{align*}
  \item Conversely, for every $c\in\A$ and $f\colon X\to[0,\infty]$,
    \begin{align*}
      \Diamonda{c}(r[f])(b,y) \ominus \Diamonda{c}(f)(a,x)
      &= (d_\A(b,c) + r[f](y)) \ominus (d_\A(a,c) + f(x))
        &&\by{definition of $\Diamonda{c}$}\\
      &\le (d_\A(b,c) \ominus d_\A(a,c)) + (r[f](y) \ominus f(x)) &&\by{fact about $\ominus$}\\
      &\le d_\A(a,b) + r(x,y). &&\by{triangle inequality and definition of $r[f]$}
    \end{align*}
\end{itemize}

\end{document}